\newcolumntype{x}[1]{>{\centering\arraybackslash}p{#1}}
\theoremstyle{plain} { \theorembodyfont{\rmfamily}
\newtheorem{definition}{Definition}[section]

}
\newtheorem{proposition}{Proposition}[section]
\newtheorem{lemma}{Lemma}[section]
\newtheorem{corollary}{Corollary}[section]
\newtheorem{theorem}{Theorem}[section]
\def\qed{\quad Q.E.D.}
\newenvironment{proof}{\vspace{1ex}\noindent{\bf Proof}\hspace{0.5em}}
    {\qed\vspace{1ex}}
\newcommand{\expect}{\mathbb{E}}
\begin{document}

\title{Equilibrium Policy on Dividend and Capital Injection under Time-inconsistent Preferences}
\author{Sang Hu\thanks{School of Data Science, The Chinese University of Hong Kong, Shenzhen, Guangdong Province, China 518172.
Email: \texttt{husang@cuhk.edu.cn}. 
}
\and Zihan Zhou\thanks{School of Data Science, The Chinese University of Hong Kong, Shenzhen, Guangdong Province, China 518172.
Email: \texttt{zihanzhou@link.cuhk.edu.cn}.}
}

\maketitle

\abstract{ This paper studies the dividend and capital injection problem under a diffusion risk model with general discount functions. A proportional cost is imposed when injecting capitals. For exponential discounting as time-consistent benchmark, we obtain the closed-form solutions and show that the optimal strategies are of threshold type. Under general discount function which leads to time-inconsistency, we adopt the definition of weak equilibrium and obtain the extended HJB equation system. An explicit solution is derived under pseudo-exponential discounting where three cases of the dividend and capital injection thresholds are obtained. Numerical examples show that large capital injection cost may lead to no capital injection at all, while larger difference in group discount rate leads to higher equilibrium value function. }\\

{\bf Keywords}: dividend optimization; capital injection; non-exponential discounting; equilibrium strategy; extended HJB

\section{Introduction}

Dividend optimization has been a major topic in finance and actuarial science for a long time. It aims to find the optimal strategy for an insurance company to maximize the expected cumulative discounted dividends. \cite{de1957impostazione} was the first one to propose the dividend optimization problem, considering the dividend strategy in a simple discrete random walk model. Since then, the problem has been extensively studied under various models, such as the classical Cram\'{e}r-Lundberg (C-L) model, diffusion model, and regime-switching model; see, e.g., \cite{gerber1969entscheidungskriterien}, \cite{jeanblanc1995optimization}, \cite{asmussen1997controlled} for more details.

In addition to dividend policy, risk management is also of great interest in actuarial science. Capital injection, also known as issuance of equity, is one of the commonly used tool to prevent the company from bankruptcy; see, e.g., \cite{dickson2004some}. \cite{sethi2002optimal} then considered dividend optimization with capital injections under the diffusion risk model without taking the possibility of ruin into consideration and showed the optimal strategy is barrier type.
\cite{kulenko2008optimal} addressed the same problem under the C-L model where bankruptcy is not allowed either. \cite{zhou2012optimal} considered the optimal reinsurance and dividend strategy under the diffusion model, where they also used the capital injection to avoid bankruptcy.
Following these works, \cite{lokka2008optimal} reconsidered the problem with ruin time included and showed that capital injection is not always compulsory. Some further studies include \cite{he2008optimal} with additional proportional reinsurance strategy, \cite{avanzi2011optimal} under a dual model, and \cite{zhu2016optimal} under a more general diffusion model with nonconstant drift and volatility terms.

This paper follows \cite{lokka2008optimal} to consider total expected discounted dividends with penalized capital injections until ruin time, but under time-inconsistent preferences. The capital injection strategy in most of the literatures takes the form of singular control, that is, the capital injection strategy is a non-decreasing c\`{a}gl\`{a}d process. Therefore, capital injection in the presence of discounting is never optimal until the ruin time. Then the problem degenerates to determine the optimal dividend strategy only, where the injection process is obtained automatically.
In this paper, we assume the capital is injected at a restricted rate and therefore, the capital injection strategy may no longer be a c\`{a}gl\`{a}d process.
In other words, it is not feasible to inject capital with arbitrary size at the ruin time to prevent bankruptcy because of the restricted injection rate. Therefore, there still exists a positive probability that the insurance company will go bankrupt even with the capital injections, which is more reasonable and consistent with reality.

Whereas previous literatures studied the dividend problem under classical exponential discounting, some empirical evidence show that human's dynamic decision is more likely to be made under non-exponential discounting; see, e.g., \cite{laibson1998life} and references therein.
Non-exponential discounting may lead to the time-inconsistent decision problem, that is, the optimal strategy at time $s$ may no longer be optimal at some latter time $t$. Therefore, standard dynamic programming and classical Hamilton-Jacobi-Bellman (HJB) equation cannot be directly used to analyze the stochastic control problem because Bellman's principle of optimality does not hold, which is the major technical difficulty faced by the time-inconsistent problem.

\cite{strotz1955myopia} first noticed the time-inconsistent problem and put forth the precommitted strategy and equilibrium strategy in face of the inconsistency.
Equilibrium strategy in continuous time was first introduced in \cite{ekeland2006being} and \cite{ekeland2008investment}. Following their definition, \cite{bjork2010general} derived an extended HJB system, both in discrete and continuous time, for solving equilibrium value function which is the objective value with respect to the equilibrium strategy.
Thereafter, plenty of studies worked on various types of time-inconsistent problems, e.g., \cite{bjork2014mean} in mean-variance portfolio optimization problem, \cite{hu2012} in time-inconsistent LQ stochastic control, \cite{heweightedCPT} in optimal stopping under cumulative prospect theory, etc.

In the time-inconsistent dividend problems, \cite{chen2014optimal} studied both naive and sophisticated strategy with quasi-hyperbolic discounting under diffusion model and \cite{chen2017optimal} under C-L model. \cite{zhao2014dividend} studied the general time-inconsistent dividend problem under the diffusion risk model and derived the equilibrium dividend strategy from the extended HJB which is a threshold strategy for two special discount functions. \cite{li2016equilibrium} considered the time-inconsistent dividend problem under a dual model. In particular, \cite{chen2016optimal} studied the dividend and capital injection problem under the dual risk model in the presence of a time-inconsistent preference, where a quasi-hyperbolic discount function is used to describe the time preference.

To the best of our knowledge, this work is the first to consider equilibrium dividend and capital injection policy under time-inconsistent preferences. We adopt the weak equilibrium definition of \cite{bjork2010general} to seek equilibrium strategy under a diffusion risk model with general discounting.
An extended HJB equation system with rigorous proof of verification theorem are provided for the equilibrium value function and the solution under a special discount function, pseudo-exponential discount function, is derived.
Not that for the time-consistent counterpart, the concavity of the equilibrium value function is directly obtained from the optimality of the value function; see \citet[Proposition 1.1]{jgaard1999controlling}. However, it is not the case for the equilibrium value function.
With delicate analysis, we prove that the equilibrium value function in our problem is indeed concave, which implies that the equilibrium dividend and capital injection strategies are of threshold form.

The equilibrium value function is solved from the extended HJB equation together with the thresholds for dividend and capital injection. The equilibrium strategy can be classified into three cases. First, the capital is injected at the maximal rate once the surplus is below the threshold $x_1$ and the dividend is paid at the maximal rate once the surplus exceeds the threshold $x_2$, where $x_2 \ge x_1 > 0$. Second, the dividend is paid at the maximal rate once the surplus exceeds the threshold $x_2 > 0$ but the capital is never injected. Third, the dividend is always paid at the maximal rate regardless of the surplus state.
Note that our capital injection policy is different from the strategy in \cite{chen2016optimal} in which the capital injection is a non-decreasing c\`{a}gl\`{a}d process. Because ruin is not allowed in \cite{chen2016optimal}, the equilibrium strategy was always to inject capital at ruin time and pay dividends at the barrier.
In our setting, however, it is possible that the equilibrium strategy does not require capital injection even if the insurance company is at ruin time.

Furthermore, we find that capital injection is never a choice if its cost is too high. However, even for a reasonable injection cost, one cannot inject capital arbitrarily at the ruin time because of the restricted injection rate, which is different from the singular control setting. Our equilibrium capital injection strategy is a threshold type that injects capital when the surplus level is relatively low and stops injection once the surplus grows above the threshold. There still exists a positive probability that the insurance company will go bankrupt even with the capital injections. In an extreme case when there is no cost for capital injection, the threshold for capital injection coincides with that for dividends and the insurance company is either injecting capital or paying dividends.

Numerical analysis also shows that when the maximal dividend paying rate is restricted to be relatively low, it is better for the shareholders to always receive the dividend regardless of the bankruptcy risk of the insurance company. If the maximal dividend paying rate is not too restrictive but the proportional capital injection cost is too high, then the shareholders under the equilibrium strategy never inject capital even if the surplus of the insurance company is very low. Both the dividend paying threshold $x_1$ and capital injecting threshold $x_2$ are increasing with respect to the maximal dividend rate $\bar l$; on the other hand, $x_2$ is increasing in the capital injection cost $\phi$ and $x_1$ is decreasing in $\phi$. It is also shown that in a group with larger difference in the shareholders' discount rate, the equilibrium value function is higher than that in a group with smaller difference.

The rest of the paper is organized as follows. In Section \ref{se:Model}, we introduce the basic formulation of the dividend and capital injection problem under general discounting. We first study the time-consistent counterpart and then derive the extended HJB equation system with the verification theorem under general discounting and the equilibrium dividend and capital injection strategy in Section \ref{sec:time-inconsistent}. The explicit solution to the extended HJB equation under the pseudo-exponential discounting is thoroughly analyzed in Section \ref{se:pseudo} with numerical examples presented in \ref{se:numericalexample} to illustrate the results. Section \ref{sec:conclusion} concludes. The proofs are included in Appendix \ref{proof}.

\section{Model}\label{se:Model}

\subsection{Surplus process}

Denote by $X_t$ the surplus level of the insurance company at time $t$. Suppose that the surplus process $\{X_t\}_{t \ge 0}$ follows the dynamics of drifted Brownian motion:
\begin{align*}
d X_t = \mu dt + \sigma dW_t, \quad X_s = x,
\end{align*}
where $\mu, \sigma > 0$ and $\{W_t\}_{t \ge 0}$ is the standard Brownian motion.

Denote by $l_t$ the dividend paying rate at time $t$. Suppose that the dividend paying rate is chosen from the interval $[0,\bar l]$, where $\bar l \in \mathbb{R}_+$.
Denote by $r_t$ the capital injection rate at time $t$. Suppose that the capital injection rate is chosen from the interval $[0,\bar r]$, where $\bar r \mathbb{R}_+$.
Let ${\bf u} = \{l_t,r_t\}_{t \ge 0}$ be the control process.
Then the controlled surplus process is
\begin{equation}\label{eq:surplus_process}
d X^{\bf u}_t = (\mu - l_t + r_t )dt+ \sigma dW_t, \quad X^{\bf u}_s = x.
\end{equation}

The shareholder's objective is to maximize the total expected discounted net cash inflows, which is the dividends minus the penalized capital injections, until the ruin time:
\begin{align}\label{eq:objective}
\sup_{{\bf u} \in \mathcal{A}} J(s,x,{\bf u}) := \expect_{s,x}\left[\int_s^{\tau_{s,x}^{\bf u}} \delta(s,t) \left( l_t - \phi r_t \right) dt \right],
\end{align}
where the notation $\expect_{s,x}$ stands for the expectation taken at time $s$ with state $X^{\bf u}_s = x$. In \eqref{eq:objective}, $\phi > 1$, implying that a proportional cost occurs when injecting capital, and hence, $\phi$ stands for the level of how costly the capital injection is.
$\delta(s,\cdot): [s, \infty) \mapsto (0, 1]$ standing for the discounting function used at time $s$, which is strictly decreasing and satisfies $\delta(s,s) = 1$.
$\tau_{s,x}^{\bf u}$ denotes the ruin time, which is the first time the surplus level hits zero, i.e.,
\begin{align*}
\tau_{s,x}^{\bf u} = \inf \{ t \ge s: X^{\bf u}_t \le 0, X^{\bf u}_s = x \}.
\end{align*}
$\mathcal{A}$ is the set of admissible controls.
To be precise, we give the definition of the admissible controls in the following.
\begin{definition}\label{def:admissible_control}
${\bf u} = \{l_t,r_t\}$ is admissible if
\begin{itemize}
\item[(i)] For any $t \ge 0$, $l_t \in [0,\bar l]$ and $r_t \in [0,\bar r]$;
\item[(ii)] $\{l_t,r_t\}$ is ${\cal F}_t$-measurable;
\item[(iii)] the stochastic differential equation \eqref{eq:surplus_process} has a unique solution when applying ${\bf u}$;
\item[(iv)] $\mathbb{E}_{s, x}\left[ \int_s^{\tau_{s, x}^{\bf  u}} \delta(s,t) \left|l_t - \phi r_t \right| dt  \right] < \infty$.
\end{itemize}
\end{definition}

\subsection{Time-inconsistency}
The dividend and capital injection optimization problem \eqref{eq:objective} is time-inconsistent in general, due to non-exponential discount function $\delta(s,\cdot)$. In other words, the optimal policy that is obtained at time $s$ is no longer optimal at later time $t > s$.
In this case, Bellman's principle of optimality does not hold, and the conventional dynamic programming is not directly applicable.

\cite{strotz1955myopia} studies the time inconsistent problem within a game theoretic framework by using Nash equilibrium points.
\cite{ekeland2006being} and \cite{ekeland2008investment} provide the definition of the equilibrium concept in continuous time.
\cite{bjork2010general} considered the time-inconsistent control problem in a general Markov framework. Accordingly, there are three approaches to treat time-inconsistency in the dividend and capital injection problem \eqref{eq:objective}, based on whether the shareholder is aware of the time-inconsistency and whether he can commit to the plan.

First, if the shareholder is not aware of time-inconsistency, he then solves the optimization problem \eqref{eq:objective} at each time point and implements the current optimal strategy at current time only. For illustration, the optimal strategy at time $s_1$ with surplus $x_1$ is ${\bf u}_1 := \sup_{{\bf u} \in \mathcal{A}} J(s_1, x_1,{\bf u})$, which is implemented at time $s_1$. Later, at time $s_2$ with surplus $x_2$, solving \eqref{eq:objective} leads to the optimal strategy ${\bf u}_2 := \sup_{{\bf u} \in \mathcal{A}} J(s_2, x_2,{\bf u})$. Then ${\bf u}_2$ is implemented at time $s_2$ instead of the original ${\bf u}_1$. This is called the naive strategy and it is straightforward to see that naive strategy is not time-consistent as it always deviates from the previous plan.

Second, if the shareholder is aware of time-inconsistency and is able to commit to his pre-determined plan, he then solves the optimization problem at initial time $s$, denoted as ${\bf u}_s:= \sup_{{\bf u} \in \mathcal{A}} J_c(s, x,{\bf u})$, and implements this strategy ${\bf u}_s$ at all the later times $t \ge s$, regardless of its optimality later. Such a strategy is called precommitted and it requires self-control or some commitment device such that the shareholder is able to stick to his plan even faced with a better choice later.

Third, if the shareholder is aware of time-inconsistency but does not have self-control or certain commitment device, he is not able to commit to any pre-determined plan and need to seek a consistent plan such that he and his future selves are all willing to follow. In such a sense, one formulates a time-inconsistent problem as a non-cooperative game where himself at each time is regarded as a player. It aims to find the subgame perfect Nash equilibrium and thus, it is called equilibrium strategy.\footnote{This is a situation in which no players will deviate from their current strategy. Equivalently, given the strategies of all the players to be fixed, there is no benefit for any player to change his strategy.} In this work, the equilibrium strategy is the focus afterwards.
We give the formal definition of the equilibrium strategy in the continuous-time setting in the next section.

\subsection{Equilibrium}

Whereas the definition of equilibrium in discrete time is straightforward from game-theoretic view, it is not easily defined in continuous time because an instant change of control variable at current time $t$ only does not influence the process at time $t+$ or thereafter. Instead, the first-order criterion is adopted in the definition of equilibrium in continuous time such that the rate of change in the objective value after perturbation is non-negative. In other words, if a control is changed in arbitrarily small time interval but the corresponding value is not improved, then this control is an equilibrium.
The following presents the formal definition of equilibrium.
\begin{definition}\label{def:eq}
Given an admissible policy ${\bf u} = \{l_t, r_t\}_{t \ge 0}$, define its counterpart with $\epsilon$-perturbation:
\begin{align*}
u^{\epsilon,l,r}_t =
\begin{cases}
(l, r), & t \in [s, s + \epsilon) \\
(l_t, r_t ), & t \in [s + \epsilon, \infty),
\end{cases}
\end{align*}
where $l \in [0,\bar l]$ and $r \in [0,\bar r]$.
Then ${\bf u}$ is an equilibrium if
\begin{align*}
\liminf_{\epsilon \downarrow 0} \frac{J(s,x, {\bf u}) - J(s,x,{\bf u}^{\epsilon,l,r})}{\epsilon} \ge 0,
\end{align*}
for any $s \ge 0, x > 0$, $l \in [0,\bar l]$ and $r \in [0,\bar r]$.
\end{definition}
The $\epsilon$-perturbation policy ${\bf u}^{\epsilon,l,r}$ is different from ${\bf u}$ in a small time interval $[s,s+\epsilon]$ only during which it takes constant control variable $l$ and $r$. Then ${\bf u}$ is an equilibrium if any such perturbation does not improve the objective value.

\section{Extended HJB}\label{sec:time-inconsistent}

\subsection{Time-consistent benchmark}

We first consider the time-consistent counterpart before formally investigating the general time-inconsistent problem. Suppose constant discount rate $\rho > 0$. In this case, the objective in \eqref{eq:objective} becomes
\begin{equation}\label{eq:classical_objective}
\sup_{{\bf u} \in {\cal A}} J_c(s,x,{\bf u}) := \expect_{s,x}\left[\int_s^{\tau_{s, x}^{\bf u}} e^{-\rho (t-s)} \left(l_t - \phi r_t \right) dt \right].
\end{equation}
Since the problem is time-homogeneous, we omit the time argument and denote $V_c(x) := V_c(0,x) = \sup_{\bf u \in {\cal A}} J_c(0,x,{\bf u})$ to be the corresponding value function.
We first show some properties of the value function $V_c(x)$, which will be used later.
\begin{proposition}\label{prop:Vc_property}
Suppose that $V_c(x)$ is the value function of problem \eqref{eq:classical_objective}. Then
\begin{enumerate}
\item[(i)] $V_c(x)$ is bounded;
\item[(ii)] $V_c(x)$ is concave;
\item[(iii)] the limit of $V_c(x)$ is given by
\begin{equation}\label{eq:Vc_limit}
\lim_{x\to\infty} V_c(x) = \frac{\bar l}{\rho}.
\end{equation}
\end{enumerate}
\end{proposition}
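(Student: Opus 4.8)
The plan is to treat the three claims separately, handling boundedness and the limit by elementary comparison arguments and reserving the real work for concavity. For (i), I would observe that the injection term only penalizes the objective: since $\phi>1$ and $r_t\ge 0$, the running reward obeys $l_t-\phi r_t\le l_t\le \bar l$ pointwise, so multiplying by the positive discount factor and integrating gives, for every admissible ${\bf u}$,
\begin{equation*}
J_c(s,x,{\bf u})\le \expect_{s,x}\Big[\int_s^{\tau_{s,x}^{\bf u}} e^{-\rho(t-s)}\bar l\,dt\Big]\le \bar l\int_0^\infty e^{-\rho u}\,du=\frac{\bar l}{\rho}.
\end{equation*}
The null control $l\equiv 0,\ r\equiv 0$ is admissible and yields $J_c=0$, so $V_c(x)\ge 0$. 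Hence $0\le V_c(x)\le \bar l/\rho$, which proves (i) and already exhibits $\bar l/\rho$ as the natural ceiling needed for (iii).

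For (iii), the upper bound from (i) gives $\limsup_{x\to\infty}V_c(x)\le \bar l/\rho$, so it suffices to produce a matching lower bound. I would test the stationary strategy $l_t\equiv \bar l$, $r_t\equiv 0$, under which the surplus is the arithmetic Brownian motion $dX_t=(\mu-\bar l)\,dt+\sigma\,dW_t$ started at $x$, with ruin time $\tau$. A direct computation gives
\begin{equation*}
J_c(0,x,{\bf u})=\frac{\bar l}{\rho}\,\expect_{0,x}\big[1-e^{-\rho\tau}\big]=\frac{\bar l}{\rho}\big(1-\expect_{0,x}[e^{-\rho\tau}]\big).
\end{equation*}
The Laplace transform of the time at which a drifted Brownian motion hits $0$ is $\expect_{0,x}[e^{-\rho\tau}]=e^{-\theta x}$, where $\theta>0$ is the positive root of $\tfrac12\sigma^2\theta^2-(\mu-\bar l)\theta-\rho=0$; equivalently $e^{-\theta x}$ is the bounded solution of the associated ODE $\tfrac12\sigma^2 f''+(\mu-\bar l)f'-\rho f=0$ with $f(0)=1$. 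Since $\theta>0$, we get $V_c(x)\ge \tfrac{\bar l}{\rho}(1-e^{-\theta x})\to \bar l/\rho$ as $x\to\infty$, and squeezing against the upper bound yields \eqref{eq:Vc_limit}.

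Concavity (ii) is the substantive part. The natural attempt is a mixing argument exploiting that the dynamics \eqref{eq:surplus_process} are affine in the control and the admissible sets $[0,\bar l]$, $[0,\bar r]$ are convex: given near-optimal controls ${\bf u}^1,{\bf u}^2$ for starting points $x_1,x_2$ driven by the same $W$, the convex combination ${\bf u}^\lambda=\lambda{\bf u}^1+(1-\lambda){\bf u}^2$ is admissible, the running reward is the same convex combination, and, because $\mu$ and $\sigma$ are constant, the controlled surplus obeys the pathwise identity $X^{{\bf u}^\lambda}_t=\lambda X^{{\bf u}^1}_t+(1-\lambda)X^{{\bf u}^2}_t$ started at $x_\lambda:=\lambda x_1+(1-\lambda)x_2$. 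I would first use this construction (replacing a component by the null control after its own ruin) to record the easy monotonicity $V_c(x_2)\ge V_c(x_1)$ for $x_2\ge x_1$. The obstacle is the ruin time: the identity only gives $\tau^{{\bf u}^\lambda}\ge \tau^{{\bf u}^1}\wedge\tau^{{\bf u}^2}$, and once the first component ruins the surviving contribution enters the mixed reward scaled by $(1-\lambda)$ while the genuine surplus must keep the full drift $\mu$ and volatility $\sigma$; the scaled drift/volatility of a lone survivor cannot be reproduced by an admissible control, so the mixed strategy need not recover $\lambda J_c(\cdot,{\bf u}^1)+(1-\lambda)J_c(\cdot,{\bf u}^2)$. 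Because the running reward $l-\phi r$ changes sign (dividends positive, injections negative), truncating at $\tau^{{\bf u}^1}\wedge\tau^{{\bf u}^2}$ does not pin down the sign of the discarded tail either. To close this gap I would route the proof through the dynamic programming principle, establishing the flow
\begin{equation*}
V_c(x)=\sup_{{\bf u}\in\mathcal{A}}\expect_{0,x}\Big[\int_0^{h\wedge\tau}e^{-\rho t}\big(l_t-\phi r_t\big)\,dt+e^{-\rho h}V_c\big(X^{\bf u}_h\big)\ind_{\{h<\tau\}}\Big]
\end{equation*}
(with $\tau$ the ruin time) and showing that the one-step operator it defines maps concave functions to concave functions, which is plausible because the running reward is linear in $(x,l,r)$, the generator is linear, and the pointwise maximization over the convex set $[0,\bar l]\times[0,\bar r]$ preserves concavity; concavity of $V_c$ then propagates along a monotone iteration. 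Alternatively, since the paper later constructs an explicit solution of the HJB, one may read concavity off a concave $C^2$ candidate identified with $V_c$ by the verification argument, which is the sense in which it follows ``directly from optimality.''

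I expect the concavity step to be the main obstacle. Boundedness reduces to a reward bound and the limit to a single Laplace-transform identity, both routine; concavity instead requires either a delicate coupling that respects the ruin time or the dynamic-programming/operator route above together with the regularity ($C^2$ up to the boundary, smooth fit at $x=0$) needed to justify the localized verification.
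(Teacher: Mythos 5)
Parts (i) and (iii) of your proposal are correct and essentially coincide with the paper's own argument: the same upper bound $\bar l/\rho$ from $l_t-\phi r_t\le \bar l$, and the same test strategy $\tilde{\bf u}=(\bar l,0)$ for the limit. The only difference is cosmetic: the paper sends $x\to\infty$ via the bounded convergence theorem (using $\tau^{\tilde{\bf u}}_{0,x}\to\infty$), while you compute $\expect_{0,x}[e^{-\rho\tau}]=e^{-\theta x}$ explicitly; your route is slightly more quantitative but equivalent. (Your lower bound $V_c\ge 0$ via the null control versus the paper's $-\phi\bar r/\rho$ is immaterial.)

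The genuine gap is in (ii): you never actually prove concavity. It is worth knowing that the paper's proof \emph{is} precisely the mixing argument you set aside: it takes optimal ${\bf u}^{(1)},{\bf u}^{(2)}$, forms ${\bf u}^{\xi}=\lambda{\bf u}^{(1)}+(1-\lambda){\bf u}^{(2)}$, uses the pathwise identity $X^{{\bf u}^{\xi}}=\lambda X^{{\bf u}^{(1)}}+(1-\lambda)X^{{\bf u}^{(2)}}$, asserts $\tau^{{\bf u}^{\xi}}_{0,\lambda x_1+(1-\lambda)x_2}=\tau^{{\bf u}^{(1)}}_{0,x_1}\vee\tau^{{\bf u}^{(2)}}_{0,x_2}$, and splits the reward. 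Your objection to that step is substantive: if the component processes and controls are stopped at their own ruin times, the pathwise identity fails after the first ruin (a frozen component has neither drift $\mu$ nor volatility $\sigma$), while if they are allowed to run on, the ruin-time identity fails; so the paper's proof of (ii) is itself incomplete as written, and you were right not to accept it. But your replacement is only a plausibility claim, and the difficulty you diagnosed reappears in it untouched. For a \emph{fixed} admissible control, the map $x\mapsto\expect_{0,x}\bigl[e^{-\rho h}V(X^{\bf u}_h)\ind_{\{h<\tau\}}\bigr]$ is not obviously concave: the killing at $\tau$ is exactly what breaks the direct coupling, and preservation of concavity by semigroups with an absorbing boundary is a genuine theorem requiring proof, not a consequence of linearity of the generator; moreover, the supremum over controls does not preserve concavity pointwise without an additional argument, and you give no justification that your ``monotone iteration'' converges to $V_c$. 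Your fallback --- reading concavity off the explicit HJB solution \eqref{eq:Vc_analytic} via the verification theorem --- is circular in the paper's logical order: the determination of $x_r\le x_l$ in Proposition \ref{thm:xrxl} and the monotonicity/bounded-derivative statement in Corollary \ref{corollary:Vc_bound} both invoke the concavity asserted in Proposition \ref{prop:Vc_property}. To make that route work you would have to establish concavity of the candidate directly from the closed-form coefficients (as the paper does in the time-inconsistent case in Theorem \ref{thm:V_concave}) before verifying, which you do not do. So (i) and (iii) stand, but (ii) remains unproven in your proposal --- and closing it requires real work, either repairing the coupling at the ruin boundary or a construct-then-verify argument with concavity checked on the explicit formulas.
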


Combining with the fact in Proposition \ref{prop:Vc_property} that the value function $V_c$ is concave with boundary conditions $V_c(0) = 0$ and \eqref{eq:Vc_limit}, it is straightforward to obtain the following corollary.
\begin{corollary}\label{corollary:Vc_bound}
Assume that $V_c$ is continuously differentiable. Then $V_c$ is increasing and $V_c^\prime$ is bounded.
\end{corollary}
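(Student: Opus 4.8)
The plan is to extract both conclusions from the three properties in Proposition~\ref{prop:Vc_property} together with the assumed $C^1$ regularity, exploiting the fact that the derivative of a concave function is monotone. The key observation is that concavity (part (ii)) forces $V_c'$ to be non-increasing on $[0,\infty)$, so that all the relevant information about $V_c'$ is controlled by its behaviour at the two ends of the domain.

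First I would establish monotonicity. Suppose, for contradiction, that $V_c'(x_0) < 0$ at some $x_0 > 0$. Since $V_c'$ is non-increasing, $V_c'(x) \le V_c'(x_0) < 0$ for every $x \ge x_0$, and integrating yields $V_c(x) \le V_c(x_0) + V_c'(x_0)(x - x_0)$, whose right-hand side tends to $-\infty$ as $x \to \infty$. This contradicts the finite, strictly positive limit $\bar l/\rho$ in part (iii) (and also the boundedness in part (i)). Hence $V_c'(x) \ge 0$ for all $x \ge 0$, which is precisely the assertion that $V_c$ is increasing.

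Next I would bound the derivative. Because $V_c'$ is non-increasing, its supremum over $[0,\infty)$ is approached as $x \downarrow 0$, so that $0 \le V_c'(x) \le V_c'(0)$ for all $x \ge 0$. The assumed continuous differentiability of $V_c$ on $[0,\infty)$ guarantees that $V_c'(0)$ is finite; combining this with the lower bound from the previous step gives $0 \le V_c'(x) \le V_c'(0) < \infty$, so $V_c'$ is bounded.

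I expect the only delicate point to be the sign argument in the monotonicity step: one must take care to draw the contradiction from the finite limit in part (iii) rather than from boundedness alone, and to handle the endpoint $x = 0$ via the $C^1$ hypothesis, since it is precisely this hypothesis that pins down $V_c'(0)$ as finite and thereby converts the monotonicity of $V_c'$ into a genuine uniform bound. Everything else follows routinely from the monotone-derivative characterization of concave functions.
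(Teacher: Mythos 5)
Your proof is correct and takes essentially the same route the paper intends: the paper gives no separate proof of this corollary, asserting it is ``straightforward'' from the concavity in Proposition~\ref{prop:Vc_property} together with $V_c(0)=0$ and the limit \eqref{eq:Vc_limit}. Your argument --- ruling out $V_c'(x_0)<0$ because a concave function with a somewhere-negative derivative tends to $-\infty$, contradicting the positive limit $\bar l/\rho$, and then bounding $0 \le V_c'(x) \le V_c'(0)$ via the non-increasing derivative of a concave $C^1$ function --- is precisely the standard filling-in of that claim, including the correct observation that it is the $C^1$ hypothesis at the endpoint $x=0$ that makes $V_c'(0)$ finite and turns monotonicity of $V_c'$ into a uniform bound.
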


Assume that $V_c(x)$ is twice-continuously differentiable. The standard dynamic programming approach can be applied to derive the following HJB equation:
\begin{equation}\label{eq:Vc_HJB}
\rho V_c(x) = \sup_{l\in[0,\bar l], r\in[0,\bar r]} \left\{\left(l - \phi r \right) + \left(\mu - l + r\right) V^\prime_c(x) + \frac{1}{2}\sigma^2 V^{\prime\prime}_c(x)\right\},
\end{equation}
with boundary condition $V_c(0) = 0$ and \eqref{eq:Vc_limit}.
Combined with the verification theorem below, we assert that the solution to HJB equation \eqref{eq:Vc_HJB} is indeed the value function of problem \eqref{eq:classical_objective}.
\begin{theorem}\label{thm:verification_exp}
Assume there exists twice-continuously differentiable function $V_c$ that solves \eqref{eq:Vc_HJB} with boundary condition $V_c(0) = 0$ and \eqref{eq:Vc_limit}. Suppose $\vert V_c \vert$ and $\vert V^\prime_c \vert$ are bounded. Then $V_c$ is the value function of problem \eqref{eq:classical_objective}.
\end{theorem}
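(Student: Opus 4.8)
The plan is to establish the two inequalities $V_c(x) \ge \sup_{\bf u} J_c(s,x,{\bf u})$ and $V_c(x) \le \sup_{\bf u} J_c(s,x,{\bf u})$ separately; by the time-homogeneity of exponential discounting the value at any starting time $s$ equals $V_c(x)$, so it suffices to compare $V_c(x)$ with $J_c(s,x,{\bf u})$ for general $s$. For the upper bound, I would fix an arbitrary admissible ${\bf u}$ and apply It\^o's formula (legitimate since $V_c \in C^2$) to the discounted process $t \mapsto e^{-\rho(t-s)} V_c(X_t^{\bf u})$ on $[s,\tau_{s,x}^{\bf u} \wedge T]$ for a finite horizon $T$. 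The drift equals $e^{-\rho(t-s)}\big[-\rho V_c + (\mu - l_t + r_t)V_c' + \tfrac12\sigma^2 V_c''\big](X_t^{\bf u})$, and since $V_c$ solves \eqref{eq:Vc_HJB}, this bracket is bounded above by $-(l_t - \phi r_t)$ for every admissible choice of $(l_t,r_t)$, so the drift is dominated by $-e^{-\rho(t-s)}(l_t - \phi r_t)$.

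The diffusion term is $e^{-\rho(t-s)}\sigma V_c'(X_t^{\bf u})\,dW_t$; because $|V_c'|$ is bounded by assumption and $T$ is finite, its integrand is square-integrable and the stochastic integral up to $\tau \wedge T$ is a genuine martingale with zero expectation. Taking expectations and rearranging then gives
\begin{equation*}
V_c(x) \ge \expect_{s,x}\!\left[\int_s^{\tau\wedge T} e^{-\rho(t-s)}(l_t - \phi r_t)\,dt\right] + \expect_{s,x}\!\left[e^{-\rho(\tau\wedge T - s)} V_c(X^{\bf u}_{\tau\wedge T})\right].
\end{equation*}
The decisive step is to send $T \to \infty$. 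The terminal term vanishes in the limit: on $\{\tau \le T\}$ it equals $e^{-\rho(\tau-s)}V_c(0) = 0$ by the boundary condition $V_c(0)=0$, while on $\{\tau > T\}$ it is bounded by $e^{-\rho(T-s)}\|V_c\|_\infty \to 0$ since $V_c$ is bounded. The integral term converges to $J_c(s,x,{\bf u})$ by dominated convergence, the dominating function being supplied precisely by admissibility condition (iv) in Definition \ref{def:admissible_control}. This yields $V_c(x) \ge J_c(s,x,{\bf u})$ for every admissible ${\bf u}$, hence $V_c(x) \ge \sup_{\bf u} J_c(s,x,{\bf u})$.

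For the reverse inequality I would exhibit a control attaining the supremum. Since the expression maximized in \eqref{eq:Vc_HJB} is affine in $(l,r)$, the pointwise maximizer is bang-bang, $l^*(x) = \bar l\,\ind_{\{V_c'(x) < 1\}}$ and $r^*(x) = \bar r\,\ind_{\{V_c'(x) > \phi\}}$, with the anticipated threshold structure inherited from the concavity of $V_c$ in Proposition \ref{prop:Vc_property}. I would then check that the associated feedback is admissible: the closed-loop drift $\mu - l^*(x) + r^*(x)$ is bounded and measurable and the noise in \eqref{eq:surplus_process} is additive and nondegenerate, so a unique strong solution exists, giving (iii), while $|l^* - \phi r^*|$ is bounded, giving (iv). Repeating the It\^o computation with this control turns the HJB inequality into an equality at every instant, and the same $T\to\infty$ argument delivers $V_c(x) = J_c(s,x,{\bf u}^*)$, which combined with the upper bound identifies $V_c$ as the value function.

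I expect the main obstacle to be the transversality/limiting step as $T \to \infty$ rather than the It\^o computation itself: one must simultaneously control the terminal term on $\{\tau = \infty\}$ (handled by the boundedness of $V_c$ together with the discount factor) and justify passing the limit inside the expectation of the integral (handled by condition (iv)). The hypotheses that $|V_c|$ and $|V_c'|$ are bounded are exactly what make the martingale property and the transversality argument go through. The secondary technical point is verifying admissibility of the bang-bang feedback, namely well-posedness of the closed-loop SDE with a possibly discontinuous drift, which I would resolve by appealing to strong existence and uniqueness for bounded measurable drift under additive nondegenerate noise.
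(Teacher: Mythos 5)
The paper does not spell out a proof at all—it omits it as standard, pointing to the proof of Proposition~2.1 in \citet{jgaard1999controlling}—and your argument is exactly that standard verification argument (It\^o's formula plus the HJB inequality for the upper bound, transversality via the boundedness of $V_c$ and the discount factor, the zero-expectation martingale term via the boundedness of $V_c'$, and equality along the bang-bang feedback, whose closed-loop SDE is well posed by strong existence/uniqueness for bounded measurable drift with additive nondegenerate noise), correctly executed. The only remark worth making is cosmetic: since $l_t$ and $r_t$ are bounded, the dominating function $(\bar l + \phi\bar r)e^{-\rho(t-s)}$ is available directly, so admissibility condition (iv) is not actually needed for the dominated-convergence step.
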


Note that Proposition \ref{prop:Vc_property} along with Corollary \ref{corollary:Vc_bound} that state the boundedness of $V_c$ and $V_c^\prime$, which in turn verify the assumption in Theorem \ref{thm:verification_exp}. By directly solving the supremum on the right-hand side of \eqref{eq:Vc_HJB}, we obtain that
\begin{equation*}
u^*(x) = (l^*(x), r^*(x)) = \left\{\begin{aligned}
& (0, \bar r), & V^\prime_c(x) > \phi,\\
& (0, 0), \qquad & 1 < V^\prime_c(x) \leq \phi,\\
& (\bar l, 0), & V^\prime_c(x) \leq 1.
\end{aligned}\right.
\end{equation*}
As $V_c(x)$ is concave by Proposition \ref{prop:Vc_property}, we conjecture that there exist $0 < x_r \leq x_l$ such that $V^\prime_c(x) > \phi$ for $x \in (0, x_r)$, $1 < V^\prime_c(x) \leq \phi$ for $x \in [x_r, x_l)$, and $V^\prime_c(x) \leq 1$ for $x \in [x_l, \infty)$. Then the optimal strategy is a feedback control, i.e.,
\begin{equation}\label{eq:optimal_strategy}
u^*(x) = \left\{\begin{aligned}
& (0, \bar r), & x \in (0, x_r),\\
& (0, 0), \qquad & x \in [x_r, x_l),\\
& (\bar l, 0), & x \in [x_l, \infty).
\end{aligned}\right.
\end{equation}
Substituting the optimal strategy \eqref{eq:optimal_strategy} into HJB equation \eqref{eq:Vc_HJB} leads to the following second-order ODEs:
\begin{equation}\label{eq:Vc_ODE}
\begin{cases}
\frac{1}{2} \sigma^2 V^{\prime\prime}_c(x) + (\mu + \bar r) V^\prime_c(x) - \rho V_c(x) - \phi \bar r = 0, & x \in (0, x_r),\\
\frac{1}{2} \sigma^2 V^{\prime\prime}_c(x) + \mu V^\prime_c(x) - \rho V_c(x) = 0, \qquad & x \in [x_r, x_l),\\
\frac{1}{2} \sigma^2 V^{\prime\prime}_c(x) + (\mu - \bar l) V^\prime_c(x) - \rho V_c(x) + \bar l = 0, & x \in [x_l, \infty).
\end{cases}
\end{equation}
Together with the boundary condition $V_c(0) = 0$ and \eqref{eq:Vc_limit}, the solution to the ODEs is
\begin{equation}\label{eq:Vc_analytic}
V_c(x) = \begin{cases}
A_1 \left(e^{\theta_1 x} - e^{\theta_2 x}\right) + \frac{\phi \bar r}{\rho} \left(e^{\theta_2 x} - 1\right),\qquad & x \in (0, x_r),\\
A_2 e^{\theta_3 x} + A_3 e^{\theta_4 x}, & x \in [x_r, x_l),\\
\frac{\bar l}{\rho} + A_4 e^{\theta_5 x}, & x \in [x_l, \infty),
\end{cases}
\end{equation}
where $\theta_i$ and $A_i$ are constants to be determined.
\begin{equation}\label{eq:thetas}
\begin{aligned}
&\theta_1 = \frac{-(\mu+\bar r) + \sqrt{(\mu+\bar r)^2 + 2 \rho \sigma^2}}{\sigma^2}, &&\theta_2 = \frac{-(\mu+\bar r) - \sqrt{(\mu+\bar r)^2 + 2 \rho \sigma^2}}{\sigma^2},\\
&\theta_3 = \frac{-\mu + \sqrt{\mu^2 + 2 \rho \sigma^2}}{\sigma^2}, &&\theta_4 = \frac{-\mu - \sqrt{\mu^2 + 2 \rho \sigma^2}}{\sigma^2},\\
&\theta_5 = \frac{-(\mu-\bar l) - \sqrt{(\mu-\bar l)^2 + 2 \rho \sigma^2}}{\sigma^2}. &&
\end{aligned}
\end{equation}
Using smooth fit principle, i.e., $V_c(x_r-) = V_c(x_r+)$, $V_c(x_l-) = V_c(x_l+)$, $V^\prime_c(x_r-) = V^\prime_c(x_r+)$, $V^\prime_c(x_l-) = V^\prime_c(x_l+)$,
\begin{equation}\label{eq:Vc_As}
\begin{aligned}
&A_1  = -\frac{B_1 \bar l + C_1 \phi \bar r}{\rho D}, && A_2  = -\frac{B_2 \bar l  + C_2 \phi \bar r}{\rho D}, \\
&A_3  = \frac{B_3 \bar l + C_3 \phi \bar r}{\rho D},  && A_4 = - e^{-\theta_5 x_l} \frac{B_4 \bar l + C_4 \phi \bar r}{\rho D},
\end{aligned}
\end{equation}
where
\begin{equation}\label{eq:Vc_BCD}
\begin{aligned}
& B_1 = \theta_5 (\theta_3 - \theta_4) e^{(\theta_3 + \theta_4) x_r}, \\
& B_2 = \theta_5 \left((\theta_1 - \theta_4) e^{(\theta_1+\theta_4) x_r} + (\theta_4 - \theta_2) e^{(\theta_2+\theta_4) x_r}\right), \\
& B_3 = \theta_5 \left((\theta_1 - \theta_3) e^{(\theta_1+\theta_3) x_r} + (\theta_3 - \theta_2) e^{(\theta_2+\theta_3) x_r}\right), \\
& B_4 = \theta_3 (\theta_1 - \theta_4) e^{(\theta_1+\theta_4) x_r + \theta_3 x_l} + \theta_4 (\theta_3 - \theta_1) e^{(\theta_1+\theta_3) x_r + \theta_4 x_l} \\
&\hspace{6ex} + \theta_3 (\theta_4 - \theta_2) e^{(\theta_2+\theta_4) x_r + \theta_3 x_l} + \theta_4 (\theta_2 - \theta_3) e^{(\theta_2+\theta_3) x_r + \theta_4 x_l}, \\
& C_1 = \theta_3(\theta_5 - \theta_4) e^{\theta_3 x_r + \theta_4 x_l} + \theta_4 (\theta_3  - \theta_5) e^{\theta_4 x_r + \theta_3 x_l} + (\theta_3 - \theta_2) (\theta_4 - \theta_5) e^{(\theta_2 + \theta_3) x_r+ \theta_4 x_l}\\
&\hspace{6ex} + (\theta_2 - \theta_4) (\theta_3 - \theta_5) e^{(\theta_2 + \theta_4) x_r + \theta_3 x_l}, \\
& C_2 = \theta_1(\theta_5 - \theta_4) e^{\theta_1 x_r + \theta_4 x_l} + \theta_2 (\theta_4 - \theta_5) e^{\theta_2 x_r + \theta_4 x_l} + (\theta_1 - \theta_2) (\theta_4 - \theta_5) e^{(\theta_1+\theta_2) x_r + \theta_4 x_l}, \\
& C_3 =  \theta_1(\theta_5 - \theta_3) e^{\theta_1 x_r + \theta_3 x_l} + \theta_2(\theta_3 - \theta_5) e^{\theta_2 x_r + \theta_3 x_l} + (\theta_1 - \theta_2) (\theta_3 - \theta_5) e^{(\theta_1+\theta_2) x_r + \theta_3 x_l}, \\
& C_4 = \theta_1 (\theta_3 - \theta_4) e^{\theta_1 x_r + (\theta_3+\theta_4) x_l} + \theta_2 (\theta_4 - \theta_3) e^{\theta_2 x_r + (\theta_3+\theta_4) x_l} \\
&\hspace{6ex} + (\theta_1 -\theta_2) (\theta_4 - \theta_3) e^{(\theta_1+\theta_2) x_r + (\theta_3+\theta_4) x_l}, \\
& D  = (\theta_3 - \theta_1) (\theta_4 - \theta_5) e^{(\theta_1+\theta_3) x_r + \theta_4 x_l}  + (\theta_1 - \theta_4) (\theta_3 - \theta_5) e^{(\theta_1 + \theta_4) x_r + \theta_3 x_l} \\
&\hspace{6ex} + (\theta_2 - \theta_3) (\theta_4 - \theta_5) e^{(\theta_2+\theta_3) x_r + \theta_4 x_l} +(\theta_4 - \theta_2) (\theta_3 - \theta_5) e^{(\theta_2+\theta_4) x_r + \theta_3 x_l}.
\end{aligned}
\end{equation}
It then remains to solve for $x_r$ and $x_l$, which can be derived from $V_c^\prime(x_r) = \phi$ and $V_c^\prime(x_l) = 1$, that is,
\begin{equation}\label{eq:xrxl}
\left\{\begin{aligned}
& A_2 e^{\theta_3 x_r} + A_3 e^{\theta_4 x_r} = \phi,\\
& A_2 e^{\theta_3 x_l} + A_3 e^{\theta_4 x_l} = 1.
\end{aligned}\right.
\end{equation}
We provide the conditions for the existence and uniqueness of $x_r$ and $x_l$ in the following.
\begin{proposition}\label{thm:xrxl}
\begin{enumerate}
\item[(i)] If $\ln\phi < \frac{1}{\theta4-\theta3} \left( \theta_4 \ln\frac{\theta_5-\theta_4}{\theta_3} + \theta_3 \ln\frac{-\theta_4}{\theta_3-\theta_5} \right)$, there exists a unique pair $(x_r, x_l) \in (0, \infty)\times[x_r, \infty)$ that solves \eqref{eq:xrxl}.
\item[(ii)] If $\ln\phi \geq \frac{1}{\theta4-\theta3} \left( \theta_4 \ln\frac{\theta_5-\theta_4}{\theta_3} + \theta_3 \ln\frac{-\theta_4}{\theta_3-\theta_5} \right)$ and $\frac{\bar l}{\rho} + \frac{1}{\theta_5} > 0$, there doesn't exist positive $x_r$ that satisfies \eqref{eq:xrxl}. Instead, let $x_r = 0$, there exists a unique positive $x_l$ that satisfies \eqref{eq:xrxl}, that is, $x_l = \frac{1}{\theta_3 - \theta_4} \ln\frac{\theta_4(\theta_5-\theta_4)}{\theta_3(\theta_5-\theta_3)}$.
\item[(iii)] If $\frac{\bar l}{\rho} + \frac{1}{\theta_5} \leq 0$, there doesn't exist $(x_r, x_l) \in (0, \infty)\times[x_r, \infty)$ that solves \eqref{eq:xrxl}.
\end{enumerate}
\end{proposition}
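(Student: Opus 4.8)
The plan is to avoid attacking the two coupled transcendental equations \eqref{eq:xrxl} head-on---where $A_2,A_3$ themselves depend on $(x_r,x_l)$ through \eqref{eq:Vc_As}---and instead to exploit the smooth-fit structure to decouple the two free boundaries. The first fact I would establish is that at the dividend boundary $x_l$ the optimality condition $V_c'(x_l)=1$, together with value- and first-derivative matching, automatically forces second-derivative matching: substituting $V_c'(x_l)=1$ into the middle- and right-region ODEs of \eqref{eq:Vc_ODE} makes the extra terms $-\bar l\,V_c'(x_l)+\bar l$ cancel, so $V_c''$ is continuous across $x_l$. Consequently the triple $(V_c(x_l),V_c'(x_l),V_c''(x_l))$ is pinned down to $(\bar l/\rho+1/\theta_5,\,1,\,\theta_5)$, independently of the injection region. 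This immediately yields case (iii): since any admissible $V_c$ is increasing with $V_c(0)=0$ by Corollary \ref{corollary:Vc_bound}, one has $V_c(x_l)>0$ for every $x_l>0$, which is incompatible with $\bar l/\rho+1/\theta_5\le 0$, so no solution $(x_r,x_l)$ can exist.

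Next I would analyze the no-injection configuration underlying case (ii), where $x_r=0$ and the middle-region solution extends to the origin, so that $V_c(0)=0$ enforces $A_3=-A_2$. Imposing $V_c'(x_l)=1$ and the pinned value $V_c''(x_l)=\theta_5$ on $V_c(x)=A_2\bigl(e^{\theta_3 x}-e^{\theta_4 x}\bigr)$ and eliminating $A_2$ gives the single equation $e^{(\theta_3-\theta_4)x_l}=\frac{\theta_4(\theta_5-\theta_4)}{\theta_3(\theta_5-\theta_3)}$, which is precisely the stated closed form for $x_l$. Using that $\theta_5$ solves its defining quadratic, one checks that this is equivalent to the value-matching relation $V_c(x_l)=\bar l/\rho+1/\theta_5$, and that a positive root exists exactly when $\bar l/\rho+1/\theta_5>0$, matching the hypothesis of (ii).

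The dichotomy between (i) and (ii) I would settle by the marginal value of injection at the origin. Evaluating the derivative of the case-(ii) profile at $0$ gives $V_c'(0)=\frac{\theta_5-\theta_3}{\theta_4}e^{-\theta_4 x_l}$; substituting the case-(ii) value of $x_l$ and taking logarithms reproduces exactly the threshold $\frac{1}{\theta_4-\theta_3}\bigl(\theta_4\ln\frac{\theta_5-\theta_4}{\theta_3}+\theta_3\ln\frac{-\theta_4}{\theta_3-\theta_5}\bigr)$. When $\ln\phi$ lies below this threshold we have $V_c'(0)>\phi$, so the first-order condition for injection is violated at the origin and the no-injection profile cannot be a candidate; this forces a genuine injection region $x_r>0$, i.e.\ case (i). When $\ln\phi$ is at or above the threshold, $V_c'(0)\le\phi$ and the case-(ii) profile is admissible.

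Finally, for existence and uniqueness in case (i) I would reduce to a single scalar equation: parametrize the middle region by $x_l$ through its pinned boundary data, solve $V_c'(x_r)=\phi$ for $x_r=x_r(x_l)$---well defined and strictly monotone because $V_c'$ is strictly decreasing by the concavity in Proposition \ref{prop:Vc_property}---fix $A_1$ from $V_c'(x_r)=\phi$ and $V_c(0)=0$ in the injection region, and then impose value matching at $x_r$ to obtain one equation $H(x_l)=0$. I would prove $H$ is strictly monotone with a sign change on the admissible interval, using the boundary behavior as $x_r\downarrow 0$ (which matches continuously onto the case-(ii) threshold) and the monotonicity of $V_c'$, so a unique pair with $0<x_r<x_l$ exists by the intermediate value theorem; the ordering $x_r<x_l$ is automatic since $V_c'(x_r)=\phi>1=V_c'(x_l)$ with $V_c'$ decreasing. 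The main obstacle is precisely this last step: the two equations \eqref{eq:xrxl} are genuinely coupled through the dependence of $A_2,A_3$ on both boundaries, and the crux is to convert them into one monotone scalar equation whose sign change is governed by the explicit $\ln\phi$ threshold and the sign of $\bar l/\rho+1/\theta_5$. The concavity of $V_c$ is the essential tool, as it forces each level $\phi$ and $1$ to be attained at most once and thereby secures uniqueness.
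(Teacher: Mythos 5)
You have the right skeleton, and your treatment of cases (iii) and (ii) is essentially the paper's own: the pinned triple $(V_c(x_l),V_c'(x_l),V_c''(x_l))=(\bar l/\rho+1/\theta_5,\,1,\,\theta_5)$ is exactly what the paper derives from smooth fit at $x_l$, case (iii) follows from $V_c$ increasing with $V_c(0)=0$, and your closed form for $x_l$ together with the equivalence ``positive root $\iff \bar l/\rho+1/\theta_5>0$'' matches the paper (via its Lemma \ref{lemma:thetas_equivalent}, since positivity of $x_l$ amounts to $\theta_5>\theta_3+\theta_4$). Your identification of the threshold $T:=\frac{1}{\theta_4-\theta_3}\bigl(\theta_4\ln\frac{\theta_5-\theta_4}{\theta_3}+\theta_3\ln\frac{-\theta_4}{\theta_3-\theta_5}\bigr)$ with $\ln V_c'(0)$ of the no-injection profile is correct (indeed $V_c'(0)=\frac{\theta_5-\theta_4}{\theta_3}e^{-\theta_3 x_l}=\frac{\theta_5-\theta_3}{\theta_4}e^{-\theta_4 x_l}$ collapses to $e^{T}$ at the case-(ii) $x_l$), and this is a cleaner economic reading than the paper's, where the same fact appears only implicitly as $g_c(0)<0\iff\ln\phi<T$.

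The genuine gap is your last step, which is where the entire difficulty of the proposition sits. You reduce case (i) to a scalar equation $H(x_l)=0$ (the paper's counterpart is $g_c(x_r)=0$ with $g_c$ in \eqref{eq:g_c}, obtained after the value-matching relation $e^{-\theta_4(x_l-x_r)}=f_c(x_r)$ with $f_c$ in \eqref{eq:f_c}) and then assert that $H$ is ``strictly monotone with a sign change,'' citing the concavity in Proposition \ref{prop:Vc_property}. Two problems. First, the monotonicity claim is unsupported and, in the paper's parametrization, simply false: Lemma \ref{lemma:f_c} shows $f_c$ is decreasing-then-increasing, so the admissible set $\mathcal{O}=\{x: f_c(x)\ge 1\}$ may split into two components, and Lemma \ref{lemma:g_c} shows $g_c$ is increasing-then-decreasing in its cases (iii)--(iv); uniqueness of the root is rescued not by monotonicity but by a sign analysis --- $g_c=(\theta_3-\theta_4)(\phi-1)>0$ wherever $f_c=1$, plus $\lim_{x\to\infty}g_c(x)>0$, the latter requiring a separate contradiction argument with logarithmic inequalities. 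Nothing in your sketch produces these facts, and the appeal to concavity is circular at this stage: Proposition \ref{prop:Vc_property} concerns the true value function, whereas $H$ is built from candidate profiles with trial boundaries whose concavity is not yet known (it is legitimate only for restricting to $x_l\ge x_r$, i.e.\ $f_c(x_r)\ge 1$, as the paper does). Second, your dichotomy argument establishes only necessity: $\ln\phi<T$ rules out the two-region profile, and $\ln\phi\ge T$ makes it admissible, but neither direction by itself yields existence of the three-region solution in (i) nor non-existence of a positive $x_r$ in (ii) --- both hinge on the global behavior of $H$ (resp.\ $g_c$) that you have not controlled. Relatedly, you never establish the paper's Lemma \ref{lemma:thetas_imply} (the case-(i) inequality forces $\bar l/\rho+1/\theta_5>0$), without which the pinned value $V_c(x_l)>0$ underlying your case-(i) construction is unavailable and statement (i) would conflict with (iii).
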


Based on the condition in Proposition \ref{thm:xrxl}, we formally establish the solution to HJB equation \eqref{eq:Vc_HJB} under three cases accordingly.
\begin{theorem}\label{thm:Vc_classification}
Let $V_c(x)$ denote the value function of the dividend payment and capital injection problem \eqref{eq:classical_objective} and ${\bf u^*} = \{u_t^*\}_{t\ge0}$ be the optimal strategy. Recall the definitions in \eqref{eq:thetas}-\eqref{eq:Vc_BCD}.
\begin{enumerate}
\item[(i)] If $\ln\phi < \frac{1}{\theta_4-\theta_3} \left( \theta_4 \ln\frac{\theta_5-\theta_4}{\theta_3} + \theta_3 \ln\frac{-\theta_4}{\theta_3-\theta_5} \right)$, then $V_c(x)$ is given by \eqref{eq:Vc_analytic} and
$$
u_t^* = (l^*(X_t^{\bf u^*}), r^*(X_t^{\bf u^*})) = \left\{\begin{aligned}
& (0, \bar r), & X_t^{\bf u^*} \in (0, x_r),\\
& (0, 0), \qquad & X_t^{\bf u^*} \in [x_r, x_l),\\
& (\bar l, 0), & X_t^{\bf u^*} \in [x_l, \infty),
\end{aligned}\right.
$$
where $(x_r, x_l)$ is uniquely determined by \eqref{eq:xrxl}.

\item[(ii)] If $\ln\phi \geq \frac{1}{\theta_4-\theta_3} \left( \theta_4 \ln\frac{\theta_5-\theta_4}{\theta_3} + \theta_3 \ln\frac{-\theta_4}{\theta_3-\theta_5} \right)$ and $\frac{\bar l}{\rho} + \frac{1}{\theta_5} > 0$, then
$$
V_c(x) = \begin{cases}
A_2 \left(e^{\theta_3 x} - e^{\theta_4 x}\right), & x \in (0, x_l),\\
\frac{\bar l}{\rho} + A_4 e^{\theta_5 x}, & x \in [x_l, \infty),
\end{cases}
$$
and
$$
u_t^* = (l^*(X_t^{\bf u^*}), r^*(X_t^{\bf u^*})) = \left\{\begin{aligned}
& (0, 0), \qquad & X_t^{\bf u^*} \in (0, x_l),\\
& (\bar l, 0), & X_t^{\bf u^*} \in [x_l, \infty),
\end{aligned}\right.
$$
where $x_r = 0$ and $x_l = \frac{1}{\theta_3 - \theta_4} \ln\frac{\theta_4(\theta_5-\theta_4)}{\theta_3(\theta_5-\theta_3)}$.

\item[(iii)] If $\frac{\bar l}{\rho} + \frac{1}{\theta_5} \leq 0$, then
$$
V_c(x) = \frac{\bar l}{\rho} + A_4 e^{\theta_5 x}, \qquad x \in (0, \infty),
$$
and
$$
u_t^* = (\bar l, 0),
$$
where $x_r = x_l =0$.
\end{enumerate}
\end{theorem}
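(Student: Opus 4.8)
The plan is to prove the theorem by \emph{verification}: in each of the three cases I would exhibit the stated function as a $C^2$ solution of the HJB equation \eqref{eq:Vc_HJB} that meets the boundary and boundedness requirements, and then invoke Theorem \ref{thm:verification_exp} to conclude that it coincides with the value function $V_c$ and that the associated feedback maximizer is optimal. The partition into three cases is inherited directly from Proposition \ref{thm:xrxl}: case (i) is exactly the regime in which a unique pair $0 < x_r \le x_l$ solving \eqref{eq:xrxl} exists, case (ii) the regime in which only a positive $x_l$ survives (so $x_r = 0$), and case (iii) the regime in which both thresholds collapse to $0$.

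First I would fix a case and take the candidate $\tilde V$ to be the piecewise function displayed there, assembled from the general solutions of the regional ODEs \eqref{eq:Vc_ODE} with characteristic roots \eqref{eq:thetas}; by construction $\tilde V$ solves, on each interval, the HJB equation with the conjectured maximizer substituted. The coefficients \eqref{eq:Vc_As} are precisely those enforcing continuity of $\tilde V$ and $\tilde V'$ at the interfaces $x_r$ and $x_l$ (the smooth-fit conditions). The one nonroutine regularity point is $C^2$-matching, which I would obtain for free from the threshold equations \eqref{eq:xrxl}: subtracting the two adjacent ODEs at $x_r$ gives $\tfrac12\sigma^2\big(\tilde V''(x_r-)-\tilde V''(x_r+)\big) = \bar r\big(\phi - \tilde V'(x_r)\big)$, which vanishes because $\tilde V'(x_r)=\phi$; likewise the jump across $x_l$ equals $-\bar l\big(\tilde V'(x_l)-1\big)=0$. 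Hence $\tilde V\in C^2(0,\infty)$ with no further condition imposed.

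Next I would check that $\tilde V$ solves the \emph{full} HJB \eqref{eq:Vc_HJB}, i.e.\ that the feedback control \eqref{eq:optimal_strategy} actually attains the supremum at every $x$. This reduces to verifying the slope ranges $\tilde V'>\phi$ on $(0,x_r)$, $1<\tilde V'\le\phi$ on $[x_r,x_l)$, and $\tilde V'\le 1$ on $[x_l,\infty)$. Here the concavity established in Proposition \ref{prop:Vc_property}(ii) is the workhorse: since $\tilde V'$ is decreasing and pinned at $\tilde V'(x_r)=\phi$, $\tilde V'(x_l)=1$ by \eqref{eq:xrxl}, the three ranges follow at once. The boundary data hold by design, as the inner expression gives $\tilde V(0)=0$ and $\theta_5<0$ forces $A_4 e^{\theta_5 x}\to 0$ so that $\tilde V(x)\to \bar l/\rho$, matching \eqref{eq:Vc_limit}; and $\tilde V,\tilde V'$ are bounded since they are continuous on compacts and decay to constants at infinity. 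Theorem \ref{thm:verification_exp} then yields $\tilde V=V_c$ and optimality of $u^*$.

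For cases (ii) and (iii) the argument is identical but with collapsed regions. When $x_r=0$ the injection branch disappears and the HJB at the left endpoint only needs $\tilde V'(0+)\le\phi$, which is equivalent to the complementary condition $\ln\phi\ge\frac{1}{\theta_4-\theta_3}\big(\theta_4\ln\frac{\theta_5-\theta_4}{\theta_3}+\theta_3\ln\frac{-\theta_4}{\theta_3-\theta_5}\big)$ of Proposition \ref{thm:xrxl}(ii), and solving $\tilde V'(x_l)=1$ explicitly produces the stated $x_l$. If in addition $\tilde V'(0+)\le 1$, equivalent to $\frac{\bar l}{\rho}+\frac{1}{\theta_5}\le 0$, the dividend threshold also collapses and $\tilde V(x)=\bar l/\rho + A_4 e^{\theta_5 x}$ with $A_4=-\bar l/\rho$ solves the single outer ODE on all of $(0,\infty)$ with $\tilde V'(x)\le \tilde V'(0+)\le 1$. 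I expect the genuinely delicate step to be the global HJB verification in case (i): one must confirm that the monotonicity of $\tilde V'$ together with the two pinned slopes certifies the maximizer on \emph{all three} regions simultaneously, and that the $C^2$-matching argument is not circular, since it relies only on the algebraic identities \eqref{eq:xrxl} and not on any prior identification of $\tilde V$ with the value function. Everything else is bookkeeping with the roots \eqref{eq:thetas} and the coefficients \eqref{eq:Vc_As}.
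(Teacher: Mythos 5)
Your architecture --- assemble the candidate $\tilde V$ from the regional ODEs \eqref{eq:Vc_ODE}, get $C^1$ fit from \eqref{eq:Vc_As}, upgrade to $C^2$ by subtracting the adjacent ODEs at the interfaces, and close with Theorem \ref{thm:verification_exp} --- is essentially the route the paper takes (the paper gives no standalone proof of this theorem; it is the in-text derivation plus Proposition \ref{thm:xrxl} plus the verification theorem), and your $C^2$-matching computation at $x_r$ and $x_l$ is correct. But there is a genuine circularity exactly where you declared there was none, only at a different step. To verify that the feedback \eqref{eq:optimal_strategy} attains the supremum in \eqref{eq:Vc_HJB} you need the slope ranges for the \emph{candidate} $\tilde V$, and you derive them from ``the concavity established in Proposition \ref{prop:Vc_property}(ii).'' That proposition asserts concavity of the value function $V_c$, proved probabilistically from optimality over strategies. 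At the point in a verification proof where the slope ranges are needed, you do not yet know $\tilde V = V_c$ --- establishing that identity is the whole purpose of the verification --- so concavity of $V_c$ tells you nothing about $\tilde V'$ being decreasing. Without monotonicity of $\tilde V'$, the pinned values $\tilde V'(x_r)=\phi$, $\tilde V'(x_l)=1$ from \eqref{eq:xrxl} do not yield $\tilde V'>\phi$ on $(0,x_r)$, $1<\tilde V'\le\phi$ on $[x_r,x_l)$ and $\tilde V'\le 1$ on $[x_l,\infty)$, and the global HJB verification collapses; your case (ii)/(iii) reductions ($\tilde V'(0+)\le\phi$, resp.\ $\le 1$ with $A_4=-\bar l/\rho$) lean on the same unproved monotonicity.

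The gap is repairable in two ways. (a) Prove concavity of the \emph{candidate} directly: determine the signs of the coefficients in \eqref{eq:Vc_As} and run a region-by-region second-derivative analysis --- this is precisely what the paper is forced to do in the time-inconsistent case, where concavity cannot be inherited from optimality (Proposition \ref{prop:parameter_sign}, Lemma \ref{lemma:concave}, Theorem \ref{thm:V_concave}, following \citet{shreve1984optimal}), and specializing that machinery to a single exponential handles the present theorem. (b) Alternatively, restructure as a necessity-plus-uniqueness argument, which is how the paper's own text reads: under the standing assumption $V_c\in C^2$, the \emph{value function} satisfies \eqref{eq:Vc_HJB}; Proposition \ref{prop:Vc_property}(ii) and Corollary \ref{corollary:Vc_bound} legitimately apply to it, so $V_c'$ crosses the levels $\phi$ and $1$ at thresholds $0\le x_r\le x_l$, forcing $V_c$ to solve the regional ODEs with smooth fit and hence to take the form \eqref{eq:Vc_analytic} with coefficients \eqref{eq:Vc_As}; Proposition \ref{thm:xrxl} then classifies exactly which of the three regimes obtains and gives uniqueness of $(x_r,x_l)$, and Theorem \ref{thm:verification_exp} closes the loop. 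In that reading concavity is applied to $V_c$, not to the candidate, and no circularity arises; as written, your proposal mixes the two readings at the one step where the distinction matters.
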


\subsection{Equilibrium strategy and value function}\label{subsec:time-inconsistent}
Now, we solve the general time-inconsistent problem \eqref{eq:objective} under a non-exponential discount function $\delta(s,\cdot)$.
Denote $V$ as the equilibrium value function and ${\bf \hat u}$ the equilibrium strategy, if exists, that is,
\begin{equation}\label{eq:equilibrium V}
V(s, x) := J(s, x, {\bf\hat u}) = \expect_{s,x}\left[\int_s^{\tau_{s,x}^{\bf \hat u}} \delta(s,t) \left( \hat l_t - \phi \hat r_t \right) dt \right].
\end{equation}

\begin{proposition}\label{prop:V_bounded}
The equilibrium value function $V(s, x)$ is bounded:
\begin{align*}
 V(s, x) \leq \int_s^\infty \delta(s,t) \bar l dt, \quad \forall s \ge 0, x \ge 0.
\end{align*}
\end{proposition}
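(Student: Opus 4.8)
The plan is to obtain the bound by a direct pointwise estimate of the integrand, since no optimization or dynamic-programming machinery is needed once the equilibrium strategy ${\bf \hat u} = \{\hat l_t, \hat r_t\}$ has been fixed. The key observation is that along the equilibrium control the running reward never exceeds $\bar l$: because $\hat l_t \in [0,\bar l]$ and $\hat r_t \in [0,\bar r]$ with $\phi > 1 > 0$, we have $\hat l_t - \phi \hat r_t \le \hat l_t \le \bar l$ for every $t$. Multiplying by the strictly positive discount factor $\delta(s,t) \in (0,1]$ preserves the inequality, giving the pathwise majorization $\delta(s,t)(\hat l_t - \phi \hat r_t) \le \delta(s,t)\bar l$.

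Next I would integrate this pointwise bound over the random horizon $[s,\tau_{s,x}^{\bf \hat u}]$ and then enlarge the upper limit to $\infty$. The enlargement is legitimate precisely because the majorant $\delta(s,t)\bar l$ is nonnegative, so
$$
\int_s^{\tau_{s,x}^{\bf \hat u}}\delta(s,t)(\hat l_t - \phi \hat r_t)\,dt \le \int_s^{\tau_{s,x}^{\bf \hat u}}\delta(s,t)\bar l\,dt \le \int_s^{\infty}\delta(s,t)\bar l\,dt .
$$
Admissibility condition (iv) of Definition \ref{def:admissible_control} guarantees that the left-hand integral is absolutely integrable, so every quantity appearing above is well defined.

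Finally, I would take the expectation $\expect_{s,x}[\cdot]$ on both sides. Since the right-hand side $\int_s^\infty \delta(s,t)\bar l\,dt$ is deterministic, it equals its own expectation, and monotonicity of expectation yields
$$
V(s,x) = \expect_{s,x}\!\left[\int_s^{\tau_{s,x}^{\bf \hat u}}\delta(s,t)(\hat l_t - \phi \hat r_t)\,dt\right] \le \int_s^{\infty}\delta(s,t)\bar l\,dt,
$$
which is exactly the claimed bound recorded in \eqref{eq:equilibrium V}.

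There is essentially no hard step here; the argument is a one-line majorization followed by monotonicity of the integral and of the expectation. The only points that require a moment's care are that the extension of the integration limit from the random, possibly infinite ruin time $\tau_{s,x}^{\bf \hat u}$ to $\infty$ rests on the sign of the discounted upper bound $\delta(s,t)\bar l \ge 0$ rather than on any property of ${\bf \hat u}$, and that one implicitly uses admissibility of the equilibrium strategy so that the defining integral makes sense.
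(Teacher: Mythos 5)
Your proof is correct and is essentially the argument the paper intends: the paper gives no separate proof of Proposition \ref{prop:V_bounded}, and your pointwise majorization $\hat l_t - \phi \hat r_t \le \hat l_t \le \bar l$, followed by enlarging the integration horizon using nonnegativity of $\delta(s,t)\bar l$ and taking expectations, mirrors exactly the boundedness step in the paper's proof of Proposition \ref{prop:Vc_property}(i) specialized to the general discount function. No gaps.
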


Suppose the equilibrium value function $V(s,x) \in {\cal C}^{1,2}$. The following extended HJB equation system characterizes the equilibrium value function $V$. For any $s \ge k \ge 0$ and $x > 0$,
\begin{align}
&  \sup_{l \in [0,\bar l], r \in [0,\bar r]} \Big\{ V_s(s,x) + (\mu - l + r) V_x(s,x) + \frac{1}{2} \sigma^2 V_{xx}(s,x) + (l - \phi r) \notag \\
&\quad - f_s(s,x) - (\mu - l + r) f_x(s,x) - \frac{1}{2} \sigma^2 f_{xx}(s,x) \notag \\
&\quad + g_s(s,x;s) + (\mu - l + r) g_x(s,x;s) + \frac{1}{2} \sigma^2 g_{xx}(s,x;s) \Big\} = 0, \label{extended_HJB:1} \\
& g_s(s,x;k) + (\mu - \hat l_s + \hat r_s) g_x(s,x;k) + \frac{1}{2} \sigma^2 g_{xx}(s,x;k) + \delta(k,s) (\hat l_s - \phi \hat r_s) = 0, \label{extended_HJB:2} \\
& f(s,x) = g(s,x;s), \label{extended_HJB:3} \\
& V(s,0) = 0,\label{extended_HJB:bound1}\\
& g(s,0;k) = 0, \label{extended_HJB:bound2}
\end{align}
where $\hat u_s = (\hat l_s,\hat r_s)$ solves the supremum in \eqref{extended_HJB:1}.
We present the verification theorem below to show that $V(s, x)$ solving \eqref{extended_HJB:1}-\eqref{extended_HJB:bound2} is indeed an equilibrium value function.

\begin{theorem}\label{thm:verification_equilibrium}
Suppose that there exists $V, g, f \in {\cal C}^{1,2}$ and $\hat u_s$ solving the extended HJB equations \eqref{extended_HJB:1}-\eqref{extended_HJB:bound2}, and $\left\vert g \right\vert$, $\left\vert f \right\vert$, $\left\vert V \right\vert$, $\left\vert g_x \right\vert$, $\left\vert f_x \right\vert$, $\left\vert V_x \right\vert$ are bounded. Then $V$ is the equilibrium value function and $\hat u_s$ is the equilibrium strategy at time $s$ and surplus level $x$.
\end{theorem}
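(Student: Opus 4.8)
The plan is to verify the two claims in turn: that the candidate $V$ equals the realized objective $J(\cdot,\cdot,\hat u)$ generated by the feedback control $\hat u$, and that $\hat u$ meets the first-order condition of Definition~\ref{def:eq}. Both rest on a Feynman--Kac reading of the auxiliary function $g$. Write $\cL^{l,r}\psi := \psi_s + (\mu-l+r)\psi_x + \tfrac12\sigma^2\psi_{xx}$ for the generator of \eqref{eq:surplus_process} under the constant control $(l,r)$, acting on the first two arguments of a smooth $\psi$, and let $\cL^{\hat u}$ denote the same operator evaluated at the equilibrium feedback. For each fixed $k$, \eqref{extended_HJB:2} reads $\cL^{\hat u}g(s,x;k) = -\delta(k,s)(\hat l_s - \phi\hat r_s)$ with $g(s,0;k)=0$. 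Applying Dynkin's formula to $g(\cdot,\cdot;k)$ along \eqref{eq:surplus_process} driven by $\hat u$ and stopped at $\tau\wedge T$, where $\tau$ is the ruin time (the stochastic integral being a true martingale by boundedness of $g_x$), then letting $T\to\infty$, gives $g(s,x;k)=\expect_{s,x}\big[\int_s^{\tau}\delta(k,t)(\hat l_t-\phi\hat r_t)\,dt\big]$: the boundary contribution vanishes because $g(\cdot,0;k)=0$ at ruin, while on $\{\tau=\infty\}$ the term $g(T,X_T;k)$ is dominated by the integrable tail $\int_T^\infty\delta(k,t)(\bar l+\phi\bar r)\,dt\to0$. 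Setting $k=s$ and using \eqref{extended_HJB:3} then identifies $f(s,x)=g(s,x;s)=J(s,x,\hat u)$.

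It remains, for the first claim, to show $V=f$. Feeding the maximizer $\hat u_s$ into \eqref{extended_HJB:1} and using the diagonal relation together with the chain rule $f_s=g_s(s,x;s)+g_k(s,x;s)$, $f_x=g_x(s,x;s)$, $f_{xx}=g_{xx}(s,x;s)$, the matching $g$- and $f$-terms cancel and \eqref{extended_HJB:1} collapses to $\cL^{\hat u}V+(\hat l_s-\phi\hat r_s)-g_k(s,x;s)=0$. The same substitution in \eqref{extended_HJB:2} at $k=s$ (where $\delta(s,s)=1$) shows $f$ obeys the identical equation, so $w:=V-f$ solves the source-free problem $\cL^{\hat u}w=0$ with $w(s,0)=0$ by \eqref{extended_HJB:bound1}--\eqref{extended_HJB:bound2}. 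I would then argue $w\equiv0$: being bounded, $w(t,X_t)$ is a uniformly integrable martingale whose value at ruin is $0$, and the surviving boundary term must be shown to vanish as well (see the obstacle below), giving $V=f=J(\cdot,\cdot,\hat u)$.

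For the equilibrium property, fix $(l,r)$ and the spike perturbation $\hat u^{\epsilon,l,r}$, with perturbed surplus $X^\epsilon$ and ruin time $\tau^\epsilon$. Since ruin before $s+\epsilon$ carries probability $o(\epsilon)$ for $x>0$, the strong Markov property gives $J(s,x,\hat u^{\epsilon,l,r})=\expect_{s,x}\big[\int_s^{(s+\epsilon)\wedge\tau^\epsilon}\delta(s,t)(l-\phi r)\,dt+\ind_{\{\tau^\epsilon>s+\epsilon\}}\,g(s+\epsilon,X^\epsilon_{s+\epsilon};s)\big]$. Applying Dynkin to $g(\cdot,\cdot;s)$ over $[s,(s+\epsilon)\wedge\tau^\epsilon]$ (the ruin boundary term again vanishing) and subtracting from $J(s,x,\hat u)=g(s,x;s)$ yields $J(s,x,\hat u)-J(s,x,\hat u^{\epsilon,l,r})=-\expect_{s,x}\big[\int_s^{(s+\epsilon)\wedge\tau^\epsilon}\big(\delta(s,t)(l-\phi r)+\cL^{l,r}g(t,X^\epsilon_t;s)\big)\,dt\big]$. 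Dividing by $\epsilon$ and letting $\epsilon\downarrow0$, dominated convergence with $\delta(s,s)=1$ and the boundedness of $g,g_x,g_{xx}$ gives the limit $-\big[(l-\phi r)+\cL^{l,r}g(s,x;s)\big]$. Once $V=f$ is known, \eqref{extended_HJB:1} becomes $\sup_{l,r}\{(l-\phi r)+\cL^{l,r}g(s,x;s)\}=0$, so this bracket is nonpositive for every admissible $(l,r)$ and vanishes at $\hat u_s$; hence the $\liminf$ in Definition~\ref{def:eq} is nonnegative and $\hat u$ is an equilibrium.

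The main obstacle is the transversality at infinity underlying $w\equiv0$. The equation $\cL^{\hat u}w=0$, $w(\cdot,0)=0$, does not force $w=0$ under boundedness alone: on a region where the equilibrium drift $\mu-\bar l$ is positive, the stationary profile $c\,(1-e^{-2(\mu-\bar l)x/\sigma^2})$ is bounded and vanishes at the origin, so one must control $\expect_{s,x}\big[\ind_{\{\tau=\infty\}}w(T,X_T)\big]$ as $T\to\infty$. Unlike the $g$-term, this is not killed by discount decay, since the $w$-equation carries no source. I would close the gap by combining the boundedness hypotheses with the large-surplus behavior of the controlled process, namely that on the non-ruin event both $V$ and $f$ approach the bound $\int_s^\infty\delta(s,t)\bar l\,dt$ of Proposition~\ref{prop:V_bounded} as $x\to\infty$, which forces the stationary constant $c$ to be zero. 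Establishing this boundary behavior rigorously, rather than the routine Dynkin and dominated-convergence steps, is the delicate core of the verification.
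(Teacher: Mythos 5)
Your architecture is the paper's: It\^o/Dynkin on $g(\cdot,\cdot;k)$ along the equilibrium path to obtain $g(s,x;k)=\expect_{s,x}\big[\int_s^{\tau}\delta(k,t)(\hat l_t-\phi\hat r_t)\,dt\big]$ and hence $f(s,x)=g(s,x;s)=J(s,x,{\bf\hat u})$; cancellation of the $f$- and $g$-terms in \eqref{extended_HJB:1} at the maximizer so that $V$ and $f$ satisfy the same linear equation along ${\bf\hat u}$; and the spike perturbation with the strong Markov property, dividing by $\epsilon$ and invoking \eqref{extended_HJB:1} to get the $\liminf$ of Definition~\ref{def:eq}. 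Your observation that, once $V=f$, \eqref{extended_HJB:1} collapses to $\sup_{l,r}\{(l-\phi r)+\cL^{l,r}g(s,x;s)\}=0$ is a tidier packaging of the paper's three-term limit computation and is equivalent to it (the paper likewise uses $V=f$ in that step, via its equation (eq:Vf)).

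The genuine gap is in the identification $V=f$, and it occurs in two places. First, your disposal of the $g$-boundary term on $\{\tau=\infty\}$ is circular: the claimed domination of $g(T,X_T;k)$ by $\int_T^\infty\delta(k,t)(\bar l+\phi\bar r)\,dt$ is a consequence of the probabilistic representation you are in the middle of proving; under the theorem's hypotheses $g$ is merely some bounded classical solution, and nothing yet forces that decay. Second, you correctly note that $\cL^{\hat u}w=0$, $w(\cdot,0)=0$ and boundedness do not force $w\equiv 0$ — your bounded harmonic profile is a valid counterexample precisely because survival has positive probability here (the controlled drift is bounded below by $\mu-\bar l$, and the paper itself emphasizes that ruin is not certain) — but your proposed repair does not close it: Proposition~\ref{prop:V_bounded} bounds the \emph{probabilistic} equilibrium value function, not an arbitrary bounded solution of the PDE system, so the asserted large-$x$ asymptotics of $V$ and $f$ are not available from the hypotheses, and by your own admission the step remains open. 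For fairness, the paper's proof dispatches exactly these boundary terms with a one-line bounded-convergence assertion, $\lim_{T\to\infty}\expect_{s,x}[g(T\wedge\tau_{s,x}^{\bf\hat u},X^{\bf\hat u}_{T\wedge\tau_{s,x}^{\bf\hat u}};k)]=\expect_{s,x}[g(\tau_{s,x}^{\bf\hat u},0;k)]=0$, which is valid only if $\tau_{s,x}^{\bf\hat u}<\infty$ a.s.\ or under an added transversality condition such as $\lim_{T\to\infty}\expect_{s,x}\big[\ind_{\{\tau_{s,x}^{\bf\hat u}>T\}}\,\vert g(T,X^{\bf\hat u}_T;k)\vert\big]=0$ (and its analogues for $f$ and $V$); so you have located a real soft spot rather than invented one, but a complete proof must impose such a condition explicitly rather than appeal to asymptotics the hypotheses do not supply. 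Two minor further slips: your dominated-convergence step in the perturbation part uses boundedness of $g_{xx}$, which is not among the stated hypotheses, and your chain rule $f_s=g_s(s,x;s)+g_k(s,x;s)$ requires differentiability of $g$ in its third argument, which the theorem does not assume (the paper's version of the cancellation avoids $g_k$ altogether).
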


Note that the boundedness of $V$ is shown in Proposition \ref{prop:V_bounded}. It is also straightforward to verify that $g$ and $f$ are bounded via probabilistic interpretations shown in the proof of Theorem \ref{thm:verification_equilibrium} in Appendix \ref{proof}. In other words, the assumptions in Theorem \ref{thm:verification_equilibrium} are satisfied. We then simplify the extended HJB equation system \eqref{extended_HJB:1}-\eqref{extended_HJB:bound2} into
\begin{align}\label{simplified_extended_HJB}
& \sup_{l \in [0,\bar l], r \in [0,\bar r]} \left\{ V_s(s,x) + (\mu - l + r) V_x(s,x) + \frac{1}{2} \sigma^2 V_{xx}(s,x) + (l - \phi r)\right\}\notag \\
&\quad = -\expect_{s, x} \left[ \int_s^{\tau_{s, x}^{\bf \hat u}}  \delta_s(s,t) \left(\hat l_t - \phi \hat r_t \right) dt \right],
\end{align}
with boundary condition
\begin{equation}\label{simplified_extended_HJB:bound}
V(s,0) = 0,
\end{equation}
where $\hat u_s = (\hat l_s,\hat r_s)$ solves the supremum on the left hand side of \eqref{simplified_extended_HJB}.

\begin{proposition}\label{prop:simplified_HJB}
Suppose that there exists $V(s,x) \in {\cal C}^{1,2}$ and $\hat u_s$ solving \eqref{simplified_extended_HJB} with boundary condition \eqref{simplified_extended_HJB:bound}, and $\left\vert V_x \right\vert$ is bounded. Then $V$ is the equilibrium value function and $\hat u_s$ is the equilibrium strategy at time $s$ and surplus level $x$.
\end{proposition}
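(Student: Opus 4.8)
The plan is to deduce the result from the full verification theorem, Theorem \ref{thm:verification_equilibrium}, by reversing the reduction that produced \eqref{simplified_extended_HJB} from \eqref{extended_HJB:1}-\eqref{extended_HJB:bound2}. Concretely, given a solution $V \in {\cal C}^{1,2}$ of \eqref{simplified_extended_HJB}-\eqref{simplified_extended_HJB:bound} together with the feedback maximizer ${\bf\hat u} = (\hat l,\hat r)$, I would manufacture the two auxiliary functions that the full system requires and check that the resulting quadruple $(V, g, f, {\bf\hat u})$ solves \eqref{extended_HJB:1}-\eqref{extended_HJB:bound2} and satisfies its boundedness hypotheses.

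First I would define, using the controlled surplus $X^{\bf\hat u}$ driven by ${\bf\hat u}$ with ruin time $\tau_{s,x}^{\bf\hat u}$,
\begin{align*}
g(s,x;k) := \expect_{s,x}\Big[\int_s^{\tau_{s,x}^{\bf\hat u}} \delta(k,t)\big(\hat l_t - \phi\hat r_t\big)\,dt\Big], \qquad f(s,x) := g(s,x;s).
\end{align*}
By the Feynman--Kac representation, $g(\cdot,\cdot;k)$ is the classical solution of the linear equation \eqref{extended_HJB:2} with boundary \eqref{extended_HJB:bound2} (the latter because $\tau_{s,x}^{\bf\hat u} = s$ when $x = 0$), while \eqref{extended_HJB:3} holds by the very definition of $f$, and \eqref{extended_HJB:bound1} is inherited from \eqref{simplified_extended_HJB:bound}. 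Boundedness of $g, f, g_x, f_x$ follows from $\delta(k,\cdot)\in(0,1]$, $0\le\hat l_t\le\bar l$, $0\le\hat r_t\le\bar r$, and standard gradient estimates for the Feynman--Kac solution; in particular $|f(s,x)|\le\int_s^\infty\delta(s,t)\bar l\,dt<\infty$ as in Proposition \ref{prop:V_bounded}.

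The crux is to recover \eqref{extended_HJB:1}. Since $f(s,x)=g(s,x;s)$, the chain rule gives $f_s = g_s(\cdot;s)+g_k(\cdot;s)$, $f_x = g_x(\cdot;s)$ and $f_{xx}=g_{xx}(\cdot;s)$, so the $f$-block and the $g(\cdot;s)$-block inside the supremum in \eqref{extended_HJB:1} collapse to the single term $-g_k(s,x;s)$, which carries no dependence on $(l,r)$ and therefore factors out of the supremum. Differentiating the representation of $g$ in its discounting anchor $k$ and setting $k=s$ identifies $g_k(s,x;s)$ with the right-hand side of \eqref{simplified_extended_HJB}; consequently \eqref{extended_HJB:1} is exactly equivalent to \eqref{simplified_extended_HJB}, which $V$ satisfies by hypothesis. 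It remains only to upgrade the assumption ``$|V_x|$ bounded'' to ``$|V|$ bounded,'' which Theorem \ref{thm:verification_equilibrium} also requires: substituting ${\bf\hat u}$ into both \eqref{simplified_extended_HJB} and \eqref{extended_HJB:2} at $k=s$ shows that $V$ and $f$ solve one and the same linear Cauchy--Dirichlet problem with $w(s,0)=0$, so uniqueness in the class of functions with bounded $x$-gradient forces $V=f$, whence $V$ inherits the bound on $f$. With all hypotheses of Theorem \ref{thm:verification_equilibrium} verified, the conclusion that $V$ is the equilibrium value function and ${\bf\hat u}_s$ the equilibrium strategy follows immediately.

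I expect the genuine difficulty to lie in the regularity bookkeeping behind the collapse in \eqref{extended_HJB:1}: one must know that the Feynman--Kac function $g(s,x;k)$ is jointly smooth enough in all three arguments to differentiate in the anchor $k$, to interchange $\partial_k$ with $\expect_{s,x}$, and to apply the chain rule for $f_s$, and one must justify the $L^\infty$-gradient bounds and the uniqueness used to identify $V=f$. Since the pointwise maximizer of the $(l,r)$-linear integrand is bang-bang, the drift $\mu-\hat l+\hat r$ is only piecewise smooth in $x$, switching where $V_x$ crosses the levels $1$ and $\phi$; away from these switching loci the parabolic theory applies directly, and across them the requisite ${\cal C}^{1,2}$ regularity follows from the assumed smoothness of $V$ together with interior parabolic estimates.
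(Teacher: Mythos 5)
Your proposal is correct and takes essentially the same route as the paper's own proof: the paper likewise works from the probabilistic representations $g(s,x;k)=\expect_{s,x}\bigl[\int_s^{\tau_{s,x}^{\bf \hat u}}\delta(k,t)(\hat l_t-\phi\hat r_t)\,dt\bigr]$ and $f=g(\cdot,\cdot\,;s)$, notes $f_x=g_x(\cdot\,;s)$ and $f_{xx}=g_{xx}(\cdot\,;s)$, and shows via a Leibniz-rule computation that $f_s(s,x)-g_s(s,x;s)=\expect_{s,x}\bigl[\int_s^{\tau_{s,x}^{\bf \hat u}}\delta_s(s,t)(\hat l_t-\phi\hat r_t)\,dt\bigr]$ --- precisely your anchor derivative $g_k(s,x;s)$ obtained by the chain rule (and your version, differentiating only the discount factor in $k$, is arguably the cleaner way to this identity) --- which is control-independent, factors out of the supremum in \eqref{extended_HJB:1}, and reduces the extended system to \eqref{simplified_extended_HJB} so that Theorem \ref{thm:verification_equilibrium} applies. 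Your two supplementary steps, the explicit Feynman--Kac construction of $g$ and $f$ and the uniqueness argument upgrading bounded $\vert V_x\vert$ to bounded $\vert V\vert$ via $V=f$, correspond to what the paper disposes of in the remark following Theorem \ref{thm:verification_equilibrium} (boundedness of $f$, $g$ from the probabilistic interpretations) together with Proposition \ref{prop:V_bounded}, so no substantive difference of method remains.
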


Solving the supremum on the left-hand side of the simplified extended HJB equation \eqref{simplified_extended_HJB}, we obtain
\begin{equation}\label{eq:equilibrium_strategy}
\hat u_s = (\hat l_s, \hat r_s) = \left\{\begin{aligned}
& (0, \bar r), & V_x(s, x) > \phi,\\
& (0, 0), \qquad & 1 < V_x(s, x) \leq \phi,\\
& (\bar l, 0), & V_x(s, x) \leq 1.
\end{aligned}\right.
\end{equation}

\section{An explicit solution}\label{se:pseudo}
In this section, we give an explicit solution to \eqref{simplified_extended_HJB} under a pseudo-exponential discount function, which is
\begin{equation}\label{eq:pseudo_exp_discount}
\delta(s, t) = \omega e^{-\rho_1 (t-s)} + (1-\omega) e^{-\rho_2 (t-s)},
\end{equation}
where $0 \leq \omega \leq 1$. Without loss of generality, let $0 < \rho_1 \le \rho_2$.
The pseudo-exponential discount function \eqref{eq:pseudo_exp_discount} was first considered in \cite{ekeland2006being} and \cite{ekeland2008investment} who explained the rationality of such a discount function. Recently, \cite{ebert2020weighted} stated that there could be inter-personal disagreement about the discount rate in a group. Thus,
\eqref{eq:pseudo_exp_discount} can also be viewed as a special case of weighted discounting where the insurance company consists of two groups of shareholders with different discount rates.

Since the problem is time-homogeneous, the equilibrium value function independent of time $s$.
Thus, the equilibrium value function is written as $V(x)$ thereafter.
Inspired by the time-consistent benchmark that the value function is concave, we assume that $V(x)$ is concave, which will be proved a posteriori, and there exists $0 < x_1 \leq x_2$ such that $V'(x) \leq 1$ when $x \geq x_2$, $1 < V'(x) \leq \phi$ when $x_1 \leq x < x_2$, $V'(x) > \phi$ when $0 < x < x_1$. Then the equilibrium strategy \eqref{eq:equilibrium_strategy} is the following feedback control:
\begin{equation}\label{eq:equilibrium_feedback_strategy}
\hat u_s = (\hat l(x), \hat r(x)) = \left\{\begin{aligned}
& (0, \bar r), & 0 < x < x_1,\\
& (0, 0), \qquad & x_1 \leq x < x_2,\\
& (\bar l, 0), & x \geq x_2,
\end{aligned}\right.
\end{equation}
where $\hat l: (0, \infty) \mapsto [0, \bar l]$ and $\hat r: (0, \infty) \mapsto [0, \bar r]$ are two mappings.

By definition \eqref{eq:equilibrium V}, the equilibrium value function $V(x)$ under pseudo-exponential discounting can be split into two:
\begin{align*}
V(x) & = \expect_{0,x}\left[\int_0^{\tau_{0,x}^{\bf \hat u}} (\omega e^{-\rho_1 t} + (1-\omega)e^{-\rho_2 t}) \left( \hat l(X^{\bf \hat u}_{t}) - \phi \hat r(X^{\bf \hat u}_{t}) \right) dt \right]\\
& = \omega \expect_{0,x}\left[\int_0^{\tau_{0,x}^{\bf \hat u}} e^{-\rho_1 t} \left( \hat l(X^{\bf \hat u}_{t}) - \phi \hat r(X^{\bf \hat u}_{t}) \right) dt \right]\\
& \quad + (1-\omega) \expect_{0,x}\left[\int_0^{\tau_{0,x}^{\bf \hat u}} e^{-\rho_2 t} \left( \hat l(X^{\bf \hat u}_{t}) - \phi \hat r(X^{\bf \hat u}_{t}) \right) dt \right]\\
& = \omega V_1(x) + (1 - \omega) V_2(x),
\end{align*}
where
\begin{equation}\label{eq:Vi}
V_i(x) := \expect_{0,x}\left[\int_0^{\tau_{0,x}^{\bf \hat u}} e^{-\rho_i t} \left( \hat l(X^{\bf \hat u}_{t}) - \phi \hat r(X^{\bf \hat u}_{t}) \right) dt \right], \quad i = 1, 2.
\end{equation}
The proposition below follows from Proposition \ref{prop:V_bounded}, which helps to prove Proposition \ref{prop:Vi}.

\begin{proposition}\label{corollary:Vi_bounded}
Given $V_i$ defined in \eqref{eq:Vi}, $i = 1, 2$. Then $V_i$, $i = 1, 2$, are bounded:
\begin{align*}
V_i(x) \leq \frac{\bar l}{\rho_i}, \quad \forall x \in [0, \infty).
\end{align*}
\end{proposition}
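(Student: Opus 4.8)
The plan is to observe that $V_i$ is structurally identical to the equilibrium value function $V$ of \eqref{eq:equilibrium V}, the only change being that the general discount function $\delta(0,\cdot)$ is replaced by the single exponential $e^{-\rho_i\,\cdot}$ while the strategy is held fixed at the equilibrium strategy $\hat u$. Consequently the elementary estimate that proves Proposition \ref{prop:V_bounded} applies essentially verbatim, specialized to the discount $\delta(0,t)=e^{-\rho_i t}$, and I would simply record this specialization.

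First I would control the running reward pathwise. Since the feedback dividend rate obeys $\hat l(X^{\bf \hat u}_t)\le\bar l$ and the injection contribution satisfies $\phi\,\hat r(X^{\bf \hat u}_t)\ge 0$ (using $\phi>1>0$ together with $\hat r\ge 0$), we have $\hat l(X^{\bf \hat u}_t)-\phi\,\hat r(X^{\bf \hat u}_t)\le\bar l$ for every $t$; multiplying through by the nonnegative factor $e^{-\rho_i t}$ preserves the inequality, giving $e^{-\rho_i t}\big(\hat l(X^{\bf \hat u}_t)-\phi\,\hat r(X^{\bf \hat u}_t)\big)\le\bar l\,e^{-\rho_i t}$. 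Next I would enlarge the domain of integration: because $\bar l\,e^{-\rho_i t}\ge 0$, for each realization of the ruin time we have, pathwise, $\int_0^{\tau_{0,x}^{\bf \hat u}}\bar l\,e^{-\rho_i t}\,dt\le\int_0^{\infty}\bar l\,e^{-\rho_i t}\,dt=\bar l/\rho_i$. Taking expectations and chaining the two bounds then yields $V_i(x)\le\bar l/\rho_i$ uniformly in $x$.

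There is no genuine obstacle here; the upper bound is deterministic and path-independent once the integrand has been controlled, so the result is immediate. The only point that merits a line of justification is passing the inequality through the expectation, which is legitimate because the integrand's positive part is dominated by $\bar l\,e^{-\rho_i t}$ with finite integral $\bar l/\rho_i$ (so the positive part contributes a finite expectation, while a large negative part only decreases $V_i$), and because the admissibility of $\hat u$ in the sense of Definition \ref{def:admissible_control} guarantees the objects involved are well defined. This mirrors exactly the reasoning behind Proposition \ref{prop:V_bounded}, from which the claim is inherited.
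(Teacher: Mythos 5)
Your proof is correct and coincides with the paper's (largely implicit) argument: the paper simply notes the bound follows from Proposition \ref{prop:V_bounded} specialized to $\delta(0,t)=e^{-\rho_i t}$, which is exactly the elementary pathwise estimate $\hat l - \phi\hat r \le \bar l$, extension of the integration domain to $[0,\infty)$, and monotonicity of expectation that you carry out (and that also appears in the proof of Proposition \ref{prop:Vc_property}). Nothing further is needed, since the integrand is in fact uniformly bounded by $\bar l + \phi\bar r$, so even your care about the negative part is more than sufficient.
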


For fixed $0 < x_1 \le x_2$, consider the following ODEs:
\begin{equation}\label{eq:Vi_ODE}
\begin{cases}
\frac{1}{2} \sigma^2 v^{\prime\prime}_i(x) + (\mu + \bar r) v^\prime_i(x) - \rho_i v_i(x) - \phi \bar r = 0, & x \in (0, x_1),\\
\frac{1}{2} \sigma^2 v^{\prime\prime}_i(x) + \mu v^\prime_i(x) - \rho_i v_i(x) = 0, \qquad & x \in [x_1, x_2),\\
\frac{1}{2} \sigma^2 v^{\prime\prime}_i(x) + (\mu - \bar l) v^\prime_i(x) - \rho_i v_i(x) + \bar l = 0, & x \in [x_2, \infty).
\end{cases}
\end{equation}
with boundary condition
\begin{equation}\label{Vi_boundary}
v_i(0) = 0, \quad i = 1, 2.
\end{equation}
The following proposition asserts that solving $v_i$ from \eqref{eq:Vi_ODE} indeed solves the equilibrium value function.
\begin{proposition}\label{prop:Vi}
Suppose that there exist $0 < x_1 \le x_2$ such that the equilibrium strategy is given by \eqref{eq:equilibrium_feedback_strategy} and there exist functions $v_i(x)\in {\cal C}^2$ with bounded first-order derivatives that solve \eqref{eq:Vi_ODE}-\eqref{Vi_boundary}. Then $v_i = V_i$ which satisfies the probabilistic interpretation \eqref{eq:Vi} and $V(x) = \omega v_1(x) + (1-\omega) v_2(x)$ is the equilibrium value function.
\end{proposition}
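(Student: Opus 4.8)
The plan is to read Proposition \ref{prop:Vi} as two separate claims and dispatch them in order. First, since the feedback rule \eqref{eq:equilibrium_feedback_strategy} is a fixed Markov control, the object $V_i$ in \eqref{eq:Vi} is nothing more than the expected $\rho_i$-discounted payoff of the diffusion run under ${\bf \hat u}$ until ruin, and the linear ODE system \eqref{eq:Vi_ODE} is precisely its Feynman--Kac characterization: on the three regions $(0,x_1)$, $[x_1,x_2)$, $[x_2,\infty)$ the drift $\mu-\hat l+\hat r$ and running reward $\hat l-\phi\hat r$ take the constant values $(\mu+\bar r,-\phi\bar r)$, $(\mu,0)$, $(\mu-\bar l,\bar l)$, which are exactly the coefficients in the three lines of \eqref{eq:Vi_ODE}. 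So the first task is to show that the $\mathcal C^2$ solution $v_i$ of \eqref{eq:Vi_ODE}--\eqref{Vi_boundary} coincides with the probabilistic $V_i$; the second is to verify that $V=\omega v_1+(1-\omega)v_2$ solves the simplified extended HJB \eqref{simplified_extended_HJB} and then invoke the verification result Proposition \ref{prop:simplified_HJB}.

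For the first task I would run a Feynman--Kac/verification argument. Apply It\^o's formula to $e^{-\rho_i t}v_i(X^{\bf\hat u}_t)$ on $[0,\tau_{0,x}^{\bf\hat u}\wedge T]$ for a deterministic truncation $T$. Since $v_i\in\mathcal C^2$ and, under ${\bf\hat u}$, the drift and reward match the ODE coefficients region by region, the finite-variation part of the expansion collapses, by \eqref{eq:Vi_ODE}, to $-e^{-\rho_i t}(\hat l-\phi\hat r)\,dt$. Taking expectations gives $v_i(x)=\expect_{0,x}\big[\int_0^{\tau\wedge T}e^{-\rho_i t}(\hat l-\phi\hat r)\,dt\big]+\expect_{0,x}\big[e^{-\rho_i(\tau\wedge T)}v_i(X^{\bf\hat u}_{\tau\wedge T})\big]$, where the $dW$ integral has zero expectation because $v_i'$ is bounded, which is exactly where the bounded-first-derivative hypothesis enters. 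Letting $T\to\infty$, the terminal term vanishes: on $\{\tau<\infty\}$ one has $X_\tau=0$ and $v_i(0)=0$ by \eqref{Vi_boundary}, while on $\{\tau=\infty\}$ the factor $e^{-\rho_i T}$ together with boundedness of $v_i$ forces it to zero by dominated convergence. The integrability needed to take the remaining integral to its limit comes from admissibility (condition (iv) of Definition \ref{def:admissible_control}) and Proposition \ref{corollary:Vi_bounded}. This yields $v_i=V_i$ with the interpretation \eqref{eq:Vi}.

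For the second task, set $V=\omega v_1+(1-\omega)v_2=\omega V_1+(1-\omega)V_2$, which by the decomposition preceding the proposition is exactly the objective $J(\cdot,{\bf\hat u})$ evaluated at ${\bf\hat u}$. To conclude that it is the equilibrium value function I would check the hypotheses of Proposition \ref{prop:simplified_HJB}. Time-homogeneity gives $V_s\equiv0$; $V\in\mathcal C^2$ and $|V_x|=|\omega v_1'+(1-\omega)v_2'|$ is bounded because each $v_i'$ is; and $V(0)=0$ supplies \eqref{simplified_extended_HJB:bound}. It remains to check \eqref{simplified_extended_HJB} itself. Under the standing threshold structure of $V_x$ ($V_x>\phi$ on $(0,x_1)$, $1<V_x\le\phi$ on $[x_1,x_2)$, $V_x\le1$ on $[x_2,\infty)$), the supremum over $(l,r)$ on the left side is attained at ${\bf\hat u}$ in each region, and substituting the corresponding line of \eqref{eq:Vi_ODE} reduces that side to $\omega\rho_1 v_1+(1-\omega)\rho_2 v_2$. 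On the right side, $\delta_s$ obtained from \eqref{eq:pseudo_exp_discount} is the combination $\omega\rho_1 e^{-\rho_1(t-s)}+(1-\omega)\rho_2 e^{-\rho_2(t-s)}$, so the probabilistic identity $v_i=V_i$ evaluates the right side to the same expression, and the two agree. Proposition \ref{prop:simplified_HJB} then certifies that $V$ is the equilibrium value function and ${\bf\hat u}$ the equilibrium strategy.

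The main obstacle is the limiting argument in the first task, not the algebra of the second. Care is needed to (i) justify that the drift of $X^{\bf\hat u}$, which is only piecewise constant and jumps at $x_1$ and $x_2$, still admits a well-defined solution and a clean It\^o expansion, which is legitimate precisely because $v_i$ is assumed $\mathcal C^2$ across the thresholds so that no local-time correction appears; (ii) upgrade the assumed boundedness of $v_i'$ to boundedness of $v_i$ itself, which is what annihilates the terminal term on $\{\tau=\infty\}$ and which follows from $v_i'$ bounded together with the constant particular solution $\bar l/\rho_i$ on $[x_2,\infty)$; and (iii) control the martingale and boundary contributions uniformly in $T$ so the passage $T\to\infty$ is valid. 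Once $v_i=V_i$ is established, the HJB verification is a region-by-region substitution and the appeal to Proposition \ref{prop:simplified_HJB} is immediate.
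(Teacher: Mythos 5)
Your first task reproduces the paper's proof almost verbatim: the paper likewise observes that under the feedback rule \eqref{eq:equilibrium_feedback_strategy} the three-regime system \eqref{eq:Vi_ODE} collapses to a single equation with coefficients $\mu-\hat l(x)+\hat r(x)$ and $\hat l(x)-\phi\hat r(x)$, applies It\^o's formula to $e^{-\rho_i t}v_i(X^{\bf\hat u}_t)$ up to $T\wedge\tau_{0,x}^{\bf\hat u}$, uses the assumed boundedness of $v_i'$ to make the stochastic integral a true martingale, and lets $T\to\infty$ exactly ``as in the proof of Theorem \ref{thm:verification_equilibrium}'', with \eqref{Vi_boundary} killing the terminal term; your supporting remarks (i)--(iii), including extracting boundedness of $v_i$ itself from the structure of the solution on $[x_2,\infty)$, are the right details to make that limit rigorous. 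The divergence is in the second task. The paper reads the hypothesis ``the equilibrium strategy is given by \eqref{eq:equilibrium_feedback_strategy}'' as already granting that ${\bf\hat u}$ is the equilibrium, so once $v_i=V_i$ is established it concludes in one line that $V=\omega v_1+(1-\omega)v_2=J(\cdot,{\bf\hat u})$ is the equilibrium value function; it does not re-derive any HJB at this point. You instead verify \eqref{simplified_extended_HJB} region by region and invoke Proposition \ref{prop:simplified_HJB}, which buys a genuine strengthening: the equilibrium property of ${\bf\hat u}$ is certified by the verification result rather than taken as a hypothesis, at the price of the same standing assumption on the threshold structure of $V_x$, which the paper also only justifies a posteriori through \eqref{eq:x1x2} and Theorem \ref{thm:V_concave}. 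One bookkeeping point in your extra step: with $\delta_s=\partial\delta/\partial s$ one has $\delta_s(s,t)=\omega\rho_1e^{-\rho_1(t-s)}+(1-\omega)\rho_2e^{-\rho_2(t-s)}>0$, so the right-hand side of \eqref{simplified_extended_HJB} as displayed evaluates to $-\bigl(\omega\rho_1V_1+(1-\omega)\rho_2V_2\bigr)$, the negative of your left-hand evaluation $\omega\rho_1v_1+(1-\omega)\rho_2v_2$; the two do match under the identity actually derived inside the proof of Proposition \ref{prop:simplified_HJB}, where the supremum equals $f_s(s,x)-g_s(s,x;s)=\expect_{s,x}\bigl[\int_s^{\tau_{s,x}^{\bf\hat u}}\delta_s(s,t)(\hat l_t-\phi\hat r_t)\,dt\bigr]$, so you have silently corrected a sign inconsistency in the displayed equation --- flag this explicitly rather than asserting that the two sides agree without comment.
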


Solving the ODEs \eqref{eq:Vi_ODE}, we have
\begin{equation*}
V_i(x) = v_i(x) = \left\{\begin{aligned}
& -\frac{\phi \bar r}{\rho_i} + A_{i1} e^{\theta_i^+(\mu+\bar r) x} + B_{i1} e^{\theta_i^-(\mu+\bar r) x},\qquad & x \in (0, x_1),\\
& A_{i2} e^{\theta_i^+(\mu) x} + B_{i2} e^{\theta_i^-(\mu) x}, & x \in [x_1, x_2),\\
& \frac{\bar l}{\rho_i} + A_{i3} e^{\theta_i^+(\mu-\bar l) x} + B_{i3} e^{\theta_i^-(\mu-\bar l) x}, & x \in [x_2, \infty),
\end{aligned}\right.
\end{equation*}
where
$$
\theta_i^+(z) := \frac{-z + \sqrt{z^2 + 2 \rho_i \sigma^2}}{\sigma^2} > 0,\quad \theta_i^-(z) := \frac{-z - \sqrt{z^2 + 2 \rho_i \sigma^2}}{\sigma^2} < 0,
$$
where $A_{ij}$ and $B_{ij}$, $i = 1,2$, $j = 1,2,3$, are the constants to be determined.
Due to Proposition \ref{corollary:Vi_bounded}, $V_i(x)$ is bounded, which then implies that $A_{i3} = 0$.
To determine the value of $A_{i1}$, $A_{i2}$, $B_{i1}$, $B_{i2}$, $B_{i3}$, we use the boundary condition \eqref{Vi_boundary} and smooth fit principle, that is,
$$
V_i(0) = 0,\ V_i(x_1 -) = V_i(x_1), \ V^\prime_i(x_1 -) = V^\prime_i(x_1),\ V_i(x_2 -) = V_i(x_2),\ V^\prime_i(x_2 -) = V^\prime_i(x_2).
$$
Then $A_{ij}$ and $B_{ij}$ satisfy the following linear equation system:
\begin{equation*}
\left\{\begin{aligned}
& -\frac{\phi \bar r}{\rho_i} + A_{i1} + B_{i1} = 0, \\
& -\frac{\phi \bar r}{\rho_i} + A_{i1} e^{\theta_{i1} x_1} + B_{i1} e^{\theta_{i2} x_1} = A_{i2} e^{\theta_{i3} x_1} + B_{i2} e^{\theta_{i4} x_1}, \\
& A_{i1} \theta_{i1} e^{\theta_{i1} x_1} + B_{i1} \theta_{i2} e^{\theta_{i2} x_1} = A_{i2} \theta_{i3} e^{\theta_{i3} x_1} + B_{i2} \theta_{i4} e^{\theta_{i4} x_1}, \\
& A_{i2} e^{\theta_{i3} x_2} + B_{i2} e^{\theta_{i4} x_2} = \frac{\bar l}{\rho_i} + B_{i3} e^{\theta_{i5} x_2},\\
& A_{i2} \theta_{i3} e^{\theta_{i3} x_2} + B_{i2} \theta_{i4} e^{\theta_{i4} x_2} = B_{i3} \theta_{i5}  e^{\theta_{i5} x_2},
\end{aligned}\right.
\end{equation*}
where
\begin{equation}\label{eq:theta_i}
\theta_{i1} = \theta_i^+(\mu+\bar r),\ \theta_{i2} = \theta_i^-(\mu+\bar r),\ \theta_{i3} = \theta_i^+(\mu),\ \theta_{i4} = \theta_i^-(\mu),\ \theta_{i5} = \theta_i^-(\mu-\bar l).
\end{equation}
Solving the above system of equations,
\begin{equation}\label{eq:ABs}
\begin{aligned}
&  A_{i1}  = -\frac{C_{i1} \bar l + D_{i1} \phi \bar r}{\rho_i E_i},  && A_{i2}  = -\frac{C_{i2} \bar l  + D_{i2} \phi \bar r}{\rho_i E_i},&&\\
&  B_{i1}  = \frac{C_{i3} \bar l + D_{i3} \phi \bar r}{\rho_i E_i}, && B_{i2}  = \frac{C_{i4} \bar l + D_{i4} \phi \bar r}{\rho_i E_i}, &&  B_{i3}  = - e^{-\theta_{i5} x_2} \frac{C_{i5} \bar l + D_{i5} \phi \bar r}{\rho_i E_i},
\end{aligned}
\end{equation}
where
\begin{align}\label{eq:C}
C_{i1} &= C_{i3} = \theta_{i5} (\theta_{i3} - \theta_{i4}) e^{(\theta_{i3} + \theta_{i4}) x_1}, \notag \\
C_{i2} &= \theta_{i5} \left( (\theta_{i1} - \theta_{i4}) e^{(\theta_{i1} + \theta_{i4}) x_1} + (\theta_{i4} - \theta_{i2}) e^{(\theta_{i2} + \theta_{i4}) x_1} \right), \notag\\
C_{i4} &= \theta_{i5} \left((\theta_{i1} - \theta_{i3}) e^{(\theta_{i1}+\theta_{i3}) x_1} + (\theta_{i3} - \theta_{i2}) e^{(\theta_{i2}+\theta_{i3}) x_1}\right), \notag \\
C_{i5} &= \theta_{i3} (\theta_{i1} - \theta_{i4}) e^{(\theta_{i1}+\theta_{i4}) x_1 + \theta_{i3} x_2} + \theta_{i4} (\theta_{i3} - \theta_{i1}) e^{(\theta_{i1}+\theta_{i3}) x_1 + \theta_{i4} x_2} \notag \\
&\quad + \theta_{i3} (\theta_{i4} - \theta_{i2}) e^{(\theta_{i2}+\theta_{i4}) x_1 + \theta_{i3} x_2} + \theta_{i4} (\theta_{i2} - \theta_{i3}) e^{(\theta_{i2}+\theta_{i3}) x_1 + \theta_{i4} x_2},
\end{align}
\begin{align}\label{eq:D}
D_{i1} &= \theta_{i3}(\theta_{i5} - \theta_{i4}) e^{\theta_{i3} x_1 + \theta_{i4} x_2} + \theta_{i4} (\theta_{i3}  - \theta_{i5}) e^{\theta_{i4} x_1 + \theta_{i3} x_2} \notag\\
&\quad + (\theta_{i3} - \theta_{i2}) (\theta_{i4} - \theta_{i5}) e^{(\theta_{i2} + \theta_{i3}) x_1+ \theta_{i4} x_2} + (\theta_{i2} - \theta_{i4}) (\theta_{i3} - \theta_{i5}) e^{(\theta_{i2} + \theta_{i4}) x_1 + \theta_{i3} x_2}, \notag \\
D_{i2} &= \theta_{i1} (\theta_{i5} - \theta_{i4}) e^{\theta_{i1} x_1 + \theta_{i4} x_2} + \theta_{i2} (\theta_{i4} - \theta_{i5}) e^{\theta_{i2} x_1 + \theta_{i4} x_2} \notag \\
&\quad + (\theta_{i1} - \theta_{i2}) (\theta_{i4} - \theta_{i5}) e^{(\theta_{i1} + \theta_{i2}) x_1 + \theta_{i4} x_2}, \notag \\
D_{i3} &= \theta_{i3}(\theta_{i5} - \theta_{i4}) e^{\theta_{i3} x_1 + \theta_{i4} x_2} + \theta_{i4} (\theta_{i3} - \theta_{i5}) e^{\theta_{i4} x_1 + \theta_{i3} x_2} \notag \\
&\quad + ( \theta_{i3} - \theta_{i1}) (\theta_{i4} - \theta_{i5}) e^{(\theta_{i1}+\theta_{i3}) x_1 + \theta_{i4} x_2} + (\theta_{i1} - \theta_{i4}) (\theta_{i3} - \theta_{i5}) e^{(\theta_{i1}+\theta_{i4}) x_1 + \theta_{i3} x_2}, \notag \\
D_{i4} &=  \theta_{i1}(\theta_{i5} - \theta_{i3}) e^{\theta_{i1} x_1 + \theta_{i3} x_2} + \theta_{i2}(\theta_{i3} - \theta_{i5}) e^{\theta_{i2} x_1 + \theta_{i3} x_2} \notag \\
&\quad  + (\theta_{i1} - \theta_{i2}) (\theta_{i3} - \theta_{i5}) e^{(\theta_{i1}+\theta_{i2}) x_1 + \theta_{i3} x_2}, \notag \\
D_{i5} &= \theta_{i1} (\theta_{i3} - \theta_{i4}) e^{\theta_{i1} x_1 + (\theta_{i3}+\theta_{i4}) x_2} + \theta_{i2} (\theta_{i4} - \theta_{i3}) e^{\theta_{i2} x_1 + (\theta_{i3}+\theta_{i4}) x_2} \notag \\
&\quad + (\theta_{i1} -\theta_{i2}) (\theta_{i4} - \theta_{i3}) e^{(\theta_{i1}+\theta_{i2}) x_1 + (\theta_{i3}+\theta_{i4}) x_2},
\end{align}
and
\begin{align}\label{eq:CDEs}
E_i  &= (\theta_{i3} - \theta_{i1}) (\theta_{i4} - \theta_{i5}) e^{(\theta_{i1}+\theta_{i3}) x_1 + \theta_{i4} x_2}  + (\theta_{i1} - \theta_{i4}) (\theta_{i3} - \theta_{i5}) e^{(\theta_{i1} + \theta_{i4}) x_1 + \theta_{i3} x_2} \notag \\
&\quad + (\theta_{i2} - \theta_{i3}) (\theta_{i4} - \theta_{i5}) e^{(\theta_{i2}+\theta_{i3} ) x_1 + \theta_{i4} x_2} +(\theta_{i4} - \theta_{i2}) (\theta_{i3} - \theta_{i5}) e^{(\theta_{i2}+\theta_{i4}) x_1 + \theta_{i3} x_2}.
\end{align}

\begin{theorem}\label{thm:V_concave}
Suppose that there exist $0 < x_1 \le x_2$ such that the equilibrium strategy is given by \eqref{eq:equilibrium_feedback_strategy}. Then
\begin{equation}\label{eq:Vi_solution}
V_i(x) = \left\{\begin{aligned}
& -\frac{\phi \bar r}{\rho_i} + A_{i1} e^{\theta_{i1} x} + B_{i1} e^{\theta_{i2} x}, \qquad & x \in (0, x_1),\\
& A_{i2} e^{\theta_{i3} x} + B_{i2} e^{\theta_{i4} x}, & x \in [x_1, x_2),\\
& \frac{\bar l}{\rho_i} + B_{i3} e^{\theta_{i5} x}, & x \in [x_2, \infty),
\end{aligned}\right.
\end{equation}
$i = 1,2$, with $A_{ij}$, $B_{ij}$ and $\theta_{ij}$ defined in \eqref{eq:theta_i}-\eqref{eq:CDEs}. In addition, $V(x) = \omega V_1(x) + (1-\omega) V_2(x)$ is increasing and concave.
\end{theorem}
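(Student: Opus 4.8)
The plan is to split the claim into two parts: the closed-form representation \eqref{eq:Vi_solution}, and the qualitative statement that $V=\omega V_1+(1-\omega)V_2$ is increasing and concave. For the representation I would invoke Proposition \ref{prop:Vi}: it suffices to produce $C^2$ solutions of the ODE system \eqref{eq:Vi_ODE}--\eqref{Vi_boundary} with bounded first derivatives, since these are automatically the $V_i$ of \eqref{eq:Vi}. Each line of \eqref{eq:Vi_ODE} is a constant-coefficient linear ODE whose general solution is a particular constant ($-\phi\bar r/\rho_i$, $0$, and $\bar l/\rho_i$ respectively) plus a combination of the two exponentials with exponents $\theta_i^{\pm}(\cdot)$. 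Boundedness (Proposition \ref{corollary:Vi_bounded}) kills the growing exponential on $[x_2,\infty)$, forcing $A_{i3}=0$; the remaining five coefficients are then pinned down by $v_i(0)=0$ together with the $C^1$ smooth-fit at $x_1$ and $x_2$, which is exactly the linear system already solved to obtain \eqref{eq:ABs}--\eqref{eq:CDEs}. This part is routine.

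Monotonicity and concavity are immediate on $[x_2,\infty)$. Since Proposition \ref{corollary:Vi_bounded} gives $V_i(x)\le\bar l/\rho_i$ while $V_i(x)=\bar l/\rho_i+B_{i3}e^{\theta_{i5}x}$ with $\theta_{i5}<0$ there, the bound forces $B_{i3}\le 0$. Hence $V_i'(x)=B_{i3}\theta_{i5}e^{\theta_{i5}x}\ge 0$ and $V_i''(x)=B_{i3}\theta_{i5}^2e^{\theta_{i5}x}\le 0$ on $[x_2,\infty)$, so each $V_i$, and therefore $V$, is increasing and concave on that interval. On $(0,x_2)$ monotonicity also comes for free from the band inequalities attached to \eqref{eq:equilibrium_feedback_strategy} through \eqref{eq:equilibrium_strategy}: $V_x>\phi>0$ on $(0,x_1)$ and $V_x>1>0$ on $[x_1,x_2)$. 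Thus $V$ is increasing on all of $(0,\infty)$.

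The remaining and delicate task is concavity on $(0,x_2)$. First I would upgrade $V$ from $C^1$ to $C^2$: although an individual $V_i''$ jumps across each threshold, subtracting the two adjacent ODEs of \eqref{eq:Vi_ODE} gives the jumps $V_i''(x_1^+)-V_i''(x_1^-)=\frac{2\bar r}{\sigma^2}\bigl(V_i'(x_1)-\phi\bigr)$ and $V_i''(x_2^+)-V_i''(x_2^-)=\frac{2\bar l}{\sigma^2}\bigl(V_i'(x_2)-1\bigr)$. Taking the $\omega$-weighted combination and using the threshold identities $V'(x_1)=\phi$ and $V'(x_2)=1$, which hold because \eqref{eq:equilibrium_feedback_strategy} coincides with the equilibrium strategy \eqref{eq:equilibrium_strategy}, both jumps cancel, so $V\in C^2$ with $V''(x_2)\le 0$. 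I would then prove $V''\le 0$ throughout $(0,x_2)$ by a sign analysis of the explicit solution: on each of the two remaining intervals $V''$ is a linear combination of four exponentials (two per discount rate), hence a Dirichlet-type sum with a bounded number of sign changes, and I would use the known sign of $V''$ at $x_2$, the continuity of $V''$ at $x_1$ and $x_2$, and the boundary data at $0$ to exclude any interior passage of $V''$ into the positive region.

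The hard part is precisely this last step. In the time-consistent benchmark, concavity of $V_c$ is inherited directly from optimality (Proposition \ref{prop:Vc_property}) through the HJB equation; here, by contrast, $V=\omega V_1+(1-\omega)V_2$ mixes solutions of two ODEs with distinct rates $\rho_1\ne\rho_2$, so $V$ satisfies no single second-order ODE, the comparison and maximum-principle route is unavailable, and neither $V_1$ nor $V_2$ need be concave on its own. Consequently the argument cannot avoid controlling the signs of the coefficients $A_{ij},B_{ij}$ through the explicit formulas \eqref{eq:ABs}--\eqref{eq:CDEs}, and the main effort goes into showing that these conspire to keep $V''\le 0$ on $(0,x_2)$.
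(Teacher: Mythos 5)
Your reduction of the representation \eqref{eq:Vi_solution} to Proposition \ref{prop:Vi}, and your $C^2$ upgrade of $V$ at the thresholds via the jump formulas together with $V'(x_1)=\phi$, $V'(x_2)=1$, both match the paper. The genuine gap is exactly where you locate the difficulty: proving $V''\le 0$ on $(0,x_2)$. Your plan --- treat $V''$ on each band as a ``Dirichlet-type sum with a bounded number of sign changes'' and exclude an interior positive excursion from the endpoint signs and boundary data --- would fail as stated: on $[x_1,x_2)$, $V''$ is a combination of four exponentials which may change sign up to three times, so $V''<0$ at both ends of a band is perfectly compatible with a positive bump in the interior; sign-change counting plus endpoint data cannot rule this out. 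The paper closes this hole with Lemma \ref{lemma:concave}: after the substitution $\varphi(x)=\int_a^x e^{-2(\mu+\alpha)u/\sigma^2}\,du$, each $l_i(y)=v_i'(\varphi^{-1}(y))$ satisfies $l_i''(y)=[\varphi'(\varphi^{-1}(y))]^{-2}\tfrac{2\rho_i}{\sigma^2}\,l_i(y)$, so the weighted derivative obeys $l''(y)=[\varphi'(\varphi^{-1}(y))]^{-2}\tfrac{2}{\sigma^2}\left(\omega\rho_1 l_1(y)+(1-\omega)\rho_2 l_2(y)\right)$; in case (i) (used on $[x_1,x_2)$ with $\alpha=0$) the \emph{per-component} positivity $V_i'>0$ forces $l''>0$, hence $l'$ is increasing and $l'(y)<l'(\bar y-)<0$ propagates $V''<0$ backwards from the right endpoint, while case (ii) (used on $[0,x_1)$ with $\alpha=\bar r$) requires a further contradiction argument invoking $v_i(a)=0$ and the no-zero property of $l_2'$ from Shreve et al. Your proposal supplies none of the three inputs this machinery needs: (a) strictness $B_{i3}<0$ --- your boundedness argument only yields $B_{i3}\le 0$, whereas Proposition \ref{prop:parameter_sign} rules out $B_{i3}=0$ by contradiction, and strictness is essential since Lemma \ref{lemma:concave} needs $v''(b-)<0$, which fails if $B_{13}=B_{23}=0$; (b) the per-component monotonicity $V_i'>0$ on $[x_1,x_2)$ and on $[0,x_1)$, which the paper establishes by a sign case analysis on $A_{i2},B_{i2},A_{i1},B_{i1}$ using $A_{i1}>0$ and $V_i(0)=0$ --- this concerns each $V_i$ separately and does not follow from any property of $V$; (c) the propagation lemma itself.

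A secondary issue: deriving monotonicity of $V$ on $(0,x_2)$ ``for free'' from the band inequalities $V_x>\phi$ on $(0,x_1)$ and $V_x>1$ on $[x_1,x_2)$ is circular in the paper's scheme. Those inequalities are precisely the ansatz that Theorem \ref{thm:V_concave} is meant to validate a posteriori (concavity plus the threshold identities $V'(x_1)=\phi$, $V'(x_2)=1$ is what justifies \eqref{eq:equilibrium_feedback_strategy} as the pointwise maximizer), so the paper proves $V'>0$ directly from the explicit solution rather than assuming the bands. You correctly diagnose why comparison or maximum-principle arguments are unavailable (with $\rho_1\neq\rho_2$ the mixture $V$ solves no single second-order ODE, and neither $V_i$ need be concave alone) and that the coefficient signs in \eqref{eq:ABs}--\eqref{eq:CDEs} must be controlled, but diagnosing the obstacle is not overcoming it: as written, the proposal does not prove concavity on $(0,x_2)$.
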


It remains to determine the values of $x_1$ and $x_2$, which are solved from $V^\prime(x_1) = \phi$ and $V^\prime(x_2) = 1$, that is,
\begin{equation}\label{eq:x1x2}
\left\{\begin{aligned}
& \omega \left(A_{12} \theta_{13} e^{\theta_{13} x_1} + B_{12} \theta_{14} e^{\theta_{14} x_1}\right) + (1-\omega) \left(A_{22} \theta_{23} e^{\theta_{23} x_1} + B_{22} \theta_{24} e^{\theta_{24} x_1}\right) = \phi,\\
& \omega \left(A_{12} \theta_{13} e^{\theta_{13} x_2} + B_{12} \theta_{14} e^{\theta_{14} x_2}\right) + (1-\omega) \left(A_{22} \theta_{23} e^{\theta_{23} x_2} + B_{22} \theta_{24} e^{\theta_{24} x_2}\right) = 1,
\end{aligned}\right.
\end{equation}
where $A_{ij}$, $B_{ij}$, and $\theta{ij}$ are given in \eqref{eq:theta_i}-\eqref{eq:CDEs}.
With the concavity of $V$ shown in Theorem \ref{thm:V_concave}, we classify the results into three cases: (i) $0<x_1\le x_2$; (ii) $x_1=0$, $x_2 > 0$; (iii) $x_1 = x_2 = 0$. To be more precise, if there exists $0 < x_1 \le x_2$ that solve \eqref{eq:x1x2}, then the equilibrium value function $V = \omega V_1 + (1-\omega)V_2$, where $V_1$ and $V_2$ given by \eqref{eq:Vi_solution} are three-stage functions. Otherwise if there does not exist such $x_1$, $x_2$ to solve \eqref{eq:x1x2}, let $x_1 = 0$ and if there exists $x_2 > 0$ that solves
\begin{equation}\label{eq:x2}
\omega \left(A_{12} \theta_{13} e^{\theta_{13} x_2} + B_{12} \theta_{14} e^{\theta_{14} x_2}\right) + (1-\omega) \left(A_{22} \theta_{23} e^{\theta_{23} x_2} + B_{22} \theta_{24} e^{\theta_{24} x_2}\right) = 1,
\end{equation}
then $V_1$ and $V_2$ degenerate to two-stage functions. Otherwise if there does not exists such $x_2$ to solve \eqref{eq:x2}, then let $x_2 = 0$ and $V_i(x) = \frac{\bar l}{\rho_i} + B_{i3} e^{\theta_{i5} x}$ for $x\in[0, \infty)$. We summarize the equilibrium value functions and policies in the theorem below.

\begin{theorem}\label{thm:V_classification}
\begin{enumerate}
\item[(i)] Suppose there exist $0 < x_1 \le x_2 < \infty$ that solve \eqref{eq:x1x2}, with $A_{ij}$, $B_{ij}$ and $\theta_{ij}$ defined in \eqref{eq:theta_i}-\eqref{eq:CDEs}.
Then the value function $V(x) = \omega V_1(x) + (1-\omega)V_2(x)$, where $V_i(x)$ is given by \eqref{eq:Vi_solution}, $i = 1,2$. The equilibrium strategy is
\begin{align*}
\hat u_t = (\hat l(X_t^{\bf \hat u}), \hat r(X_t^{\bf \hat u})) =
\begin{cases}
(0, \bar r), & X_t^{\bf \hat u} \in (0, x_1),\\
(0, 0), \qquad & X_t^{\bf \hat u} \in [x_1, x_2),\\
(\bar l, 0), & X_t^{\bf \hat u} \in [x_2,\infty).
\end{cases}
\end{align*}
\item[(ii)] Suppose there do not exist $0 < x_1 \le x_2 < \infty$ that solve \eqref{eq:x1x2}, and there exists $0 < x_2 < \infty$ that solves \eqref{eq:x2}, with $A_{ij}$, $B_{ij}$ and $\theta_{ij}$ defined in \eqref{eq:theta_i}-\eqref{eq:CDEs} and $x_1 = 0$.
Then the value function $V(x) = \omega V_1(x) + (1-\omega)V_2(x)$, where
$$
V_i(x) = \begin{cases}
A_{i2} \left(e^{\theta_{i3} x} - e^{\theta_{i4} x}\right), & x \in (0, x_2),\\
\frac{\bar l}{\rho_i} + B_{i3} e^{\theta_{i5} x}, & x \in [x_2, \infty),
\end{cases}
$$
$i = 1,2$. The equilibrium strategy is
$$
\hat u_t = (\hat{l}(X_t^{\bf \hat u}), \hat{r}(X_t^{\bf \hat u})) = \left\{\begin{aligned}
& (0, 0), \qquad & X_t^{\bf \hat u} \in (0, x_2),\\
& (\bar l, 0), & X_t^{\bf \hat u} \in [x_2, \infty).
\end{aligned}\right.
$$
\item[(iii)] Suppose there do not exist $0 < x_1 \le x_2 < \infty$ solve \eqref{eq:x1x2} nor exist $0 < x_2 < \infty$ that solves \eqref{eq:x2}.
Then $V(x) = \omega V_1(x) + (1-\omega)V_2(x)$, where
$$
V_i(x) = \frac{\bar l}{\rho_i} + B_{i3} e^{\theta_{i5} x},
$$
$i = 1,2$, with $A_{ij}$, $B_{ij}$ and $\theta_{ij}$ defined in \eqref{eq:theta_i}-\eqref{eq:CDEs} and $x_1 = x_2 = 0$. The equilibrium strategy is
$$
\hat u_t = (\hat{l}(X_t^{\bf \hat u}), \hat{r}(X_t^{\bf \hat u})) = (\bar l, 0).
$$
\end{enumerate}
\end{theorem}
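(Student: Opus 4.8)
The plan is to treat Theorem \ref{thm:V_classification} as a verification result: in each of the three cases I would exhibit a candidate pair $(V, \hat u)$ and check that it satisfies the sufficient conditions of Proposition \ref{prop:Vi} (for the three-stage construction) or of Proposition \ref{prop:simplified_HJB} (for the degenerate constructions). The crucial observation throughout is that, by Theorem \ref{thm:V_concave}, whenever the feedback form \eqref{eq:equilibrium_feedback_strategy} holds the candidate $V = \omega V_1 + (1-\omega) V_2$ is increasing and concave; from the explicit exponential representation \eqref{eq:Vi_solution} its derivative $V'$ is in fact strictly decreasing on $(0,\infty)$ with $V'(x) \to 0$ as $x \to \infty$. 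Hence $V'$ attains each of the two threshold levels $\phi$ and $1$ at most once, and since $\phi > 1$ the $\phi$-crossing automatically lies to the left of the $1$-crossing. This monotonicity is what converts the algebraic solvability conditions in \eqref{eq:x1x2} and \eqref{eq:x2} into the geometric picture underlying the three cases.

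For case (i), I would assume a solution $0 < x_1 \le x_2 < \infty$ of \eqref{eq:x1x2} is given and define $V_i$ by \eqref{eq:Vi_solution}. The coefficients $A_{ij}, B_{ij}$ were constructed precisely from the boundary condition and the smooth-fit conditions, so $V_i \in {\cal C}^2$ and each $V_i$ solves the ODE system \eqref{eq:Vi_ODE}-\eqref{Vi_boundary}; Theorem \ref{thm:V_concave} then gives concavity and monotonicity, and $V'$ is bounded since $V$ is concave with $V'(\infty) = 0$. The two equations in \eqref{eq:x1x2} read $V'(x_1) = \phi$ and $V'(x_2) = 1$, so strict monotonicity of $V'$ yields $V'(x) > \phi$ on $(0,x_1)$, $1 < V'(x) \le \phi$ on $[x_1, x_2)$, and $V'(x) \le 1$ on $[x_2, \infty)$. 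Feeding this into \eqref{eq:equilibrium_strategy} recovers exactly the posited feedback control \eqref{eq:equilibrium_feedback_strategy}, so the construction is self-consistent and Proposition \ref{prop:Vi} applies to conclude that $V$ is the equilibrium value function and $\hat u$ the equilibrium strategy.

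Cases (ii) and (iii) are the degenerate counterparts and are where I expect the real work. For case (ii) I would set $x_1 = 0$, solve \eqref{eq:x2} for $x_2 > 0$, and use the two-stage $V_i$. Here Theorem \ref{thm:V_concave} does not literally apply (it assumes $0 < x_1$), so I would either re-run its concavity argument with the injection region absent or pass to the limit $x_1 \downarrow 0$ in the three-stage formulas. The essential consistency check is that $V'(0+) \le \phi$: if instead $V'(0+) > \phi$, continuity and strict monotonicity of $V'$ would produce a genuine crossing $x_1 \in (0, x_2)$ solving \eqref{eq:x1x2}, contradicting the hypothesis that no such pair exists. Granting $V'(0+) \le \phi$ together with $V'(x_2) = 1$, the induced response from \eqref{eq:equilibrium_strategy} is $(0,0)$ on $(0,x_2)$ and $(\bar l, 0)$ on $[x_2,\infty)$, matching the claimed two-stage strategy, and Proposition \ref{prop:simplified_HJB} (with the boundedness of $V'$ as above) finishes the case. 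Case (iii) is analogous with $x_1 = x_2 = 0$: the candidate $V_i(x) = \frac{\bar l}{\rho_i}\left(1 - e^{\theta_{i5} x}\right)$ is increasing and concave by inspection, the same contradiction argument now shows $V'(0+) \le 1$ from the non-solvability of \eqref{eq:x2}, and the induced response is the constant control $(\bar l, 0)$.

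The main obstacle I anticipate is precisely these self-consistency arguments in the degenerate regimes: I must turn the negative hypotheses (\emph{no} admissible solution of \eqref{eq:x1x2}, resp.\ \eqref{eq:x2}) into the correct one-sided bound on $V'(0+)$, while simultaneously re-establishing concavity and the ${\cal C}^2$ smooth fit at the collapsed thresholds so that the verification propositions remain applicable. Once these boundary inequalities are secured, checking that the three cases are exhaustive and mutually exclusive is immediate from the trichotomy $V'(0+) > \phi$, $\ 1 < V'(0+) \le \phi$, $\ V'(0+) \le 1$.
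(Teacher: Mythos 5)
Your overall route is the paper's: case (i) is exactly the verification already assembled in Section \ref{se:pseudo} (the explicit solution \eqref{eq:Vi_solution}, Proposition \ref{prop:Vi}, concavity and strict monotonicity of $V'$ from Theorem \ref{thm:V_concave}, and the reading of \eqref{eq:x1x2} as $V'(x_1)=\phi$, $V'(x_2)=1$), and for cases (ii)--(iii) the paper does the first of your two suggested options: it sets $x_1=0$, computes the degenerate coefficients explicitly (obtaining $A_{i2}>0$ and $B_{i3}<0$, and in case (iii) $V_i(x)=\frac{\bar l}{\rho_i}(1-e^{\theta_{i5}x})$), and re-runs the concavity argument of Theorem \ref{thm:V_concave} rather than passing to the limit $x_1\downarrow 0$.

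However, your key self-consistency device in the degenerate cases has a concrete flaw. Equation \eqref{eq:x1x2} is not the statement ``$V'(x_1)=\phi$ and $V'(x_2)=1$'' for a \emph{fixed} function $V$: the coefficients $A_{ij}$, $B_{ij}$ appearing in \eqref{eq:x1x2} are themselves functions of $(x_1,x_2)$ through the smooth-fit formulas \eqref{eq:ABs}--\eqref{eq:CDEs}, so \eqref{eq:x1x2} is a coupled fixed-point system. Consequently, if the two-stage candidate of case (ii) (built with $x_1=0$ and its own coefficients) had $V'(0+)>\phi$, the interior crossing point you produce satisfies $V'(x_1)=\phi$ for the \emph{wrong} $V$; the pair $(x_1,x_2)$ need not solve \eqref{eq:x1x2}, whose left-hand side must be evaluated with the three-stage coefficients at $(x_1,x_2)$. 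The same objection applies to your case (iii) argument, since \eqref{eq:x2} is itself a fixed-point equation in $x_2$ (its $A_{i2}$, $B_{i2}$ depend on $x_2$), and it renders your closing trichotomy on $V'(0+)$ ill-posed, because $V$ is a different function in each case. A correct repair is the scalar analysis the paper carries out for case (iii) in the remark after the theorem: non-solvability of \eqref{eq:x2} is shown, via the equivalence with \citet[Lemma 4.1]{zhao2014dividend}, to coincide with $\omega\theta_{15}\bar l/\rho_1+(1-\omega)\theta_{25}\bar l/\rho_2+1\ge 0$, which is precisely the needed bound $V'(0+)\le 1$ for the case (iii) candidate; the analogous bound $V'(0+)\le\phi$ in case (ii) would require a time-inconsistent counterpart of Lemmas \ref{lemma:f_c}--\ref{lemma:g_c} and Proposition \ref{thm:xrxl}. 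To be fair, the published proof also leaves the case (ii) bound implicit, so your proposal is no weaker than the paper there --- but the level-crossing contradiction as you wrote it does not close that gap.
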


Theorem \ref{thm:V_classification} case $(i)$ shows the equilibrium strategy is to inject capital at the maximal rate once the surplus is below the threshold $x_1$ and to pay the dividend at the maximal rate once the surplus exceeds the threshold $x_2$. For case $(ii)$ and $(iii)$, the equilibrium strategy does not require capital injection even if the insurance company is at the ruin time and thus, the problem is the same as the pure dividend problem without capital injection in \cite{zhao2014dividend}. It is straightforward to verify that the existence and uniqueness of a positive solution to \eqref{eq:x2} coincides with the condition in \citet[Lemma 4.1]{zhao2014dividend}: If $\omega \theta_{15} \bar l/\rho_1 + (1-\omega) \theta_{25} \bar l/\rho_2 + 1 < 0$, there exists a unique $x_2 > 0$ solving \eqref{eq:x2} and the equilibrium strategy belongs to case $(ii)$; otherwise, there does not exist a positive solution to \eqref{eq:x2} and the equilibrium strategy belongs to case $(iii)$.

\section{Numerical examples}\label{se:numericalexample}

In this section, we illustrate the equilibrium dividend and capital injection policy and value function with numerical examples.

Figure \ref{fig:equilibrium_V} plots equilibrium strategies and corresponding value functions according to Theorem \ref{thm:V_classification}. Let $\mu = 1$, $\sigma = 1$, $\omega = 0.3$, $\rho_1 = 0.6$, $\rho_2 = 1$, which are the parameter values commonly used in the literature.
Note that all the equilibrium value functions are increasing and concave in the surplus level and the equilibrium dividend and capital injection strategies are of threshold type.
First, for $\bar l = \bar r = 1$ and $\phi = 1.2$, one obtains that $x_1 = 0.1916$ and $x_2 = 0.317$, which implies that the equilibrium strategy is to inject capital when surplus is less than 0.1916 and pay dividends when surplus is greater than 0.317; see the left panels of Figure \ref{fig:equilibrium_V}.
Second, for $\bar l = \bar r = 1$ and $\phi = 2$, one obtains that $x_1 = 0$ and $x_2 = 0.3204$, which implies that there is no capital injection while dividend is paid when surplus is greater than 0.3204; see the middle panels of Figure \ref{fig:equilibrium_V}. This is because $\phi$ is relatively high, it is even better to let the insurance company go bankrupt rather than to inject capital.
Third, for $\bar l = 0.4$, $\bar r = 1$, and $\phi = 1.2$, one obtains that $x_1 =  x_2 = 0$, which implies that the equilibrium strategy is always to pay dividends at any positive surplus level; see the right panels of Figure \ref{fig:equilibrium_V}. This is because the maximal rate $\bar l$ compared with the discount rate is small and it is not economical to distribute the reserves afterward.

\begin{figure}[!htbp]
\begin{minipage}[t]{0.329\textwidth}
\centering
\includegraphics[width=\textwidth]{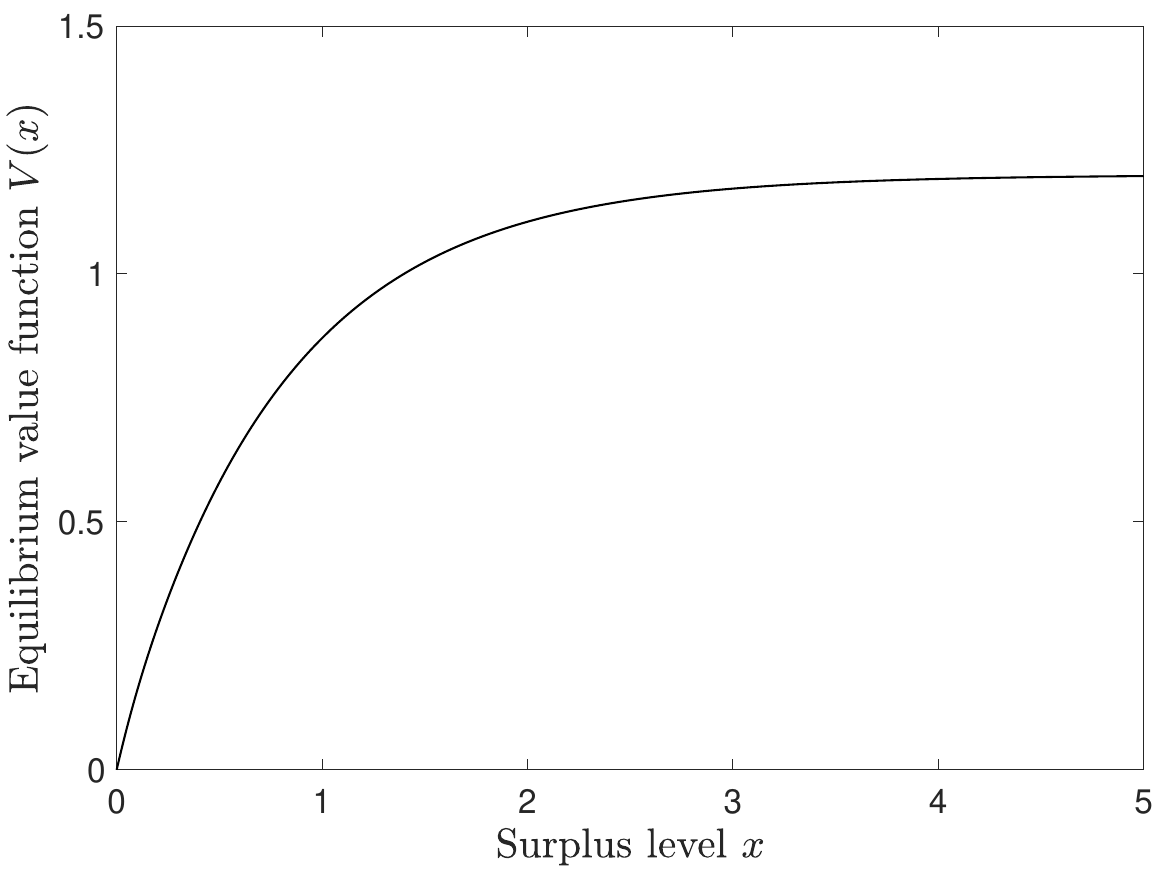}
\end{minipage}
\begin{minipage}[t]{0.329\textwidth}
\centering
\includegraphics[width=\textwidth]{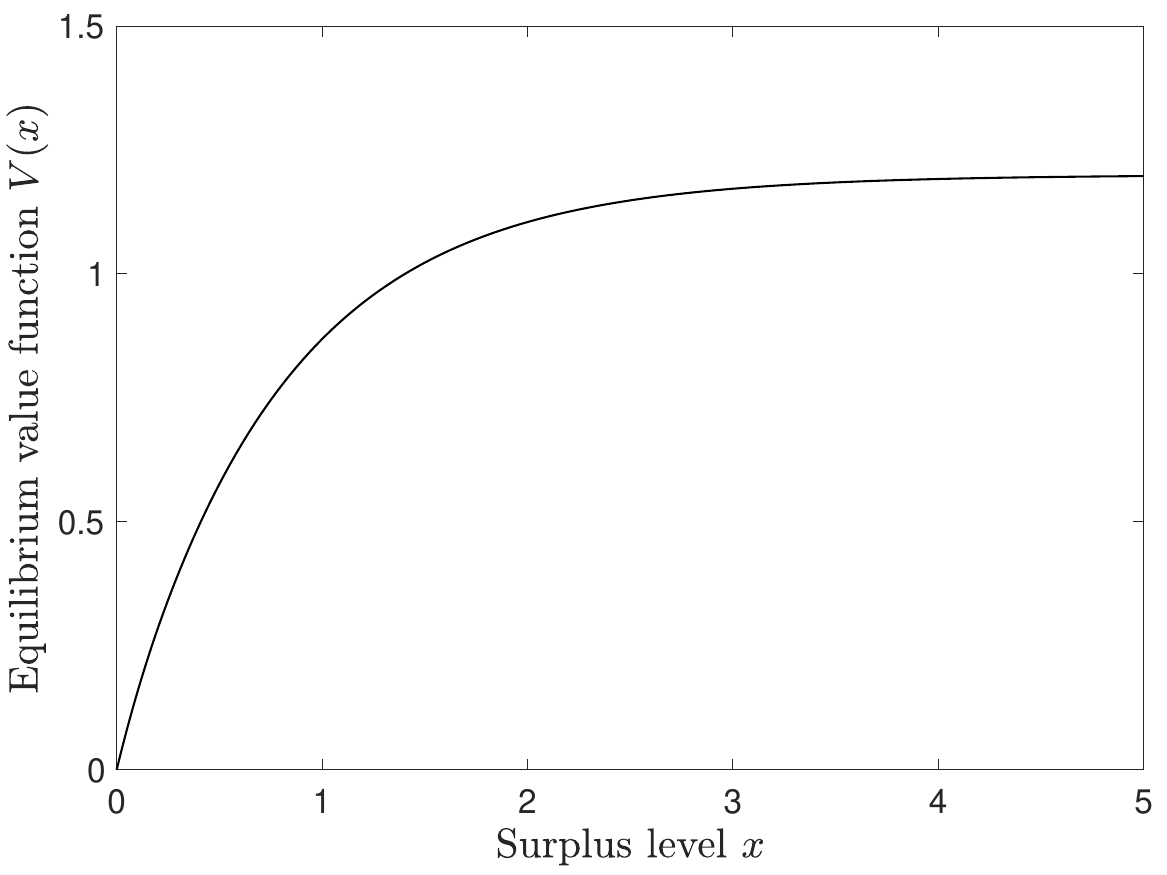}
\end{minipage}
\begin{minipage}[t]{0.329\textwidth}
\centering
\includegraphics[width=\textwidth]{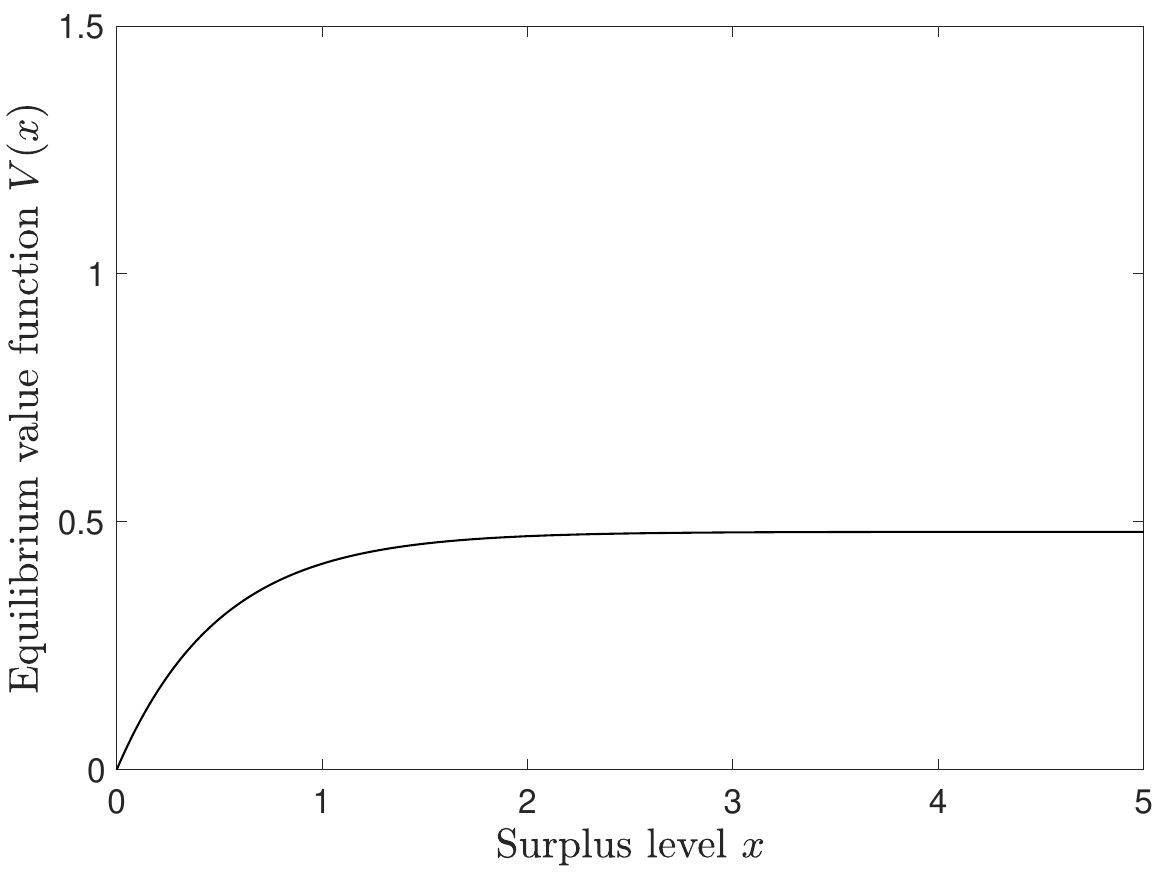}
\end{minipage}
\begin{minipage}[t]{0.329\textwidth}
\centering
\includegraphics[width=\textwidth]{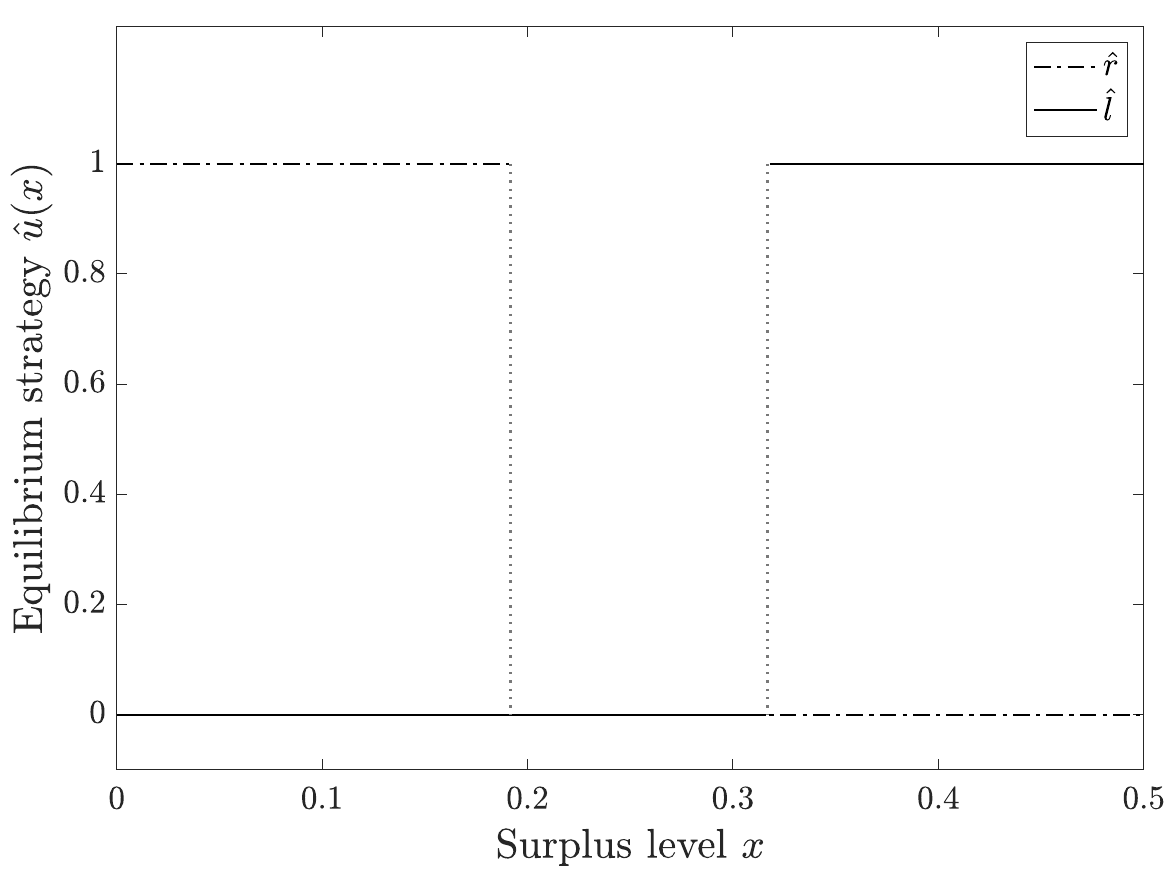}
\end{minipage}
\begin{minipage}[t]{0.329\textwidth}
\centering
\includegraphics[width=\textwidth]{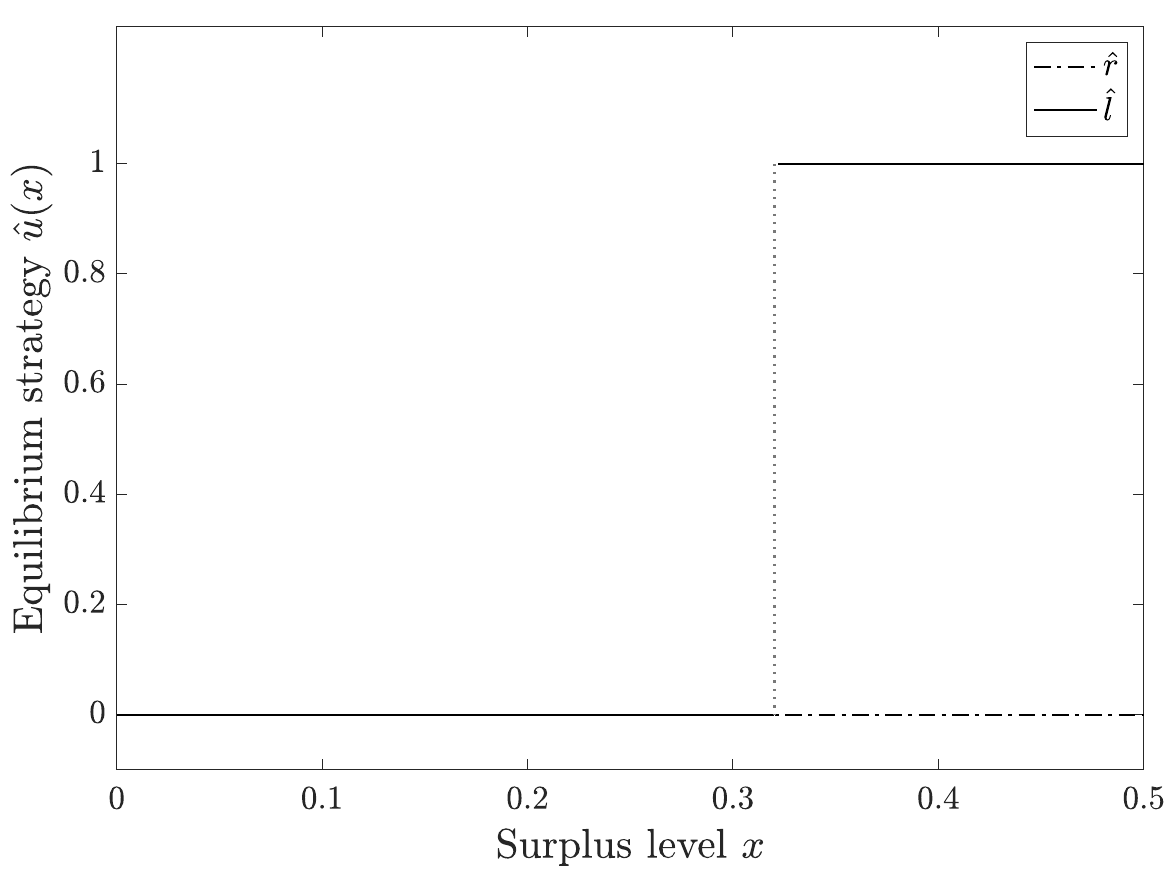}
\end{minipage}
\begin{minipage}[t]{0.329\textwidth}
\centering
\includegraphics[width=\textwidth]{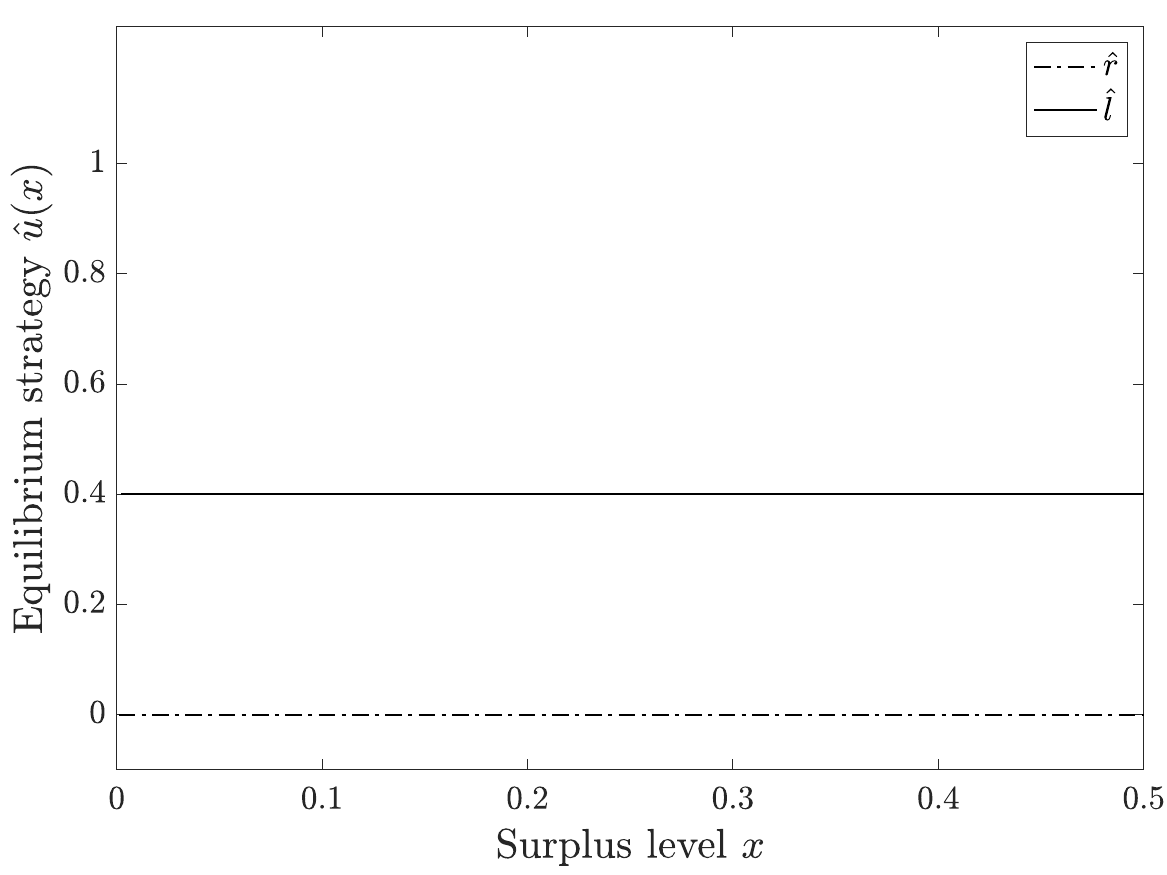}
\end{minipage}
\caption{The equilibrium value function $V$ (top) and the equilibrium strategy $\hat{u}(x)$ (bottom) for $\mu = \sigma = 1$, $\omega = 0.3$, $\rho_1 = 0.6$, $\rho_2 = 1$, $\bar l = \bar r = 1$, $\phi = 1.2$ (left panels), $\bar l = \bar r = 1$, $\phi = 2$ (middle panels), and $\bar l = 0.4$, $\bar r = 1$, $\phi = 1.2$ (right panels), respectively.}
\label{fig:equilibrium_V}
\end{figure}

Figure \ref{fig:different_l} left panel shows that for different combinations of maximal dividend rate $\bar l$ and capital injection cost $\phi$, there are three regions of the thresholds $x_1$ and $x_2$ which correspond to the three cases in Theorem \ref{thm:V_classification}. Let $\mu = \sigma = 1$, $\omega = 0.3$, $\rho_1 = 0.6$, $\rho_2 = 1$, $\bar r = 1$. First, if the maximal dividend rate $\bar l$ is less than 0.4, then the equilibrium strategy belongs to case $(iii)$, regardless of the capital injection cost, under which the dividend is always paid at the maximal rate due to the relatively low $\bar l$. Second, if the maximal dividend rate $\bar l$ is larger than 0.4 and the capital injection cost is relatively high, then the equilibrium strategy belongs to case $(ii)$, which means that there is no capital injection due to the high cost, while the dividend is paid as long as the surplus exceeds threshold $x_2$. Third, if the maximal rate $\bar l$ is larger than 0.4 and the capital injection cost is relatively low, then the equilibrium strategy belongs to case $(i)$, which means that the capital is injected once the surplus is below threshold $x_1$, dividend is paid if the surplus exceeds threshold $x_2$, and nothing did in between. The right panel of Figure \ref{fig:different_l} shows that how thresholds $x_1$ and $x_2$ are varying with respect to $\bar l$ when fixing $\phi = 1.2$. In particular, both $x_1$ and $x_2$ are increasing in $\bar l$ with the difference between two thresholds getting larger as well.
\begin{figure}[!htbp]
\centering
\begin{minipage}{0.45\linewidth}
\includegraphics[width=\linewidth]{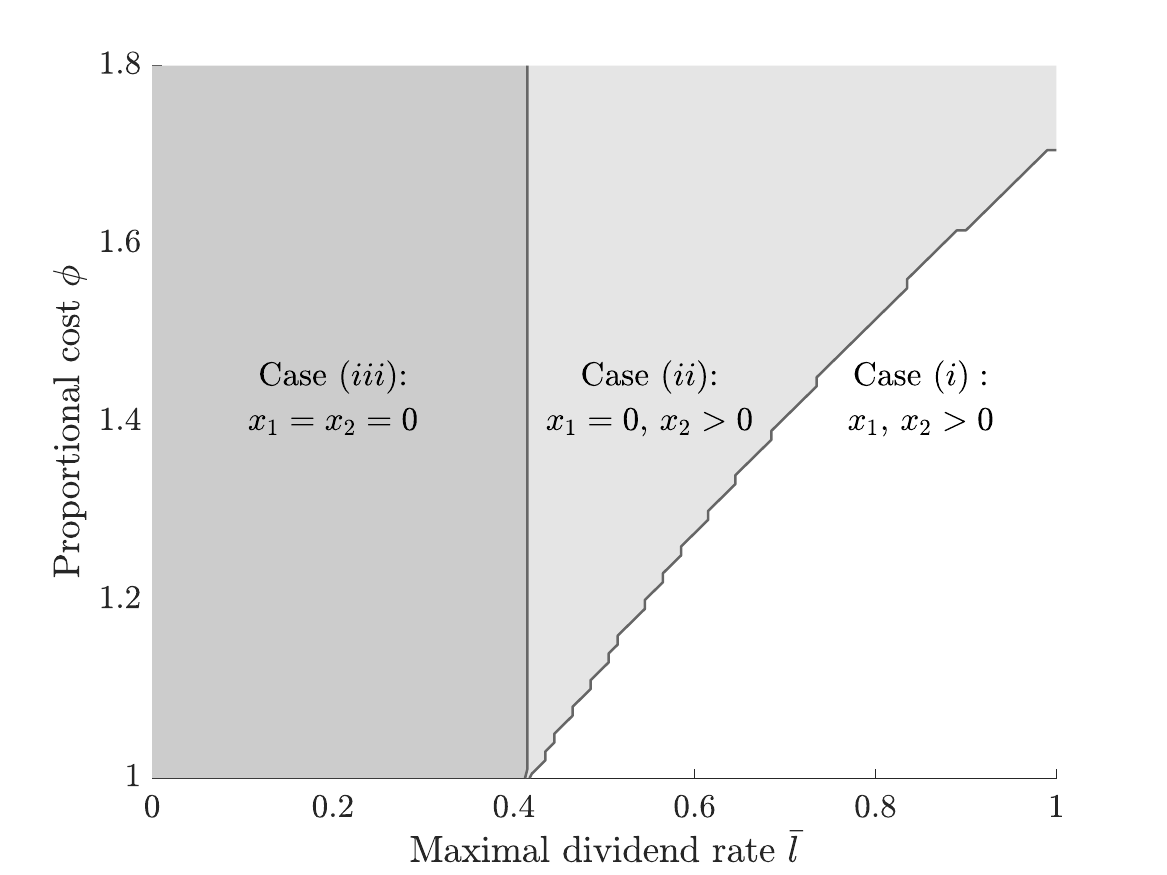}
\end{minipage}
\begin{minipage}{0.45\linewidth}
\includegraphics[width=\linewidth]{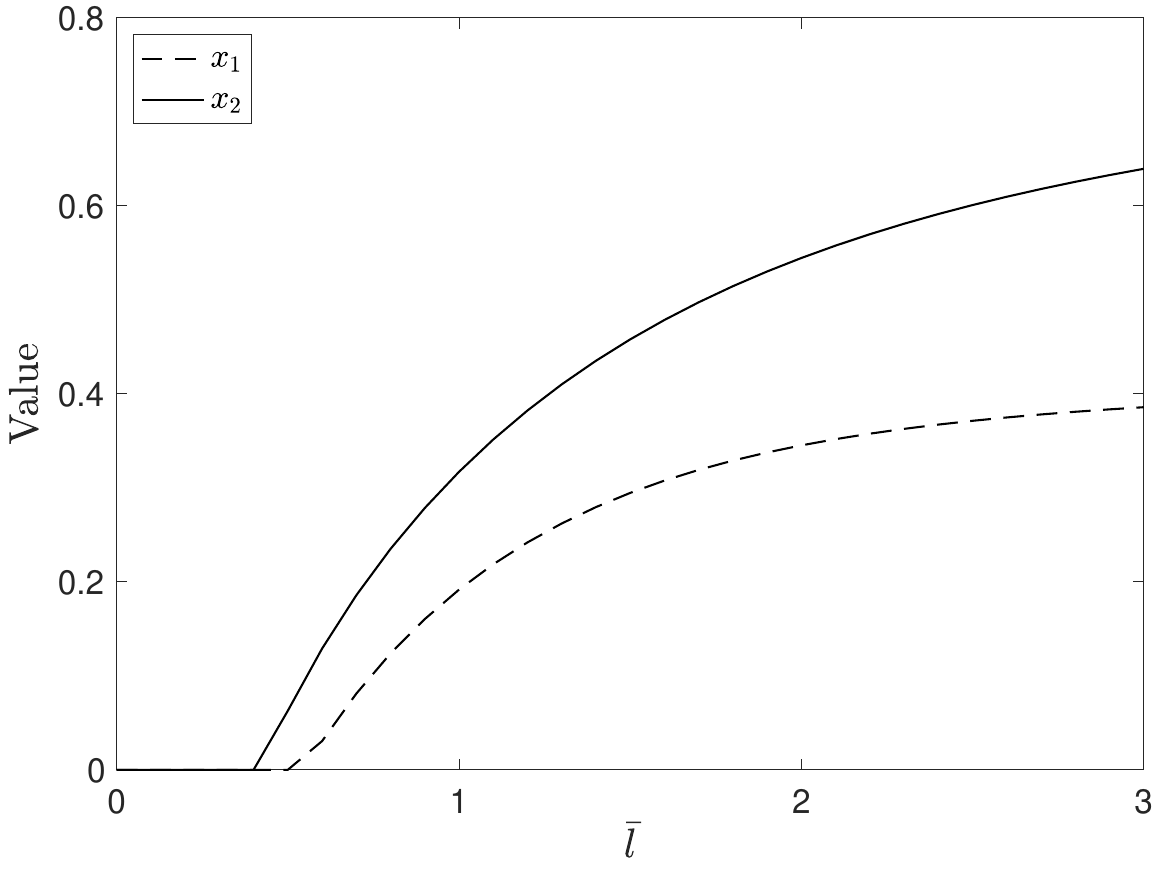}
\end{minipage}
\caption{The region of the three cases (left panel) and thresholds $x_1$ and $x_2$ (right panel) for $\mu = \sigma = 1$, $\omega = 0.3$, $\rho_1 = 0.6$, $\rho_2 = 1$, $\bar r = 1$, $\phi = 1.2$ (when varying $\bar l$ only) given different values of $\bar l$ and $\phi$.}
\label{fig:different_l}
\end{figure}

Figure \ref{fig:different_phi} shows the impact from the capital injection cost on the equilibrium value function $V$, capital injection threshold $x_1$, and dividend threshold $x_2$.
Let $\mu = \sigma = 1$, $\omega = 0.3$, $\rho_1 = 0.6$, $\rho_2 = 1$, $\bar l = \bar r = 2$.
The equilibrium value function $V$ is decreasing in the proportional cost $\phi$, as shown in the left panel in Figure \ref{fig:different_phi}.
The right panel shows that $x_1$ is decreasing in $\phi$, and $x_2$ is increasing in $\phi$.
Moreover, when $\phi = 1$, which means that there is no cost for capital injection, it can be inferred from \eqref{eq:equilibrium_strategy} that $x_1$ equal to $x_2$, and hence, the equilibrium strategy is either injecting capital or paying dividends based on whether the surplus is less or greater than the threshold 0.475.
For $\phi \ge 2.1577$, one obtains that $x_1 = 0$, meaning that the equilibrium strategy never has capita injection, and the problem degenerates to a pure dividend problem; e.g., when $\phi = 2.5$, the equilibrium value function shown in the left panel of Figure \ref{fig:different_phi} is equivalent to the one in \cite{zhao2014dividend}. For $\phi < 2.1577$, it is straightforward to see that the equilibrium value function with dividend and capital injection outperforms the one with dividend strategy only.

\begin{figure}[!htbp]
\centering
\begin{minipage}{0.45\linewidth}
\includegraphics[width=\linewidth]{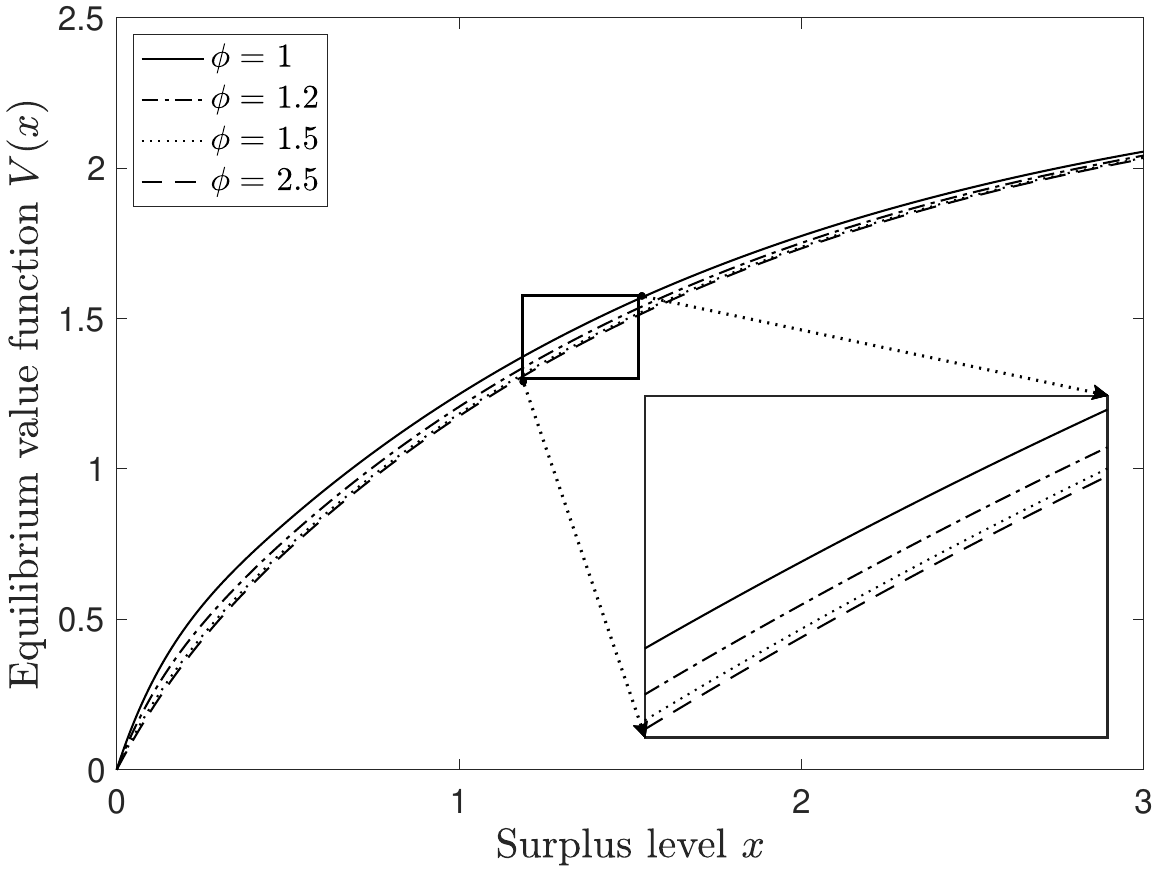}
\end{minipage}
\begin{minipage}{0.45\linewidth}
\includegraphics[width=\linewidth]{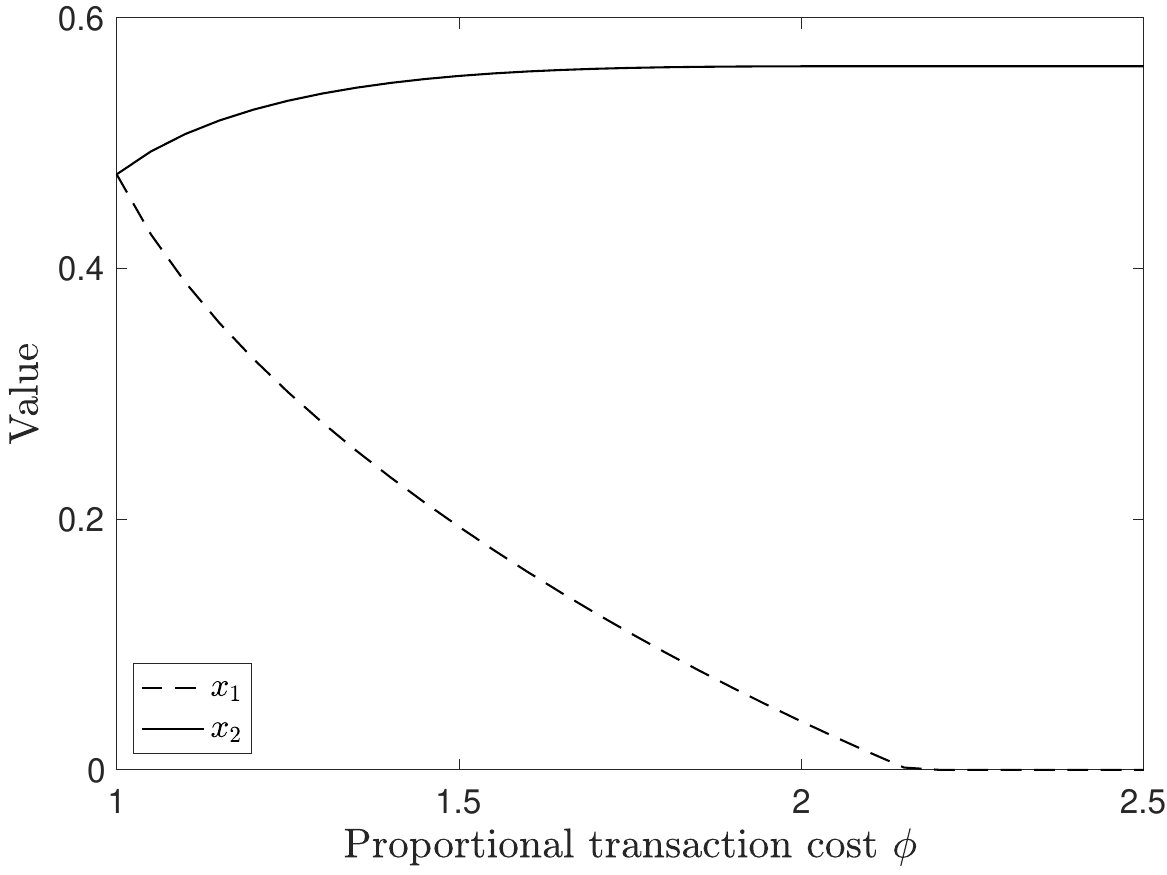}
\end{minipage}
\caption{The equilibrium value function $V$ (left panel) and thresholds $x_1$ and $x_2$ (right panel) for $\mu = \sigma = 1$, $\omega = 0.3$, $\rho_1 = 0.6$, $\rho_2 = 1$, $\bar l = \bar r = 2$ given different values of $\phi$.}
\label{fig:different_phi}
\end{figure}

Figure \ref{fig:different_musigma} shows the impact from the drift rate $\mu$ and volatility $\sigma$ on the equilibrium value function $V$, capital injection threshold $x_1$, and dividend threshold $x_2$.
Let $\mu = 1$ when varying $\sigma$ and $\sigma = 1$ when varying $\mu$, $\omega = 0.3$, $\rho_1 = 0.6$, $\rho_2 = 1$, $\bar l = \bar r = 1$, $\phi = 1.2$.
The equilibrium value function $V$ is increasing in  $\mu$ and decreasing in $\sigma$, as shown in the left panel in Figure \ref{fig:different_musigma}.
The thresholds $x_1$ and $x_2$ are first increasing then decreasing in both $\mu$ and $\sigma$, as shown in the right panel in Figure \ref{fig:different_musigma}.

\begin{figure}[!htbp]
\centering
\begin{minipage}{0.45\linewidth}
\includegraphics[width=\linewidth]{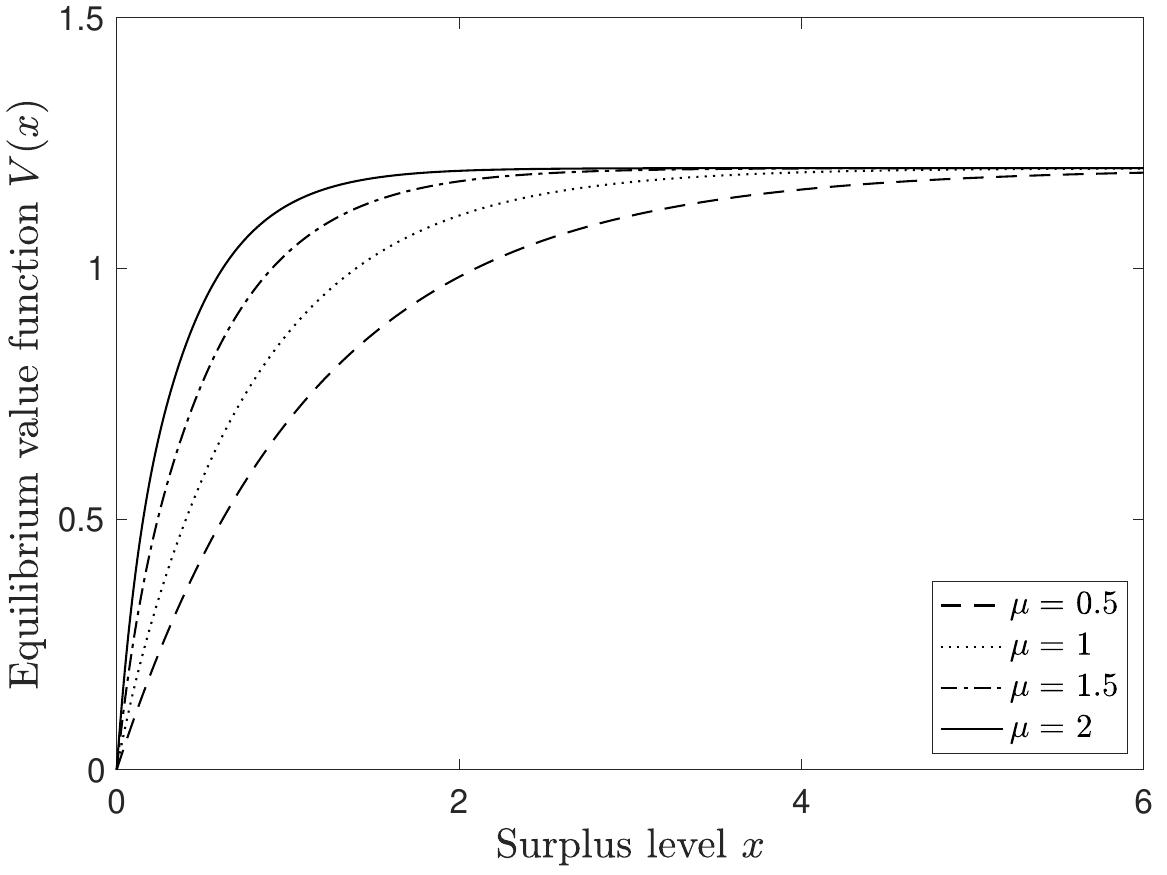}
\end{minipage}
\begin{minipage}{0.45\linewidth}
\includegraphics[width=\linewidth]{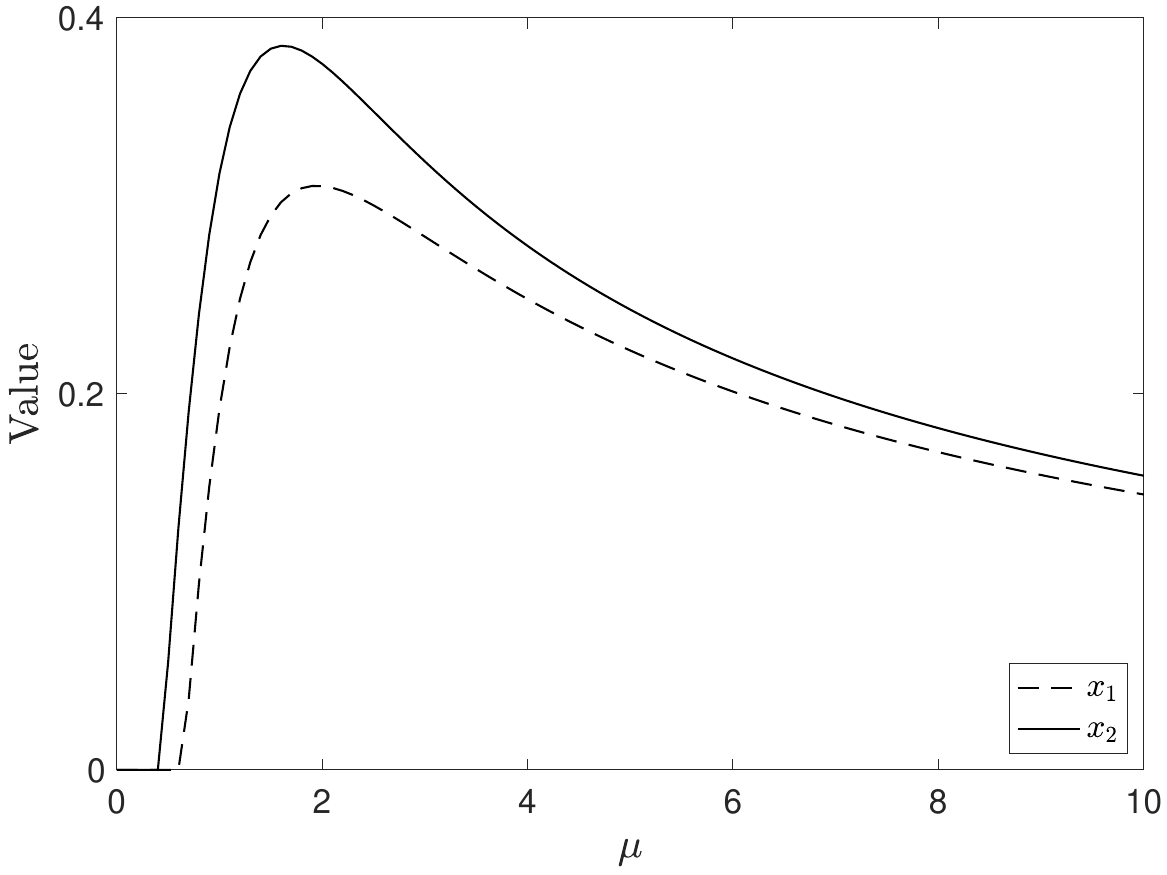}
\end{minipage}
\begin{minipage}{0.45\linewidth}
\includegraphics[width=\linewidth]{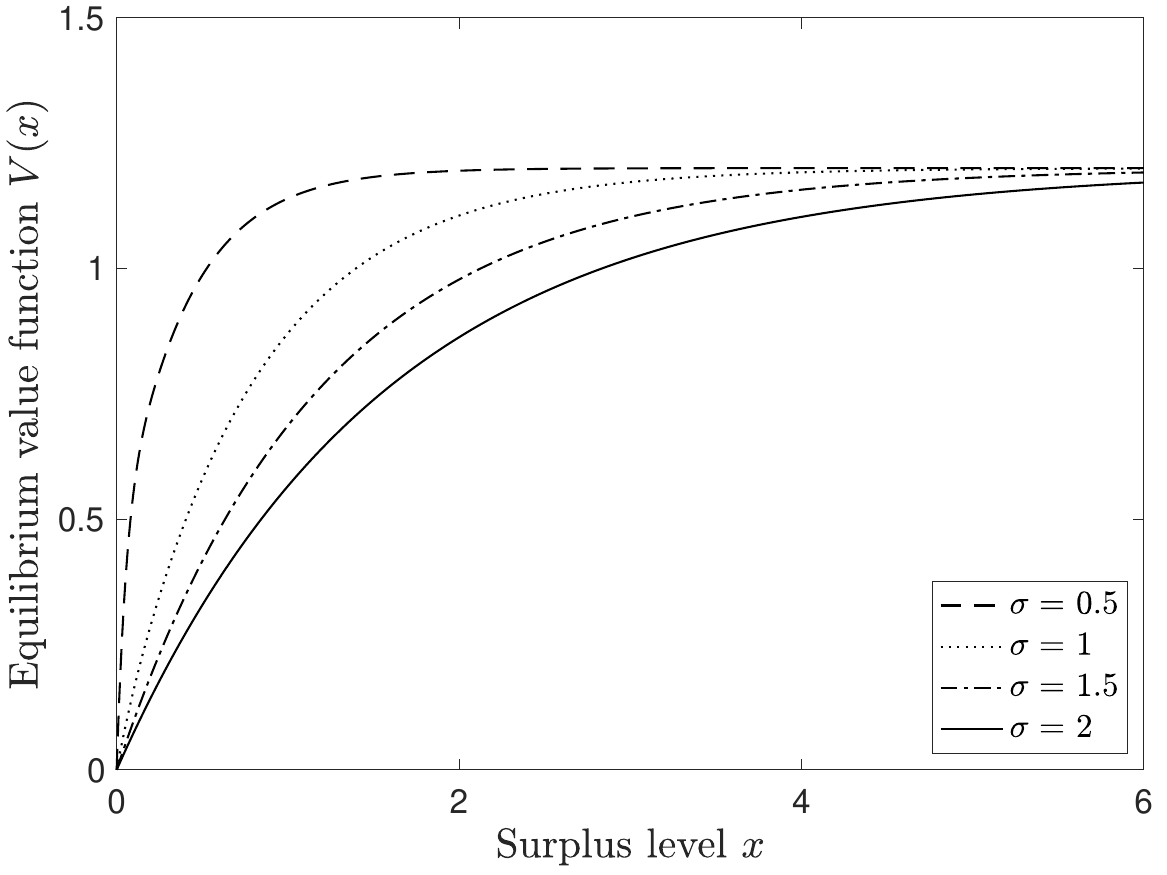}
\end{minipage}
\begin{minipage}{0.45\linewidth}
\includegraphics[width=\linewidth]{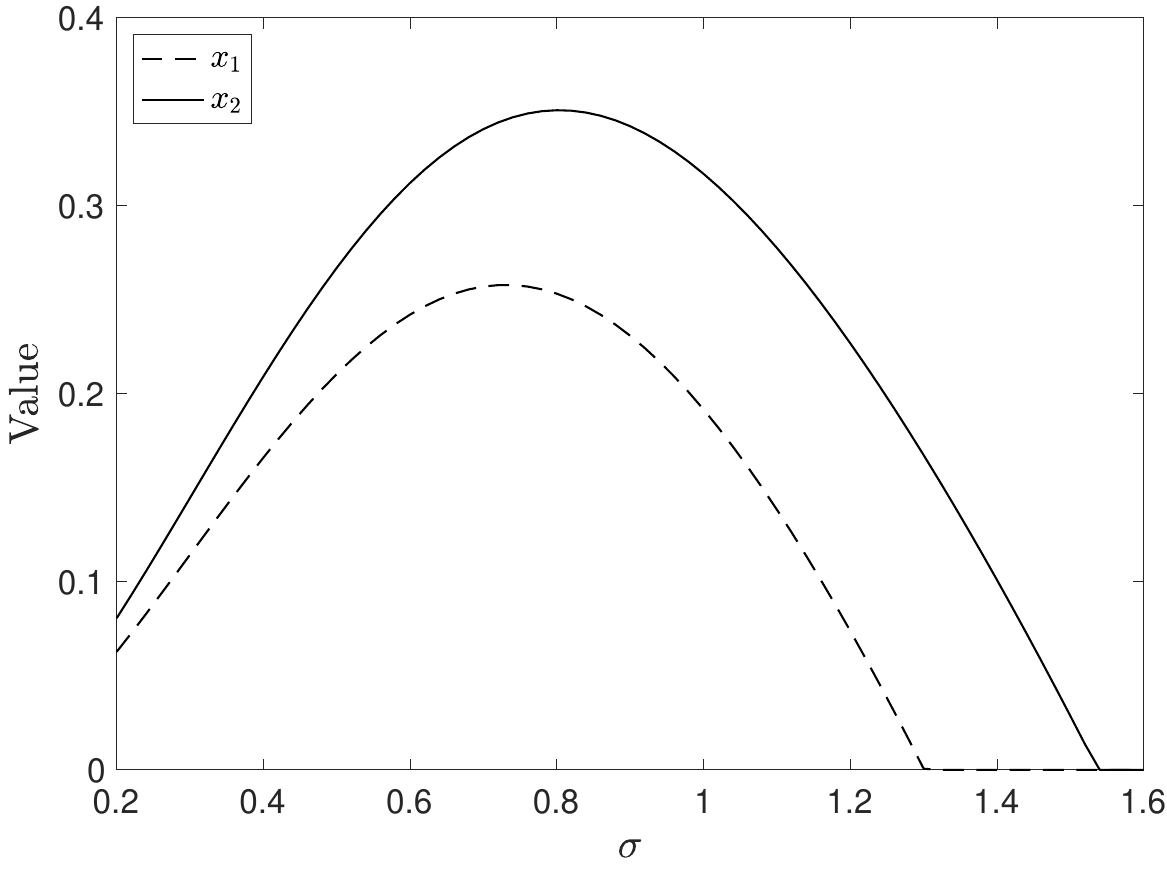}
\end{minipage}
\caption{The equilibrium value function $V$ (left panel) and thresholds $x_1$ and $x_2$ (right panel) for $\mu = 1$ (when varying $\sigma$), $\sigma = 1$ (when varying $\mu$), $\omega = 0.3$, $\rho_1 = 0.6$, $\rho_2 = 1$, $\bar l = \bar r = 2$,  $\phi = 1.2$ given different values of $\mu$ (top) and $\sigma$ (bottom).}
\label{fig:different_musigma}
\end{figure}

Figure \ref{fig:different_w} shows the impact from $\omega$ on the equilibrium value function $V$ (left panel) and thresholds $x_1$ and $x_2$ (right panel).
Let $\mu = \sigma = 1$, $\rho_1 = 0.6$, $\rho_2 = 1$, $\bar l = \bar r = 1$, $\phi = 1.2$.
The equilibrium value $V$ increases with $\omega$, where $\omega = 0, 1$ degenerate to the time-consistent problem.
Moreover, the right panel of Figure \ref{fig:different_w} shows that both $x_1$ and $x_2$ increase with the weight $\omega$.
This is because when $\omega$ increases, more weight will be put on the lower discount rate $\rho_1$, and hence, it is better to raise the thresholds for capital injection and dividend in order to enjoy more potential dividends.

\begin{figure}[!htbp]
\centering
\begin{minipage}{0.45\linewidth}
\includegraphics[width=\linewidth]{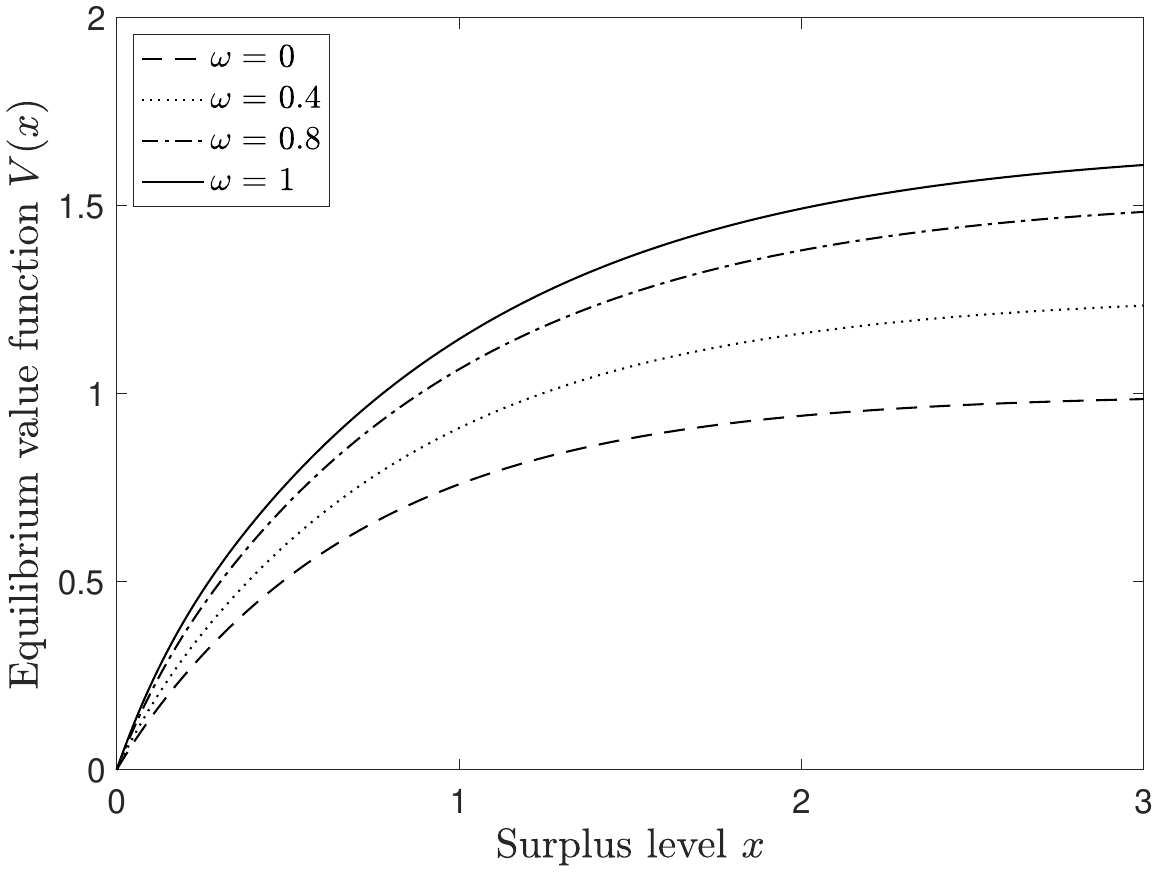}
\end{minipage}
\begin{minipage}{0.45\linewidth}
\includegraphics[width=\linewidth]{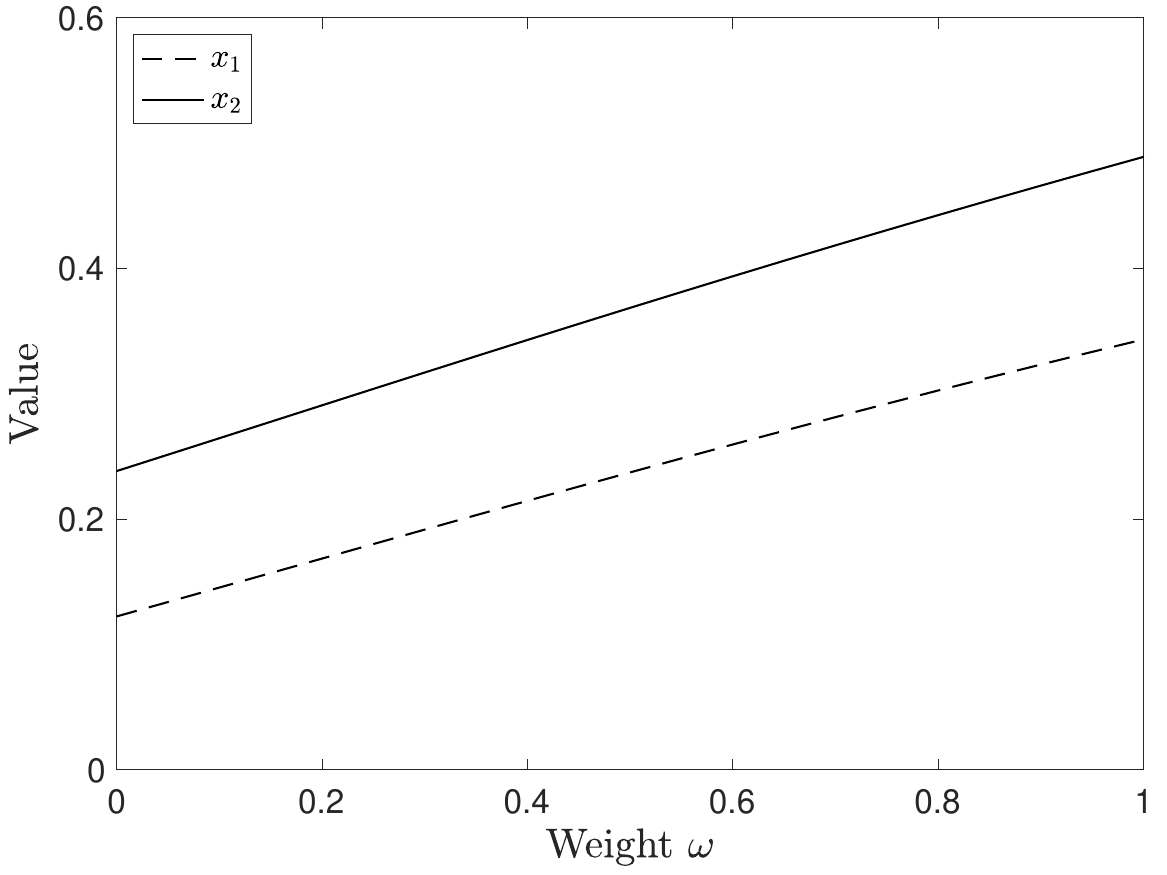}
\end{minipage}
\caption{The equilibrium value function $V$ (left panel) and thresholds $x_1$ and $x_2$ (right panel) for $\mu = \sigma = 1$, $\rho_1 = 0.6$, $\rho_2 = 1$, $\bar l = \bar r = 1$, $\phi = 1.2$ given different values of $\omega$.}
\label{fig:different_w}
\end{figure}

Finally, we vary the values of $\rho_1$ and $\rho_2$ while keeping the weighted average discount rate unchanged, i.e., $\omega \rho_1 + (1-\omega) \rho_2 = 0.8$.
Let $\mu = \sigma = 1$, $\omega = 0.5$, $\bar l = \bar r = 1$, $\phi = 1.2$. The left panel of Figure \ref{fig:different_rho} exhibits that the larger the difference between $\rho_1$ and $\rho_2$, the greater the equilibrium value function, with $\rho_1 = \rho_2 = 0.8$ being the time-consistent counterpart.
This implies that a larger difference in the shareholders' discount rates can increase the equilibrium value.
The right panel of Figure \ref{fig:different_rho} shows that the increase in the difference between $\rho_1$ and $\rho_2$ will raise the thresholds $x_1$ and $x_2$. The reason is that the larger the difference between $\rho_1$ and $\rho_2$, the less the discount is applied to future cashflows, and hence, the thresholds for both capital injections and dividends is raised to reduce the ruin probability to seek for more potential dividends in the future.

\begin{figure}[!htbp]
\centering
\begin{minipage}{0.45\linewidth}
\includegraphics[width=\linewidth]{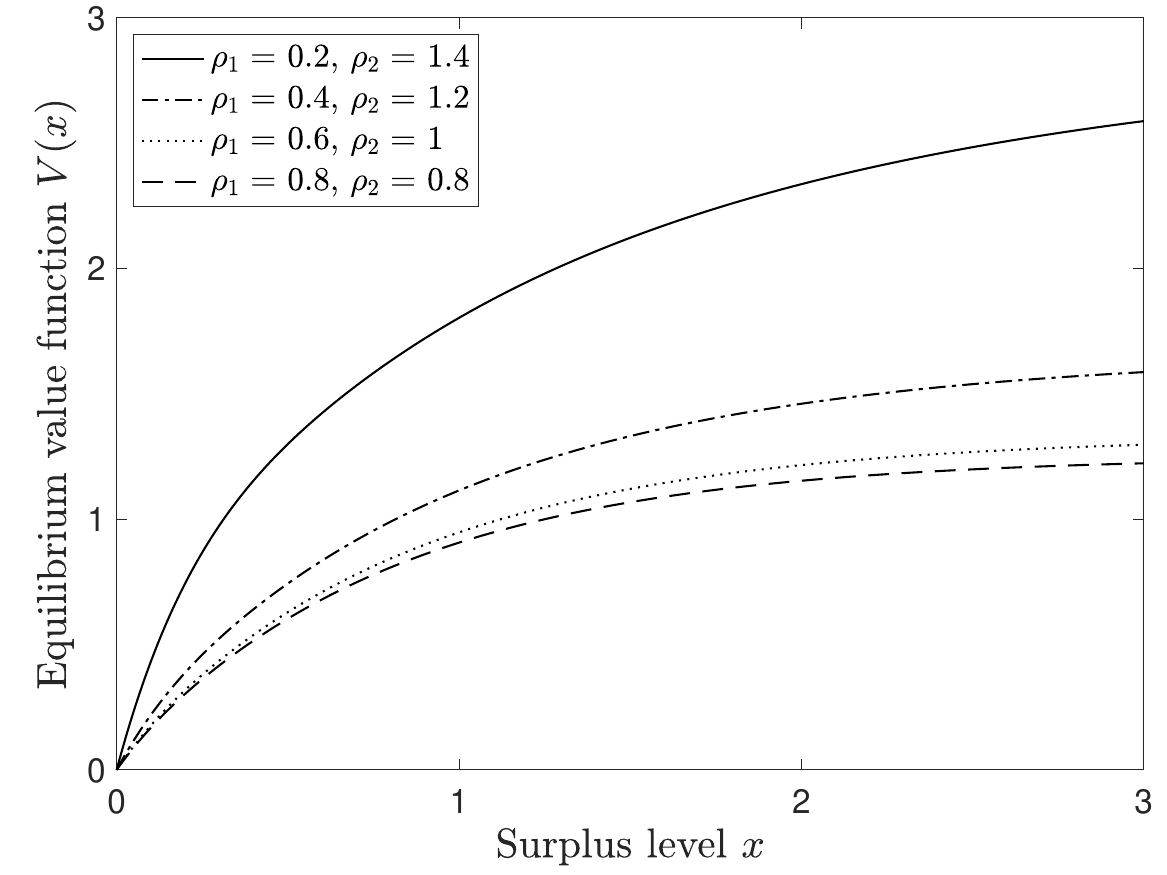}
\end{minipage}
\begin{minipage}{0.45\linewidth}
\includegraphics[width=\linewidth]{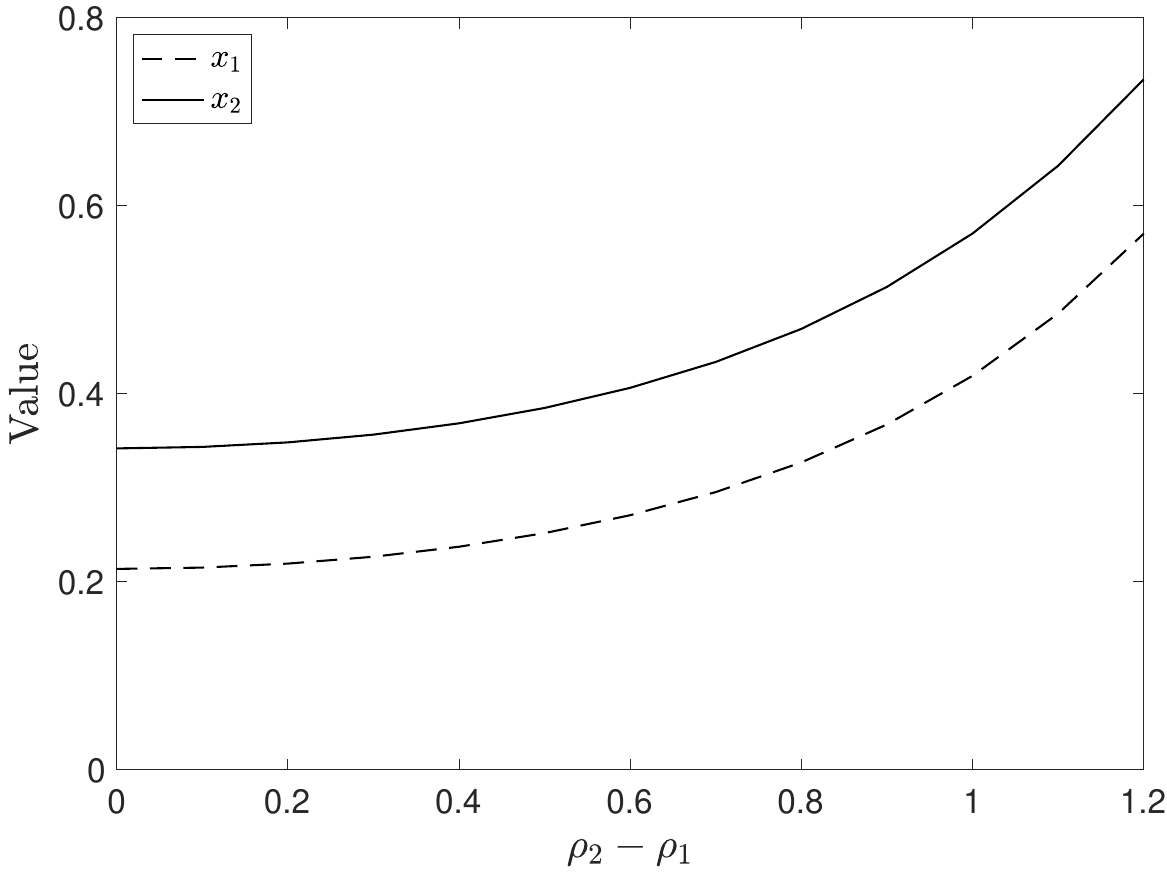}
\end{minipage}
\caption{The equilibrium value function $V$ (left panel) and thresholds $x_1$ and $x_2$ (right panel) for $\mu = \sigma = 1$, $\omega = 0.5$, $\bar l = \bar r = 1$, $\phi = 1.2$ given different combinations of $\rho_1$ and $\rho_2$ subjected to $\omega \rho_1 + (1-\omega) \rho_2 = 0.8$.}
\label{fig:different_rho}
\end{figure}

\section{Conclusion}\label{sec:conclusion}

This paper studies the dividend and capital injection problem in a diffusion risk model under time-inconsistent preferences. The shareholders of the insurance company aim to maximize the total expected discounted dividends with penalized capital injection deducted. For time-consistent benchmark where exponential discounting is used, we obtain the value function explicitly. In the general time-inconsistent framework, the definition of weak equilibrium in stochastic control is adopted to deal with the inconsistency. An extended HJB equation system is derived with verification theorem to prove the solution is indeed an equilibrium.
We then elaborate the results under the pseudo-exponential discount function, where the equilibrium strategy solved from the extended HJB is shown to be in threshold form. The equilibrium value function can be categorized into three cases according to the dividend paying threshold and capital injection threshold. If injection cost is too high or maximal dividend paying rate is too restrictive, the problem degenerates to the pure dividend part where no capital is injected.

\bibliography{references}

\appendix

\begin{appendices}

\setcounter{equation}{0}

\renewcommand\theequation{A.\arabic{equation}}

\section{Proof}\label{proof}

\pfof{Proposition \ref{prop:Vc_property}}
The boundedness of $V_c(x)$ is straightforward from the boundedness of $l_t$ and $r_t$, that is,
\begin{align*}
V_c(x) \leq \expect\left[\int_0^\infty e^{-\rho t} \bar l dt \right] = \frac{\bar l}{\rho}, \;
V_c(x) \geq \expect\left[\int_0^\infty e^{-\rho t} (-\phi \bar r) dt \right] = - \frac{\phi \bar r}{\rho}.
\end{align*}

Next, we show the concavity of $V_c(x)$. Let ${\bf u}^{(1)}$ and ${\bf u}^{(2)}$ be optimal strategies given initial surplus level $x_1$ and $x_2$ respectively. For $\lambda \in [0, 1]$, we define a strategy as $u^{\xi}_t := \lambda u^{(1)}_t + (1-\lambda)u^{(2)}_t$. Given initial surplus $X^{{\bf u}^{\xi}}_0 = \lambda x_1 + (1-\lambda) x_2$, according to \eqref{eq:surplus_process}, the controlled surplus process should evolve like
\begin{align*}
X_t^{{\bf u}^{\xi}} &= (\mu - \lambda l^{(1)}_t - (1-\lambda)l^{(2)}_t + \lambda r^{(1)}_t + (1-\lambda)r^{(2)}_t )dt+ \sigma dW_t\\
&=  \lambda \left[(\mu - l^{(1)}_t + r^{(1)}_t )dt+ \sigma dW_t\right] + (1-\lambda) \left[(\mu - l^{(2)}_t + r^{(2)}_t )dt+ \sigma dW_t\right]\\
&= \lambda X_t^{{\bf u}^{(1)}} + (1-\lambda) X_t^{{\bf u}^{(2)}}.
\end{align*}
Thus, $\tau_{0, \lambda x_1 + (1-\lambda) x_2}^{{\bf u}^{\xi}} = \tau_{0, x_1}^{{\bf u}^{(1)}} \vee \tau_{0, x_2}^{{\bf u}^{(2)}}$. Then the objective function corresponding to strategy ${\bf u}^{\xi}$ could be written as
\begin{align*}
& J_c(\lambda x_1 + (1-\lambda) x_2, {\bf u}^{\xi})\\
=& \expect\left[\int_0^{\tau_{0, x_1}^{{\bf u}^{(1)}} \vee \tau_{0, x_2}^{{\bf u}^{(2)}}} e^{-\rho t} \left(\lambda l^{(1)}_t + (1-\lambda)l^{(2)}_t - \phi \lambda r^{(1)}_t - \phi (1-\lambda)r^{(2)}_t \right) dt \right]\\
=&  \lambda \expect\left[\int_0^{\tau_{0, x_1}^{{\bf u}^{(1)}}} e^{-\rho t} \left(l^{(1)}_t - \phi r^{(1)}_t \right) dt \right] + (1-\lambda) \expect\left[\int_0^{\tau_{0, x_2}^{{\bf u}^{(2)}}} e^{-\rho t} \left(l^{(2)}_t - \phi r^{(2)}_t \right) dt \right]\\
=& \lambda V_c(x_1) + (1-\lambda) V_c(x_2).
\end{align*}
Therefore, we conclude that
\begin{align*}
V_c(\lambda x_1 + (1-\lambda) x_2) \geq J_c(\lambda x_1 + (1-\lambda) x_2, {\bf u}^{\xi}) = \lambda V_c(x_1) + (1-\lambda) V_c(x_2).
\end{align*}

Finally, to show the limit of $V_c$, we consider a constant strategy $\tilde u_t = (\tilde l_t, \tilde r_t) := (\bar l, 0)$, and the corresponding objective function is
\begin{align*}
J_c(x,{\bf \tilde u}) = \expect\left[\int_0^{\tau_{0, x}^{\bf \tilde u}} e^{-\rho t} \left(\tilde l_t - \phi \tilde r_t \right) dt \right] = \expect\left[\int_0^{\tau_{0, x}^{\bf \tilde u}} e^{-\rho t} \bar l dt \right].
\end{align*}
Therefore, by definition of value function:
\begin{align*}
V_c(x) \geq J_c(x,{\bf \tilde u}) = \expect\left[\int_0^{\tau_{0, x}^{\bf \tilde u}} e^{-\rho t} \bar l dt \right].
\end{align*}
Taking limit on both sides of the above inequality and by bounded convergence theorem,
$$
\lim_{x\to\infty} V_c(x) \geq \lim_{x\to\infty} \expect\left[\int_0^{\tau_{0, x}^{\bf \tilde u}} e^{-\rho t} \bar l dt \right] = \int_0^{\infty} e^{-\rho t} \bar l dt = \frac{\bar l}{\rho}.
$$
Combined with the fact we proved above, which states that $V_c$ is upper bounded by $\frac{\bar l}{\rho}$, we then reach the conclusion that $\lim_{x\to\infty} V_c(x) = \frac{\bar l}{\rho}$.
\qed

\pfof{Theorem \ref{thm:verification_exp}}
We omit the proof since it is standard, see, e.g., proof of \citet[Proposition 2.1]{jgaard1999controlling}.
\qed

\begin{lemma}\label{lemma:thetas_equivalent}
Given $\theta_3$, $\theta_4$, $\theta_5$ defined in \eqref{eq:thetas}, the following statements are equivalent:
$$
(i)\ \frac{\bar l}{\rho} + \frac{1}{\theta_5} > 0;\quad (ii)\ \theta_5 > \theta_3 + \theta_4; \quad (iii)\ \bar l > \frac{1}{2} \frac{\rho \sigma^2}{\mu}.
$$
\end{lemma}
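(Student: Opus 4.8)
The plan is to exploit the fact that $\theta_3,\theta_4$ and $\theta_5$ are roots of explicit quadratics, which converts all three statements into elementary algebra. First I would record from \eqref{eq:thetas} that $\theta_3$ and $\theta_4$ are the two roots of $\sigma^2\theta^2+2\mu\theta-2\rho=0$, so that Vieta's formula gives $\theta_3+\theta_4=-2\mu/\sigma^2$; likewise $\theta_5$ is the negative root of $\sigma^2\theta^2+2(\mu-\bar l)\theta-2\rho=0$, so it satisfies the identity $\tfrac12\sigma^2\theta_5^2+(\mu-\bar l)\theta_5-\rho=0$ and $\theta_5<0$. I would then establish $(ii)\Leftrightarrow(iii)$ and $(i)\Leftrightarrow(ii)$ separately, which closes the cycle.

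For $(ii)\Leftrightarrow(iii)$: substituting $\theta_3+\theta_4=-2\mu/\sigma^2$, statement $(ii)$ reads $\theta_5>-2\mu/\sigma^2$. Plugging in the explicit form of $\theta_5$ and clearing $\sigma^2$, this becomes $\mu+\bar l>\sqrt{(\mu-\bar l)^2+2\rho\sigma^2}$. Since $\mu>0$ and $\bar l\ge 0$ the left-hand side is positive, so I may square both sides reversibly; the $\mu^2$ and $\bar l^2$ terms cancel and I am left with $4\mu\bar l>2\rho\sigma^2$, i.e. $\bar l>\tfrac12\rho\sigma^2/\mu$, which is exactly $(iii)$.

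For $(i)\Leftrightarrow(ii)$: because $\theta_5<0$ and $\rho>0$, multiplying $(i)$ through by $\rho\theta_5<0$ reverses the inequality and turns it into $\bar l\theta_5+\rho<0$. Here the key step — which I expect to be the only real decision point — is to eliminate $\rho$ using the characteristic identity $\rho=\tfrac12\sigma^2\theta_5^2+(\mu-\bar l)\theta_5$ rather than the messy radical expression for $\theta_5$. After substitution the $\bar l\theta_5$ terms combine and the condition collapses to $\theta_5\bigl(\tfrac12\sigma^2\theta_5+\mu\bigr)<0$; dividing by $\theta_5<0$ flips the sign once more and yields $\theta_5>-2\mu/\sigma^2=\theta_3+\theta_4$, which is $(ii)$.

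The only thing to watch is the bookkeeping of inequality directions when multiplying or dividing by the negative quantities $\theta_5$ and $\rho\theta_5$; there is no genuine analytic difficulty. The reason I would route the second equivalence through the quadratic that $\theta_5$ satisfies, instead of its radical form, is that this keeps the algebra linear in $\rho$ and avoids a second squaring step, so every implication is manifestly reversible and the chain $(i)\Leftrightarrow(ii)\Leftrightarrow(iii)$ closes cleanly.
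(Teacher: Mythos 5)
Your proof is correct, and on the equivalence $(ii)\Leftrightarrow(iii)$ it coincides with the paper's argument (substitute $\theta_3+\theta_4=-2\mu/\sigma^2$, reduce to $\mu+\bar l>\sqrt{(\mu-\bar l)^2+2\rho\sigma^2}$, and square, which is reversible since $\mu+\bar l>0$). Where you genuinely diverge is in handling statement $(i)$: the paper proves $(i)\Leftrightarrow(iii)$ directly from the radical form of $\theta_5$, rewriting $(i)$ as $\theta_5<-\rho/\bar l$, rearranging to $-\mu+\bar l+\rho\sigma^2/\bar l<\sqrt{(\mu-\bar l)^2+2\rho\sigma^2}$, and squaring to reach $2\mu\bar l>\rho\sigma^2$; you instead prove $(i)\Leftrightarrow(ii)$ by eliminating $\rho$ via the characteristic identity $\rho=\tfrac12\sigma^2\theta_5^2+(\mu-\bar l)\theta_5$, collapsing the condition to $\theta_5\bigl(\tfrac12\sigma^2\theta_5+\mu\bigr)<0$. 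Your route buys two things. First, it avoids a squaring step whose reversibility is delicate: in the paper's manipulation the left-hand side $-\mu+\bar l+\rho\sigma^2/\bar l$ can be negative, in which case squaring is not an equivalence (the lemma still survives there because the inequality holds trivially and $(iii)$ also holds, but the paper does not address this case), whereas every step in your chain --- multiplying by the nonzero $\rho\theta_5$, substituting an identity, dividing by the nonzero $\theta_5$ --- is manifestly invertible, modulo careful sign bookkeeping. Second, your decomposition $(i)\Leftrightarrow(ii)\Leftrightarrow(iii)$ closes the three-way equivalence more transparently than the paper's, which establishes $(i)\Leftrightarrow(iii)$ and states only the implication $(ii)\Rightarrow(iii)$, leaving the converse implicit in the reversibility of its squaring. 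The paper's version is marginally more self-contained in that it never invokes the quadratic that $\theta_5$ solves, but yours is the cleaner argument.
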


\begin{proof}
First, we show that $(i)$ is equivalent to $(iii)$. Note that $\frac{\bar l}{\rho} + \frac{1}{\theta_5} > 0$ indicates $\theta_5 < -\rho/\bar l$, where $\theta_5$ is given in \eqref{eq:thetas}. By the definition of $\theta_5$, the inequality is equivalent to $-\mu + \bar l + \rho \sigma^2 /\bar l < \sqrt{(\mu-\bar l)^2 + 2 \rho \sigma^2}$. Taking square on both sides of the inequality, we obtain that $2\mu \bar l > \rho \sigma^2$.

If $(ii)$ holds, then $\theta_3+\theta_4 = -2\mu/\sigma^2 < \theta_5$. By the definition of $\theta_5$, $\mu+\bar l > \sqrt{(\mu-\bar l)^2 + 2 \rho \sigma^2}$. Taking square on both sides leads to $(iii)$, which completes our proof.
\end{proof}

\begin{lemma}\label{lemma:thetas_imply}
Given $\theta_i$ defined in \eqref{eq:thetas}, if $\ln\phi < \frac{1}{\theta4-\theta3} \left( \theta_4 \ln\frac{\theta_5-\theta_4}{\theta_3} + \theta_3 \ln\frac{-\theta_4}{\theta_3-\theta_5} \right)$, then $\frac{\bar l}{\rho} + \frac{1}{\theta_5} > 0$.
\end{lemma}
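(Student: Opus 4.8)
The plan is to reduce the claimed implication to the equivalences already recorded in Lemma~\ref{lemma:thetas_equivalent}. By that lemma, $\frac{\bar l}{\rho} + \frac{1}{\theta_5} > 0$ is equivalent to $\theta_5 > \theta_3 + \theta_4$, so it suffices to show that the hypothesis on $\phi$ forces $\theta_5 > \theta_3 + \theta_4$. Since $\phi > 1$ gives $\ln\phi > 0$, the hypothesis $\ln\phi < R$ (writing $R$ for the right-hand side) immediately yields $R > 0$. Hence the entire lemma collapses to the single sign statement $R > 0 \Rightarrow \theta_5 > \theta_3 + \theta_4$.

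To analyze the sign of $R$, I would treat the bracketed expression as a function of $\theta_5$ with $\theta_3,\theta_4$ held fixed. Set $F(c) = \theta_4 \ln\frac{c - \theta_4}{\theta_3} + \theta_3\ln\frac{-\theta_4}{\theta_3 - c}$, so that $R = F(\theta_5)/(\theta_4 - \theta_3)$. Both logarithms are defined precisely on $\theta_4 < c < \theta_3$, and I would first record the standing sign facts $\theta_3 > 0 > \theta_4$ together with $\theta_3 > 0 > \theta_5 > \theta_4$; the last inequality $\theta_5 > \theta_4$ follows because $m \mapsto (-m - \sqrt{m^2 + 2\rho\sigma^2})/\sigma^2$ is strictly decreasing and $\mu - \bar l < \mu$. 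These place $\theta_5$ strictly inside the admissible interval and keep both log-arguments positive.

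The key computation is the boundary value together with the monotonicity of $F$. Evaluating at $c = \theta_3 + \theta_4$ makes both log-arguments equal to $1$, so $F(\theta_3 + \theta_4) = 0$. Differentiating and combining over a common denominator, the cross terms cancel and one finds $F'(c) = \frac{c(\theta_3 - \theta_4)}{(c - \theta_4)(\theta_3 - c)}$, which is strictly negative on $(\theta_4, \theta_3)$ because $c < 0$, $\theta_3 - \theta_4 > 0$, and both denominator factors are positive. Thus $F$ is strictly decreasing with its unique zero at $\theta_3 + \theta_4$, giving $F(\theta_5) < 0 \iff \theta_5 > \theta_3 + \theta_4$. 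Dividing by $\theta_4 - \theta_3 < 0$ reverses the sign, so $R > 0 \iff \theta_5 > \theta_3 + \theta_4$. Chaining this with $R > 0$ from the hypothesis and then with Lemma~\ref{lemma:thetas_equivalent} completes the argument.

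I expect the main obstacle to be the derivative simplification: one must verify that the logarithmic derivative collapses to the clean factored form $c(\theta_3 - \theta_4)/[(c - \theta_4)(\theta_3 - c)]$, since the individual terms $\theta_4/(c - \theta_4)$ and $\theta_3/(\theta_3 - c)$ carry opposite signs and the sign of their sum is not apparent before the algebraic cancellation. Everything else — the boundary evaluation $F(\theta_3 + \theta_4) = 0$, the placement of $\theta_5$ inside the domain, and the final sign bookkeeping through $\theta_4 - \theta_3 < 0$ — is routine once that identity is in hand.
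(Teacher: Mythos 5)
Your route is genuinely different from the paper's. The paper proves the same pivotal sign fact by contradiction: assuming $\theta_5 < \theta_3 + \theta_4$, it derives the chain $0 < \frac{\theta_5-\theta_4}{\theta_3} < \frac{-\theta_4}{\theta_3-\theta_5} < 1$ and then compares the two logarithmic terms (using $\theta_4 < 0$ and $\theta_3 + \theta_4 < 0$) to force the bracketed quantity to be nonnegative, contradicting the positivity extracted from $\phi > 1$; it then invokes Lemma~\ref{lemma:thetas_equivalent} exactly as you do. Your calculus argument --- the boundary evaluation $F(\theta_3+\theta_4)=0$ together with the factored derivative $F'(c) = \frac{c(\theta_3-\theta_4)}{(c-\theta_4)(\theta_3-c)}$, which I have checked --- replaces the paper's ad hoc inequality chain by a monotonicity statement and explains \emph{why} $\theta_3+\theta_4$ is the threshold, which the paper's argument verifies without illuminating. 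The reduction of the lemma to ``$R>0 \Rightarrow \theta_5 > \theta_3+\theta_4$'' via $\ln\phi > 0$, and the placement $\theta_4 < \theta_5 < 0 < \theta_3$ via strict monotonicity of $m \mapsto (-m-\sqrt{m^2+2\rho\sigma^2})/\sigma^2$, are sound and common to both proofs.

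There is, however, one concrete error in your execution: on $(\theta_4, \theta_3)$ the sign of $F'(c)$ equals the sign of $c$, and this interval straddles $0$, so $F'$ is \emph{not} strictly negative throughout and $F$ is not globally decreasing. Indeed $F$ decreases on $(\theta_4, 0)$ and increases on $(0, \theta_3)$, and $\theta_3+\theta_4$ is not its unique zero: $F(0) = (\theta_3+\theta_4)\ln\frac{-\theta_4}{\theta_3} < 0$ (since $\theta_3+\theta_4 = -2\mu/\sigma^2 < 0$ forces $-\theta_4 > \theta_3$), while $F(c) \to +\infty$ as $c \uparrow \theta_3$, so $F$ has a second zero in $(0,\theta_3)$ and your stated biconditional $F(\theta_5) < 0 \iff \theta_5 > \theta_3+\theta_4$ is false on the full domain. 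The proof is nonetheless salvageable with one sentence you have already earned: both $\theta_5$ and the pivot $\theta_3+\theta_4$ are strictly negative, hence lie in $(\theta_4, 0)$, where $F$ \emph{is} strictly decreasing with $\theta_3+\theta_4$ as its unique zero; restricting the monotonicity claim to $c \in (\theta_4, 0)$ yields $F(\theta_5) < 0 \Rightarrow \theta_5 > \theta_3+\theta_4$, which is all the lemma requires. Make that restriction explicit and your argument is complete and correct.
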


\begin{proof}
Since $\phi$ is defined to be greater than $1$, we have $\frac{1}{\theta4-\theta3} \left(\theta_4 \ln\frac{\theta_5-\theta_4}{\theta_3} + \theta_3 \ln\frac{-\theta_4}{\theta_3-\theta_5} \right) > \ln\phi > 0$, that is,
\begin{equation}\label{ineq:thetas_imply}
\theta_4 \ln\frac{\theta_5-\theta_4}{\theta_3} + \theta_3 \ln\frac{-\theta_4}{\theta_3-\theta_5} < 0.
\end{equation}
Suppose $\theta_5 < \theta_3 + \theta_4$, then $0 < (\theta_5-\theta_4)(\theta_3-\theta_5) = \theta_5 (\theta_3 + \theta_4 - \theta_5) - \theta_3 \theta_4 < - \theta_3 \theta_4$. Dividing $\theta_3 (\theta_3 - \theta_5)$ on both sides of the above inequality leads to $0 < \frac{\theta_5-\theta_4}{\theta_3} < \frac{-\theta_4}{\theta_3 - \theta_5}$, and $\frac{-\theta_4}{\theta_3 - \theta_5} < 1$ is directly obtained from $\theta_5 < \theta_3 + \theta_4$, hence $\ln\frac{\theta_5-\theta_4}{\theta_3} < \ln\frac{-\theta_4}{\theta_3-\theta_5} < 0$. Combined with $\theta_4 < -\theta_3$, it can be derived that $\theta_4 \ln\frac{\theta_5-\theta_4}{\theta_3} > \theta_4 \ln\frac{-\theta_4}{\theta_3-\theta_5} > -\theta_3 \ln\frac{-\theta_4}{\theta_3-\theta_5}$, which is a contradiction with \eqref{ineq:thetas_imply}. Therefore, it requires $\theta_5 > \theta_3 + \theta_4$, and according to Lemma \ref{lemma:thetas_equivalent}, it is equivalent to have $\frac{\bar l}{\rho} + \frac{1}{\theta_5} > 0$. So the inequality $\ln\phi < \frac{1}{\theta4-\theta3} \left( \theta_4 \ln\frac{\theta_5-\theta_4}{\theta_3} + \theta_3 \ln\frac{-\theta_4}{\theta_3-\theta_5} \right)$ implies that $\frac{\bar l}{\rho} + \frac{1}{\theta_5} > 0$.
\end{proof}

Define an auxiliary function
\begin{equation}\label{eq:f_c}
f_c(x) := \frac{\theta_3 \theta_4 \phi}{\theta_3 - \theta_5} \left(\frac{e^{\theta_1 x} - e^{\theta_2 x} + \frac{\bar r}{\rho} (\theta_1-\theta_2) e^{(\theta_1+\theta_2) x}}{\theta_1 e^{\theta_1 x} - \theta_2 e^{\theta_2 x}} -\frac{\bar r}{\rho} - \frac{1}{\theta_3}\right), \qquad x\in[0, \infty),
\end{equation}
where $\theta_i$ is given in \eqref{eq:thetas}.

\begin{lemma}\label{lemma:f_c}
Given function $f_c$ defined in \eqref{eq:f_c}, it satisfies the following properties:
\begin{enumerate}
\item[(i)] $f_c(0) = \frac{-\theta_4}{\theta_3 - \theta_5} \phi > \lim_{x\to\infty} f_c(x) = \frac{\theta_3 - \theta_1}{\theta_3 - \theta_5} \phi$;
\item[(ii)] there exists an unique $x^* \in (0, \infty)$ such that $f_c(x)$ decreases on $x \in [0, x^*]$ and then increases on $x\in[x^*, \infty)$.
\end{enumerate}
\end{lemma}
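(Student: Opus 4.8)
The plan is to write $f_c(x)=K\,g(x)$, where $K:=\theta_3\theta_4\phi/(\theta_3-\theta_5)$ is constant and $g(x):=h(x)-\bar r/\rho-1/\theta_3$ with $h:=N/D$, $N(x):=e^{\theta_1 x}-e^{\theta_2 x}+\tfrac{\bar r}{\rho}(\theta_1-\theta_2)e^{(\theta_1+\theta_2)x}$ and $D(x):=\theta_1 e^{\theta_1 x}-\theta_2 e^{\theta_2 x}$. From \eqref{eq:thetas} one reads off $\theta_1,\theta_3>0$ and $\theta_2,\theta_4,\theta_5<0$, so $\theta_3-\theta_5>0$, $\theta_3\theta_4<0$, and hence $K<0$; moreover $D(x)>0$ for all $x\ge0$ since $\theta_1 e^{\theta_1 x}>0$ and $-\theta_2 e^{\theta_2 x}>0$. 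I would also record the Vieta identities coming from the characteristic equations $\tfrac12\sigma^2\theta^2+(\mu+\bar r)\theta-\rho=0$ and $\tfrac12\sigma^2\theta^2+\mu\theta-\rho=0$, namely $\theta_1\theta_2=\theta_3\theta_4=-2\rho/\sigma^2$, $\theta_1+\theta_2=-2(\mu+\bar r)/\sigma^2$, $\theta_3+\theta_4=-2\mu/\sigma^2$, which will be used throughout.

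For part (i), I would first evaluate $f_c(0)$ directly: at $x=0$ the ratio $N(0)/D(0)$ equals $\bar r/\rho$, so $g(0)=-1/\theta_3$ and $f_c(0)=K(-1/\theta_3)=-\theta_4\phi/(\theta_3-\theta_5)$, matching the claim. For the limit I would factor $e^{\theta_1 x}$ out of $N$ and $D$ and send $x\to\infty$ (using $\theta_2<0<\theta_1$), obtaining $h(x)\to1/\theta_1$ and hence $g(x)\to 1/\theta_1-\bar r/\rho-1/\theta_3$. Substituting $\theta_3\theta_4=-2\rho/\sigma^2$ and rewriting $1/\theta_1$, $1/\theta_3$ through the Vieta relations turns $K(1/\theta_1-\bar r/\rho-1/\theta_3)$ into $(\theta_3-\theta_1)\phi/(\theta_3-\theta_5)$, as stated. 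The strict inequality $f_c(0)>\lim_{x\to\infty}f_c(x)$ then reduces, after cancelling the common positive factor $\phi/(\theta_3-\theta_5)$, to $-\theta_4>\theta_3-\theta_1$, i.e.\ $\theta_1>\theta_3+\theta_4$, which holds because $\theta_1>0>\theta_3+\theta_4=-2\mu/\sigma^2$.

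For part (ii), since $K<0$ the claimed shape of $f_c$ is equivalent to showing that $g$ (equivalently $h$) is strictly increasing then strictly decreasing with a single turning point. As $\mathrm{sgn}\,h'=\mathrm{sgn}(N'D-ND')$ because $D^2>0$, it suffices to analyze $N'D-ND'$. Writing $N=N_0+\tfrac{\bar r}{\rho}(\theta_1-\theta_2)e^{(\theta_1+\theta_2)x}$ with $N_0=e^{\theta_1 x}-e^{\theta_2 x}$ and using $N_0'=D$, the two auxiliary identities $D^2-N_0 D'=(\theta_1-\theta_2)^2 e^{(\theta_1+\theta_2)x}$ and $(\theta_1+\theta_2)D-D'=\theta_1\theta_2(e^{\theta_1 x}-e^{\theta_2 x})$ collapse the expression into
\[
N'D-ND'=(\theta_1-\theta_2)e^{(\theta_1+\theta_2)x}\Big[(\theta_1-\theta_2)+\tfrac{\bar r}{\rho}\theta_1\theta_2\big(e^{\theta_1 x}-e^{\theta_2 x}\big)\Big].
\]
Calling the bracket $\Phi(x)$, its sign governs everything, since $(\theta_1-\theta_2)e^{(\theta_1+\theta_2)x}>0$. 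Now $\Phi(0)=\theta_1-\theta_2>0$, and because $\theta_1\theta_2<0$ one has $\Phi'(x)=\tfrac{\bar r}{\rho}\theta_1\theta_2 D(x)<0$, so $\Phi$ is strictly decreasing with $\Phi(x)\to-\infty$; hence $\Phi$ has a unique zero $x^*\in(0,\infty)$, positive on $[0,x^*)$ and negative on $(x^*,\infty)$. Translating back, $h'>0$ on $[0,x^*)$ and $h'<0$ on $(x^*,\infty)$, and since $f_c'=K h'$ with $K<0$, this yields $f_c$ strictly decreasing on $[0,x^*]$ and strictly increasing on $[x^*,\infty)$, which is exactly (ii).

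The routine parts are the evaluation at $0$ and the asymptotics; the genuine work, and the step most prone to algebraic error, is the derivative computation in (ii) together with the clean factorization displayed above. The two collapsing identities $D^2-N_0 D'=(\theta_1-\theta_2)^2 e^{(\theta_1+\theta_2)x}$ and $(\theta_1+\theta_2)D-D'=\theta_1\theta_2(e^{\theta_1 x}-e^{\theta_2 x})$ are what make the argument work, reducing an unwieldy rational derivative to the monotone function $\Phi$; verifying these by a short direct expansion and keeping careful track of the sign of $\theta_1\theta_2$ is where I would concentrate the care.
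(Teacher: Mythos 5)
Your proof is correct and follows essentially the same route as the paper's: direct evaluation at $x=0$, the limit via the Vieta identities $\theta_1\theta_2=\theta_3\theta_4=-2\rho/\sigma^2$ and $\theta_3+\theta_4-(\theta_1+\theta_2)=2\bar r/\sigma^2$, and a quotient-rule computation collapsing $f_c'$ to the same factorization $(\theta_1-\theta_2)e^{(\theta_1+\theta_2)x}\Phi(x)$ times a negative constant over $D^2$. Your treatment is in fact slightly more careful than the paper's at the final step, since you explicitly verify that $\Phi$ is strictly decreasing with $\Phi(0)>0$ and $\Phi(x)\to-\infty$ (giving existence and uniqueness of $x^*$), a point the paper dismisses as obvious.
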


\begin{proof}
We first show $(i)$. It is obvious that $f_c(0) = \frac{-\theta_4}{\theta_3 - \theta_5} \phi$, and
\begin{align*}
\lim_{x\to\infty} f_c(x) &= \lim_{x\to\infty} \frac{\theta_3 \theta_4 \phi}{\theta_3 - \theta_5} \left(\frac{e^{\theta_1 x} - e^{\theta_2 x} + \frac{\bar r}{\rho} (\theta_1-\theta_2) e^{(\theta_1+\theta_2) x}}{\theta_1 e^{\theta_1 x} - \theta_2 e^{\theta_2 x}} -\frac{\bar r}{\rho} - \frac{1}{\theta_3}\right)\\
&= \frac{\theta_3 \theta_4 \phi}{\theta_3 - \theta_5} \left(\frac{1}{\theta_1} - \frac{1}{\theta_3} - \frac{\bar r}{\rho}\right)\\
&= \frac{\phi}{\theta_3 - \theta_5} \left(\theta_2 - \theta_4 + \frac{2 \bar r}{\sigma^2}\right)\\
&= \frac{\phi}{\theta_3 - \theta_5} \left(\theta_2 - \theta_4 + \theta_3 + \theta_4 - \theta_1 - \theta_2\right) = \frac{\theta_3 - \theta_1}{\theta_3 - \theta_5} \phi > 0,
\end{align*}
where the third equality holds because $\theta_1\theta_2 = \theta_3\theta_4 = -\frac{2\rho}{\sigma^2}$, and the fourth due to $\theta_3 + \theta_4 - (\theta_1 + \theta_2) = \frac{2 \bar r}{\sigma^2}$. $f_c(0) > \lim_{x\to\infty} f_c(x)$ follows from $\theta_3 + \theta_4 - \theta_1 < 0$.

To show the monotonicity of $f_c$, taking the derivative leads to
\begin{align*}
f_c^\prime(x) &= \frac{\theta_3 \theta_4 \phi}{\theta_3 - \theta_5} \frac{\left(\theta_1 e^{\theta_1 x} - \theta_2 e^{\theta_2 x}\right)^2 + \frac{\bar r}{\rho} (\theta_1^2-\theta_2^2) e^{(\theta_1+\theta_2) x} \left(\theta_1 e^{\theta_1 x} - \theta_2 e^{\theta_2 x}\right)}{\left(\theta_1 e^{\theta_1 x} - \theta_2 e^{\theta_2 x}\right)^2}\\
& \quad - \frac{\theta_3 \theta_4 \phi}{\theta_3 - \theta_5} \frac{\left(e^{\theta_1 x} - e^{\theta_2 x}\right) \left(\theta_1^2 e^{\theta_1 x} - \theta_2^2 e^{\theta_2 x}\right) + \frac{\bar r}{\rho} (\theta_1-\theta_2) e^{(\theta_1+\theta_2) x} \left(\theta_1^2 e^{\theta_1 x} - \theta_2^2 e^{\theta_2 x}\right)}{\left(\theta_1 e^{\theta_1 x} - \theta_2 e^{\theta_2 x}\right)^2}\\
&= \frac{\theta_3 \theta_4 \phi}{\theta_3 - \theta_5} \frac{(\theta_1 - \theta_2)^2 e^{(\theta_1+\theta_2) x} + \frac{\bar r}{\rho} \theta_1\theta_2 (\theta_1 - \theta_2) \left(e^{\theta_1 x} - e^{\theta_2 x}\right) e^{(\theta_1+\theta_2) x}}{\left(\theta_1 e^{\theta_1 x} - \theta_2 e^{\theta_2 x}\right)^2}\\
&= \frac{\theta_3 \theta_4 \phi}{\theta_3 - \theta_5} \frac{(\theta_1 - \theta_2) e^{(\theta_1+\theta_2) x} \left(\theta_1 - \theta_2 + \frac{\bar r}{\rho} \theta_1\theta_2 \left(e^{\theta_1 x} - e^{\theta_2 x}\right) \right)}{(\theta_1 e^{\theta_1 x} - \theta_2 e^{\theta_2 x})^2}.
\end{align*}
It is obvious that $f_c^\prime(x) < 0$ for $x \in [0, x^*)$ and $f_c^\prime(x) > 0$ for $x \in (x^*, \infty)$, with $x^*$ satisfying $\theta_1 - \theta_2 + \frac{\bar r}{\rho} \theta_1\theta_2 \left(e^{\theta_1 x^*} - e^{\theta_2 x^*}\right) = 0$.
\end{proof}

Define
\begin{equation}\label{eq:g_c}
g_c(x) := (\theta_4 - \theta_5) \left(f_c(x)\right)^\frac{\theta_3}{\theta_4} - (\theta_3 - \theta_5) f_c(x) + (\theta_3 - \theta_4) \phi, \qquad x\in\mathcal{O},
\end{equation}
where $f_c$ is defined in \eqref{eq:f_c} and $\mathcal{O} := \{x\in[0, \infty): f_c(x) \geq 1\}$.

\begin{lemma}\label{lemma:g_c}
$g_c(x)$ is defined in \eqref{eq:g_c} with domain $\mathcal{O}$. The monotonicity of $g_c$ is classified into the following cases based on the properties of $f_c$ given in Lemma \ref{lemma:f_c}:
\begin{enumerate}
\item[(i)] $\mathcal{O} = \emptyset$;
\item[(ii)] $\mathcal{O} = [0, \tilde{x}]$ for some $\tilde{x}\in[0, x^*)$: $g_c(x)$ is monotonically increasing in $[0, \tilde{x}]$, with $g_c(\tilde{x}) > 0$;
\item[(iii)] $\mathcal{O} = [0, \underline{x}] \cup [\bar x, \infty)$ for some $(\underline{x}, \bar x) \in [0,  x^*) \times (x^*, \infty)$:  $g_c(x)$ is increasing in $ [0, \underline{x}]$ and decreasing in $[\bar x, \infty)$, with $g_c(\underline{x}) = g_c(\bar{x}) > 0$ and $\lim_{x\to\infty} g_c(x) > 0$;
\item[(iv)] $\mathcal{O} = [0, \infty)$: $g_c(x)$ is increasing in $ [0, x^*]$ and decreasing in $[x^*, \infty)$, with $\lim_{x\to\infty} g_c(x) > 0$,
\end{enumerate}
where $x^*$ is solved from $f^\prime(x^*) = 0$.
\end{lemma}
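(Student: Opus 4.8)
The plan is to recognize that $g_c$ is a composition. Define the one-variable map
\[
G(y) := (\theta_4 - \theta_5)\, y^{\theta_3/\theta_4} - (\theta_3 - \theta_5)\, y + (\theta_3 - \theta_4)\phi,
\]
so that $g_c(x) = G(f_c(x))$, and note that on the domain $\mathcal{O}=\{x:f_c(x)\ge 1\}$ the argument $f_c(x)$ always lies in $[1,\infty)$. Thus the whole lemma reduces to understanding $G$ on $[1,\infty)$ together with the valley shape of $f_c$ supplied by Lemma~\ref{lemma:f_c}. First I would record the sign pattern of the roots: since $\theta^{-}(\cdot)$ and $\theta^{+}(\cdot)$ are decreasing in their argument and $\mu-\bar l<\mu<\mu+\bar r$, one gets $\theta_4<\theta_5<0<\theta_1<\theta_3$, whence $\theta_4-\theta_5<0$, $\theta_3-\theta_5>0$, $\theta_3-\theta_4>0$, $\theta_3-\theta_1>0$, and the exponent $\theta_3/\theta_4<0$.

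Next I would show that $G$ is strictly decreasing on $[1,\infty)$. Differentiating gives $G'(y)=(\theta_4-\theta_5)\tfrac{\theta_3}{\theta_4}y^{\theta_3/\theta_4-1}-(\theta_3-\theta_5)$; the coefficient $(\theta_4-\theta_5)\tfrac{\theta_3}{\theta_4}$ is positive while $y^{\theta_3/\theta_4-1}$ is decreasing, so $G'$ is decreasing, and a direct computation yields $G'(1)=\theta_5\tfrac{\theta_4-\theta_3}{\theta_4}<0$. Hence $G'(y)\le G'(1)<0$ for all $y\ge 1$. I would also record the boundary value $G(1)=(\theta_3-\theta_4)(\phi-1)>0$.

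The monotonicity assertions then follow from $g_c'(x)=G'(f_c(x))\,f_c'(x)$: since $G'<0$ on $[1,\infty)$, $g_c$ increases wherever $f_c$ decreases and decreases wherever $f_c$ increases. Combining this with the valley shape of $f_c$ (decreasing on $[0,x^*]$, increasing on $[x^*,\infty)$) and with $f_c(0)>\lim_{x\to\infty}f_c$ from Lemma~\ref{lemma:f_c}, I would split into the four cases by comparing the level $1$ with $f_c(0)$ (the supremum), $f_c(x^*)$ (the infimum), and $\lim_{x\to\infty}f_c$; this is exactly what produces $\mathcal{O}=\emptyset$, $[0,\tilde x]$, $[0,\underline x]\cup[\bar x,\infty)$, or $[0,\infty)$. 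At every crossing point $\tilde x,\underline x,\bar x$ one has $f_c=1$, so $g_c=G(1)=(\theta_3-\theta_4)(\phi-1)>0$; in case (iii) both $\underline x$ and $\bar x$ give this same value, so $g_c(\underline x)=g_c(\bar x)>0$.

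The only genuinely computational point, which I expect to be the main obstacle, is the positivity of $\lim_{x\to\infty}g_c$ in cases (iii) and (iv). Here $\lim_{x\to\infty}g_c=G(y_\infty)$ with $y_\infty=\tfrac{\theta_3-\theta_1}{\theta_3-\theta_5}\phi$. Substituting collapses the linear term via $(\theta_3-\theta_5)y_\infty=(\theta_3-\theta_1)\phi$, leaving $G(y_\infty)=(\theta_4-\theta_5)\,y_\infty^{\theta_3/\theta_4}+(\theta_1-\theta_4)\phi$. In these two cases $y_\infty\ge 1$, and since $\theta_3/\theta_4<0$ this forces $y_\infty^{\theta_3/\theta_4}\in(0,1]$; as $\theta_4-\theta_5<0$ it follows that $G(y_\infty)\ge(\theta_4-\theta_5)+(\theta_1-\theta_4)\phi=-\theta_5+\theta_4(1-\phi)+\theta_1\phi$, a sum of three positive quantities (using $\theta_5<0$, $\theta_4<0$, $\phi>1$, $\theta_1>0$). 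Hence the limit is strictly positive, which closes cases (iii) and (iv) and completes the classification.
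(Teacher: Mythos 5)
Your proof is correct, and for most of the lemma it follows the paper's route: you factor $g_c = G \circ f_c$, differentiate to get $g_c'(x) = G'(f_c(x))\, f_c'(x)$, and show the multiplier is negative on $\{f_c \ge 1\}$ --- your bound $G'(y) \le G'(1) = \theta_5(\theta_4-\theta_3)/\theta_4 < 0$ for $y \ge 1$ is the same algebra as the paper's estimate of $\frac{\theta_3(\theta_4-\theta_5)}{\theta_4}\,(f_c(x))^{\theta_3/\theta_4 - 1} - (\theta_3-\theta_5)$, and the positivity at the crossing points via $f_c(\tilde x) = f_c(\underline x) = f_c(\bar x) = 1$ and $G(1) = (\theta_3-\theta_4)(\phi-1) > 0$ is identical (your sign ordering $\theta_4 < \theta_5 < 0 < \theta_1 < \theta_3$, via monotonicity of $\theta^\pm(\cdot)$, is also right). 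Where you genuinely depart from the paper is the positivity of $\lim_{x\to\infty} g_c$ in cases (iii) and (iv). The paper argues by contradiction: assuming the limit is negative, it rearranges into the upper bound \eqref{ineq:lnphi_ub} on $\ln\phi$, plays it against the lower bound $\ln\phi > \ln\frac{\theta_3-\theta_5}{\theta_3-\theta_1}$ coming from $\lim f_c > 1$, and refutes the combination by verifying $\frac{(\theta_3-\theta_5)(\theta_1-\theta_4)}{(\theta_3-\theta_1)(\theta_5-\theta_4)} > 1$. You instead bound the power term directly: since $y_\infty = \frac{\theta_3-\theta_1}{\theta_3-\theta_5}\phi \ge 1$ in these two cases and $\theta_3/\theta_4 < 0$, you get $y_\infty^{\theta_3/\theta_4} \in (0,1]$, hence with $\theta_4 - \theta_5 < 0$,
\begin{equation*}
G(y_\infty) \;\ge\; (\theta_4-\theta_5) + (\theta_1-\theta_4)\phi \;=\; -\theta_5 + \theta_4(1-\phi) + \theta_1\phi \;>\; 0
\end{equation*}
term by term. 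This is shorter, avoids all logarithmic manipulation, and delivers \emph{strict} positivity at once; the paper's contradiction, as literally stated with hypothesis $\lim_{x\to\infty} g_c(x) < 0$, directly rules out only negativity, and one must rerun it with ``$\le 0$'' to get the claimed strict inequality, an issue your estimate sidesteps. The one extra hypothesis your shortcut uses, $y_\infty \ge 1$, is exactly what is available in cases (iii)--(iv), since there $f_c \ge 1$ on an unbounded portion of $\mathcal{O}$ and $f_c$ is increasing beyond $x^*$. In short: same skeleton for the domain classification and monotonicity, but a more elementary and cleaner argument for the limit, with no loss of generality.
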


\begin{proof}
First, if $f_c(x) = 1$, it is straightforward that $g_c(x) = (\theta_3 - \theta_4)(\phi - 1) > 0$ because of $\phi > 1$. Considering the monotonicity of $f_c(x)$ and the definition of $\mathcal{O}$, we must have $f(\tilde{x}) = f(\underline{x}) = f(\bar x) = 1$, thus $g_c(\tilde{x}) = g_c(\underline{x}) = g_c(\bar{x}) > 0$. To obtain the monotonicity of $g_c$, taking the first order derivative leads to
$$
g_c^\prime(x) = \left(\frac{\theta_3(\theta_4 - \theta_5)}{\theta_4} \left(f_c(x)\right)^{\frac{\theta_3}{\theta_4} -1} - (\theta_3 - \theta_5)\right) f^\prime_c(x), \qquad x \in  \mathcal{O}.
$$
Since $f_c(x) \geq 1$ for $x \in  \mathcal{O}$ and $\frac{\theta_3}{\theta_4} - 1 < 0$, we find that $\left(f_c(x)\right)^{\frac{\theta_3}{\theta_4} -1} < 1$, and
$$
\frac{\theta_3(\theta_4 - \theta_5)}{\theta_4} \left(f_c(x)\right)^{\frac{\theta_3}{\theta_4} -1} - (\theta_3 - \theta_5) < \frac{\theta_3(\theta_4 - \theta_5) - \theta_4 (\theta_3 - \theta_5)}{\theta_4} = \frac{(\theta_4-\theta_3) \theta_5}{\theta_4} < 0.
$$
For case $(ii)$, we have $f^\prime(x) < 0$ for $x \in [0, \tilde{x})$, thus $g_c^\prime(x) > 0$ for $x \in  \mathcal{O}$, which indicates that $g_c$ is monotonically increasing in $x$. Similarly for case $(iii)$, since $f^\prime(x) < 0$ for $x \in [0, \underline{x})$ and $f^\prime(x) > 0$ for $x \in (\bar x, \infty)$, we obtain that $g_c^\prime(x) > 0$ for $x \in [0, \underline{x})$ and $g^\prime(x) < 0$ for $x \in (\bar x, \infty)$. For case $(iv)$, as Lemma \ref{lemma:f_c} shows that $f^\prime(x) < 0$ for $x \in [0, x^*)$ and $f^\prime(x) > 0$ for $x \in (x^*, \infty)$, which implies that $g^\prime(x) > 0$ for $x \in [0, x^*)$ and $g^\prime(x) < 0$ for $x \in (x^*, \infty)$.

Note that in both case $(iii)$ and $(iv)$, $\lim_{x\to\infty} f_c(x) = \frac{\theta_3 - \theta_1}{\theta_3 - \theta_5} \phi > 1$, which implies that
\begin{equation}\label{ineq:lnphi_lb}
\ln \phi > \ln \frac{\theta_3 - \theta_5}{\theta_3 - \theta_1}.
\end{equation}

Suppose that $\lim_{x\to\infty} g_c(x) = (\theta_4 - \theta_5) \left(\frac{\theta_3 - \theta_1}{\theta_3 - \theta_5} \phi\right)^\frac{\theta_3}{\theta_4} + (\theta_1 - \theta_4) \phi < 0$. Rearranging the inequality leads to
\begin{equation}\label{ineq:lnphi_ub}
\ln\phi< \frac{\theta_4}{\theta_4 - \theta_3}  \ln\frac{\theta_5 - \theta_4}{\theta_1 - \theta_4} + \frac{\theta_3}{\theta_4 - \theta_3} \ln\frac{\theta_3 - \theta_1}{\theta_3 - \theta_5}.
\end{equation}
Combining the upper bound \eqref{ineq:lnphi_ub} with the lower bound \eqref{ineq:lnphi_lb},
\begin{equation}\label{ineq:lnphi_diff}
\ln \frac{\theta_3 - \theta_5}{\theta_3 - \theta_1} < \frac{\theta_4}{\theta_4 - \theta_3}  \ln\frac{\theta_5 - \theta_4}{\theta_1 - \theta_4} + \frac{\theta_3}{\theta_4 - \theta_3} \ln\frac{\theta_3 - \theta_1}{\theta_3 - \theta_5}.
\end{equation}
However,
\begin{align}\label{eq:lnphi_diff}
&\ln \frac{\theta_3 - \theta_5}{\theta_3 - \theta_1} - \frac{\theta_4}{\theta_4 - \theta_3}  \ln\frac{\theta_5 - \theta_4}{\theta_1 - \theta_4} - \frac{\theta_3}{\theta_4 - \theta_3} \ln\frac{\theta_3 - \theta_1}{\theta_3 - \theta_5}\notag\\
=& \frac{\theta_4}{\theta_4 - \theta_3} \left(\ln \frac{\theta_3 - \theta_5}{\theta_3 - \theta_1} - \ln\frac{\theta_5 - \theta_4}{\theta_1 - \theta_4}\right)\notag\\
=& \frac{\theta_4}{\theta_4 - \theta_3} \ln \frac{(\theta_3 - \theta_5)(\theta_1 - \theta_4)}{(\theta_3 - \theta_1)(\theta_5 - \theta_4)},
\end{align}
where
\begin{align*}
\frac{(\theta_3 - \theta_5)(\theta_1 - \theta_4)}{(\theta_3 - \theta_1)(\theta_5 - \theta_4)} - 1&= \frac{(\theta_3 - \theta_5)(\theta_1 - \theta_4) - (\theta_3 - \theta_1)(\theta_5 - \theta_4)}{(\theta_3 - \theta_1)(\theta_5 - \theta_4)}\\
&= \frac{(\theta_1-\theta_5)(\theta_3 - \theta_4)}{(\theta_3 - \theta_1)(\theta_5 - \theta_4)} > 0.
\end{align*}
Substituting the above inequality into \eqref{eq:lnphi_diff}, we have
$$
\ln \frac{\theta_3 - \theta_5}{\theta_3 - \theta_1} - \frac{\theta_4}{\theta_4 - \theta_3}  \ln\frac{\theta_5 - \theta_4}{\theta_1 - \theta_4} - \frac{\theta_3}{\theta_4 - \theta_3} \ln\frac{\theta_3 - \theta_1}{\theta_3 - \theta_5} > 0,
$$
which contradicts with \eqref{ineq:lnphi_diff}. Thus we conclude that when $\lim_{x\to\infty} f_c(x) > 1$, $\lim_{x\to\infty} g_c(x) > 0$.
\end{proof}

\pfof{Proposition \ref{thm:xrxl}}
First, using expression $V_c(x_l) = \frac{\bar l}{\rho} + A_4 e^{\theta_5 x_l}$ and $V_c^\prime(x_l) = 1$, it can be obtained that $A_4 = \frac{1}{\theta_5} e^{-\theta_5 x_l}$, thus $V_c(x_l) = \frac{\bar l}{\rho} + \frac{1}{\theta_5}$ and $V^{\prime\prime}_c(x_l) = A_4 \theta_5^2 e^{\theta_5 x_l} = \theta_5$. If $V_c(x_l) = \frac{\bar l}{\rho} + \frac{1}{\theta_5} \leq 0$, it contradicts with Corollary \ref{corollary:Vc_bound} which shows that $V_c(x)$ is an increasing function with $V_c(0) = 0$, thus in this case, there doesn't exist $x\in(0, \infty)$ that satisfies $V_c^\prime(x) = 1$, nor $V_c^\prime(x) = \phi > 1$, which proves $(iii)$.

In the following, we shall assume $\frac{\bar l}{\rho} + \frac{1}{\theta_5} > 0$. Using the smooth fit principle and combining with the ODE system \eqref{eq:Vc_ODE}, we find that
\begin{equation}\label{eq:smooth1}
V^{\prime}_c(x_l-) = A_2 \theta_3 e^{\theta_3 x_l} + A_3 \theta_4 e^{\theta_4 x_l} = 1,
\end{equation}
and
\begin{align}\label{eq:smooth2}
V^{\prime\prime}_c(x_l-) = A_2 \theta_3^2 e^{\theta_3 x_l} + A_3 \theta_4^2 e^{\theta_4 x_l} &= - \frac{2 \mu}{\sigma^2} V^\prime_c(x_l-) + \frac{2 \rho}{\sigma^2} V_c(x_l-) \notag\\
&= - \frac{2 (\mu - \bar l)}{\sigma^2} V^\prime_c(x_l) + \frac{2 \rho}{\sigma^2} V_c(x_l) - \frac{2 \bar l}{\sigma^2}\notag\\
& = V^{\prime\prime}_c(x_l) = \theta_5.
\end{align}
Multiplying $\theta_4$ on both sides of equation \eqref{eq:smooth1} and subtracting \eqref{eq:smooth2} leads to
\begin{equation}\label{eq:A2_xl}
A_2 = \frac{\theta_4 - \theta_5}{(\theta_4 - \theta_3) \theta_3} e^{-\theta_3 x_l}.
\end{equation}
Similarly, we obtain
\begin{equation}\label{eq:A3_xl}
A_3 = \frac{\theta_3 - \theta_5}{(\theta_3 - \theta_4) \theta_4} e^{-\theta_4 x_l}.
\end{equation}
Then according to $V_c(x_l-) = A_2 e^{\theta_3 x_l} + A_3 e^{\theta_4 x_l} = V_c(x_l) = \frac{\bar l}{\rho} + \frac{1}{\theta_5}$ and combining with expression \eqref{eq:A2_xl} and \eqref{eq:A3_xl}, we have
\begin{equation}\label{eq:Vc_xl}
\frac{\theta_4 - \theta_5}{(\theta_4 - \theta_3) \theta_3} + \frac{\theta_3 - \theta_5}{(\theta_3 - \theta_4) \theta_4} = \frac{\theta_3 + \theta_4 - \theta_5}{\theta_3\theta_4} = \frac{\bar l}{\rho} + \frac{1}{\theta_5}.
\end{equation}

Next, suppose there exits $x_r \in (0, \infty)$ such that $V^{\prime}_c(x_r) = \phi$. We establish the relationship between point $x_r$ and $x_l$ using expression \eqref{eq:A2_xl}-\eqref{eq:A3_xl} and smooth fit condition at point $x_r$. Since $V^{\prime}_c(x_r) = A_2 \theta_3 e^{\theta_3 x_r} + A_3 \theta_4 e^{\theta_4 x_r} = \phi$, we have $A_2 e^{\theta_3 x_r} = \frac{1}{\theta_3} \left(\phi - A_3 \theta_4 e^{\theta_4 x_r}\right)$ and substituting it into $V_c(x_r)$ leads to
$$
V_c(x_r) = \frac{1}{\theta_3} \left(\phi - A_3 \theta_4 e^{\theta_4 x_r}\right) + A_3 e^{\theta_4 x_r} = \frac{\phi}{\theta_3} + \frac{\theta_3 - \theta_4}{\theta_3} A_3 e^{\theta_4 x_r} = \frac{\phi}{\theta_3} + \frac{\theta_3 - \theta_5}{\theta_3 \theta_4} e^{-\theta_4 (x_l-x_r)}.
$$
On the other hand, $V^{\prime}_c(x_r-) = A_1 \left(\theta_1 e^{\theta_1 x_r} - \theta_2 e^{\theta_2 x_r}\right) + \frac{\phi \bar r}{\rho}\theta_2 e^{\theta_2 x_r} = \phi$ implies that $A_1 = \frac{\rho - \bar r \theta_2 e^{\theta_2 x_r}}{\theta_1 e^{\theta_1 x_r} - \theta_2 e^{\theta_2 x_r}} \frac{\phi}{\rho}$. Thus,
\begin{align*}
V_c(x_r-) &= A_1 \left(e^{\theta_1 x_r} - e^{\theta_2 x_r}\right) + \frac{\phi \bar r}{\rho} \left(e^{\theta_2 x_r} - 1\right)\\
&=\frac{\phi - \frac{\phi \bar r}{\rho} \theta_2 e^{\theta_2 x_r}}{\theta_1 e^{\theta_1 x_r} - \theta_2 e^{\theta_2 x_r}} \left(e^{\theta_1 x_r} - e^{\theta_2 x_r}\right) + \frac{\phi \bar r}{\rho} \left(e^{\theta_2 x_r} - 1\right)\\
&= -\frac{\phi \bar r}{\rho} + \phi \frac{e^{\theta_1 x_r} - e^{\theta_2 x_r} + \frac{\bar r}{\rho}(\theta_1-\theta_2) e^{(\theta_1+\theta_2) x_r}}{\theta_1 e^{\theta_1 x_r} - \theta_2 e^{\theta_2 x_r}}.
\end{align*}
Then by smooth fit principle, i.e., $V_c(x_r) = V_c(x_r-)$, we establish that
\begin{equation}\label{eq:xl-xr}
e^{-\theta_4 (x_l-x_r)} = \frac{\theta_3 \theta_4 \phi}{\theta_3 - \theta_5} \left(\frac{e^{\theta_1 x_r} - e^{\theta_2 x_r} + \frac{\bar r}{\rho} (\theta_1-\theta_2) e^{(\theta_1+\theta_2) x_r}}{\theta_1 e^{\theta_1 x_r} - \theta_2 e^{\theta_2 x_r}} -\frac{\bar r}{\rho} - \frac{1}{\theta_3}\right) = f_c(x_r),
\end{equation}
where $f_c(x)$ is defined in \eqref{eq:f_c}. Hence $x_l = x_r + \frac{1}{-\theta_4} \ln f_c(x_r)$, and $e^{-\theta_3 (x_l-x_r)} = \left(f_c(x_r)\right)^\frac{\theta_3}{\theta_4}$.

Recall that
\begin{align*}
V^{\prime}_c(x_r) &= A_2 \theta_3 e^{\theta_3 x_r} + A_3 \theta_4 e^{\theta_4 x_r}\\
&= \frac{\theta_4 - \theta_5}{\theta_4 - \theta_3} e^{-\theta_3 (x_l - x_r)} + \frac{\theta_3 - \theta_5}{\theta_3 - \theta_4} e^{-\theta_4 (x_l - x_r)}\\
&= \frac{\theta_4 - \theta_5}{\theta_4 - \theta_3} \left(f_c(x_r)\right)^\frac{\theta_3}{\theta_4} + \frac{\theta_3 - \theta_5}{\theta_3 - \theta_4} f_c(x_r) = \phi,
\end{align*}
where we use the relationship between $x_l$ and $x_r$ in \eqref{eq:xl-xr} and expression \eqref{eq:A2_xl}-\eqref{eq:A3_xl}. We construct a nonlinear equation that $x_r$ should satisfy, that is, $g_c(x_r) = 0$, where $g_c$ is defined in \eqref{eq:g_c}. Once $x_r$ is obtained, $x_l$ can be solved from \eqref{eq:xl-xr}. Besides, as Proposition \ref{prop:Vc_property} shows that $V_c$ is concave, it is required that $x_l \geq x_r$, that is, $f_c(x_r) \geq 1$ according to \eqref{eq:xl-xr}, thus the domain of function $g_c(x)$ should be restricted to $\mathcal{O} := \{x\in[0, \infty): f_c(x) \geq 1\}$.

Thanks to  Lemma \ref{lemma:f_c} and Lemma \ref{lemma:g_c}, we can classify the solution to $g_c(x_r) = 0$ into the following four cases. First, $\mathcal{O} = \emptyset$ if $f_c(0) = \frac{-\theta_4}{\theta_3 - \theta_5} \phi < 1$. As required above, $\frac{\bar l}{\rho} + \frac{1}{\theta_5} > 0$, which is equivalent to $\theta_5 > \theta_3 + \theta_4$ according to Lemma \ref{lemma:thetas_equivalent}, thus we have $\frac{-\theta_4}{\theta_3 - \theta_5} \phi > 1$, which implies that $\mathcal{O} = \emptyset$ cannot happen. In the second case, $\mathcal{O} = [0, \tilde{x}]$ for some $\tilde{x}\in[0, x^*)$, where $x^*$ is solved from $f^\prime(x^*) = 0$, since $g_c(x)$ is monotonically increasing in $[0, \tilde{x}]$ with $g_c(\tilde{x}) > 0$, there exists an unique $x_r \in (0, \infty)$ such that $g(x_r) = 0$ if and only if $g_c(0) = (\theta_4 - \theta_5) \left(\frac{-\theta_4}{\theta_3 - \theta_5} \phi\right)^\frac{\theta_3}{\theta_4} + \theta_3 \phi < 0$, which is equivalent to $\ln\phi < \frac{1}{\theta4-\theta3} \left( \theta_4 \ln\frac{\theta_5-\theta_4}{\theta_3} + \theta_3 \ln\frac{-\theta_4}{\theta_3-\theta_5} \right)$, otherwise, if $\ln\phi \geq \frac{1}{\theta4-\theta3} \left( \theta_4 \ln\frac{\theta_5-\theta_4}{\theta_3} + \theta_3 \ln\frac{-\theta_4}{\theta_3-\theta_5} \right)$,
there is no such $x_r\in(0,\infty)$ that satisfies $g_c(x_r) = 0$. Similarly for Case $(iii)$ and $(iv)$ in Lemma \ref{lemma:g_c}, since $g_c(x) > 0$ for $x \in [\bar x, \infty)$ and $x \in [x^*, \infty)$ respectively, and $g_c(\underline{x})>0$, there exists a unique $x_r \in (0, \infty)$ if and only if $g_c(0) < 0$, otherwise, if $g_c(0) \geq 0$, such $x_r \in (0, \infty)$ doesn't exits.

Then it remains to check the existence of $x_l$. If a unique $x_r \in (0, \infty)$ exists, i.e., $g_c(0) < 0$, a unique $x_l \in[x_r, \infty)$ follows from \eqref{eq:xl-xr}. If there doesn't exists $x_r \in \mathcal{O}$ such that $g_c(x_r) = 0$, then let $x_r = 0$, and solving $x_l$ from $V_c^\prime(x_l) = A_4 \theta_5 e^{\theta_5 x_l}= 1$, that is,
$$
\left(1-\left(\frac{\bar l}{\rho} + \frac{1}{\theta_5}\right)\theta_3\right) e^{\theta_3 x_l} = \left(1-\left(\frac{\bar l}{\rho} + \frac{1}{\theta_5}\right)\theta_4\right) e^{\theta_4 x_l},
$$
since $\frac{\bar l}{\rho} + \frac{1}{\theta_5} = \frac{\theta_3 + \theta_4 - \theta_5}{\theta_3\theta_4}$ as derived in \eqref{eq:Vc_xl}, we can see that $1-\left(\frac{\bar l}{\rho} + \frac{1}{\theta_5}\right)\theta_3 = 1-\frac{\theta_3 + \theta_4 - \theta_5}{\theta_4} = \frac{\theta_5 - \theta_3}{\theta_4} > 0$, and obviously $1-\left(\frac{\bar l}{\rho} + \frac{1}{\theta_5}\right)\theta_4 = \frac{\theta_5 - \theta_4}{\theta_3} > 0$. Thus taking the logarithm on both sides of the above equation and rearranging it leads to
$$
x_l = \frac{1}{\theta_3 - \theta_4} \ln\frac{\theta_4(\theta_5-\theta_4)}{\theta_3(\theta_5-\theta_3)}.
$$
Thus, there always exists a unique $x_l \in (0, \infty)$ such that $V_c^\prime(x_l) = 1$ if $\frac{\bar l}{\rho} + \frac{1}{\theta_5} > 0$, which completes our proof.
\qed

\pfof{Theorem \ref{thm:verification_equilibrium}}
First, we show that $V(s, x) = J(s, x, {\bf \hat u})$, where ${\bf \hat u}$ solves the supremum. Applying It\^{o}'s Lemma on $g(t, X_{t}^{\bf \hat u}; k)$, one obtains that for any $T > s$,
\begin{align*}
g(s, X_s^{\bf \hat u}; k) &= g(T \wedge \tau_{s, x}^{\bf \hat u}, X_{T \wedge \tau_{s, x}^{\bf \hat u}}^{\bf \hat u}; k)\\
& \quad -  \int_s^{T \wedge \tau_{s, x}^{\bf \hat u}} \left(g_s(t, X_t^{\bf \hat u}; k) + \left(\mu - \hat l_t + \hat r_t \right) g_x(t, X_t^{\bf \hat u}; k) + \frac{1}{2}\sigma^2 g_{xx}(t, X_t^{\bf \hat u}; k) \right) dt\\
& \quad - \int_s^{T \wedge \tau_{s, x}^{\bf \hat u}} \sigma g_x(t, X_t^{\bf \hat u}; k) dW_t\\
&= g(T \wedge \tau_{s, x}^{\bf \hat u}, X_{T \wedge \tau_{s, x}^{\bf \hat u}}^{\bf \hat u}; k) +  \int_s^{T \wedge \tau_{s, x}^{\bf \hat u}} \delta(k,t) \left( \hat l_t - \phi \hat r_t \right) dt \\
& \quad - \int_s^{T \wedge \tau_{s, x}^{\bf \hat u}} \sigma g_x(t, X_t^{\bf \hat u}; k) dW_t,
\end{align*}
where the last equality is due to \eqref{extended_HJB:2}. Taking expectation conditional on $X_s^{\bf \hat u} = x$ leads to
\begin{align}\label{eq:g_expect}
g(s,x; k) &= \mathbb{E}_{s, x}\left[ g(T \wedge \tau_{s, x}^{\bf \hat u}, X_{T \wedge \tau_{s, x}^{\bf \hat u}}^{\bf \hat u}; k) \right] + \mathbb{E}_{s, x}\left[ \int_s^{T \wedge \tau_{s, x}^{\bf \hat u}} \delta(k,t) \left( \hat l_t - \phi \hat r_t \right) dt \right]\notag\\
&\quad - \mathbb{E}_{s, x}\left[ \int_s^{T \wedge \tau_{s, x}^{\bf \hat u}} \sigma g_x(t, X_t^{\bf \hat u}; k) dW_t\right].
\end{align}
For the first term on the right-hand side of \eqref{eq:g_expect}, since $\left\vert g \right\vert$ is bounded, then by bounded convergence theorem and the boundary condition \eqref{extended_HJB:bound2},
\begin{align*}
\lim_{T\to\infty} \mathbb{E}_{s, x}\left[  g(T \wedge \tau_{s, x}^{\bf \hat u}, X_{T \wedge \tau_{s, x}^{\bf \hat u}}^{\bf \hat u}; k) \right] = \mathbb{E}_{s, x}\left[ g(\tau_{s, x}^{\bf \hat u}, 0; k) \right] = 0.
\end{align*}
For the second term on the right hand side of \eqref{eq:g_expect}, since ${\bf \hat u}$ is admissible and satisfies Definition \ref{def:admissible_control}, by monotone convergence theorem,
\begin{align*}
 \lim_{T\to\infty} \mathbb{E}_{s, x}\left[ \int_s^{T \wedge \tau_{s, x}^{\bf \hat u}} \delta(k,t) \left(\hat l_t - \phi \hat r_t \right) dt \right] =  \mathbb{E}_{s, x}\left[ \int_s^{\tau_{s, x}^{\bf \hat u}} \delta(k,t) \left(\hat l_t - \phi \hat r_t \right) dt \right].
\end{align*}
For the third term on the right hand side of \eqref{eq:g_expect}, since $\left\vert g_x \right\vert$ is bounded, the stochastic integral $\left\{\int_s^{\tilde s} \sigma g_x(t, X_t^{\bf \hat u}; k) dW_t \right\}_{{\tilde s}\ge s}$ is a martingale, then by optional sampling theorem,
\begin{align*}
\mathbb{E}_{s, x}\left[ \int_s^{T \wedge \tau_{s, x}^{\bf \hat u}} \sigma g_x(t, X_t^{\bf \hat u}; k) dW_t \right] = 0.
\end{align*}
Thus, letting $T \to \infty$ on both sides of \eqref{eq:g_expect} leads to
\begin{equation}
\label{eq:g}
g(s, x; k)  =  \mathbb{E}_{s, x}\left[ \int_s^{\tau_{s, x}^{\bf \hat u}} \delta(k,t) \left(\hat l_t - \phi \hat r_t \right) dt \right].
\end{equation}
Then by relation \eqref{extended_HJB:3},
\begin{equation}\label{eq:f}
f(s, x) = g(s, x; s) = \mathbb{E}_{s, x} \left[\int_s^{\tau_{s, x}^{\bf \hat u}} \delta(s,t) \left(\hat l_t - \phi \hat r_t \right) dt\right].
\end{equation}
Because $\hat u_s = (\hat l_s, \hat r_s)$ reaches the supremum in \eqref{extended_HJB:1}, combining with \eqref{extended_HJB:2}, we obtain
\begin{align}\label{eq:V_f}
&V_s(s,x) + (\mu - \hat l_s + \hat r_s) V_x(s,x) + \frac{1}{2} \sigma^2 V_{xx}(s,x)\notag\\
=& f_s(s,x) + (\mu - \hat l_s + \hat r_s) f_x(s,x) + \frac{1}{2} \sigma^2 f_{xx}(s,x).
\end{align}
Applying It\^{o}'s Lemma on $f(t, X_{t}^{\bf \hat u})$ and $V(t, X_{t}^{\bf \hat u})$, respectively, and taking conditional expectations on both sides,
\begin{align*}
f(s, x) = - \mathbb{E}_{s, x}\left[\int_s^{\tau_{s, x}^{\bf \hat u}} \left( f_s(t,X_t^{\bf \hat u}) + (\mu - \hat l_t + \hat r_t) f_x(t,X_t^{\bf \hat u}) + \frac{1}{2} \sigma^2 f_{xx}(t,X_t^{\bf \hat u}) \right) dt \right].
\end{align*}
By \eqref{eq:V_f} and \eqref{eq:f},
\begin{align}
V(s, x) & = - \mathbb{E}_{s, x}\left[\int_s^{\tau_{s, x}^{\bf \hat u}} \left( V_s(t,X_t^{\bf \hat u}) + (\mu - \hat l_t + \hat r_t) V_x(t,X_t^{\bf \hat u}) + \frac{1}{2} \sigma^2 V_{xx}(t,X_t^{\bf \hat u}) \right) dt \right] \notag\\
& = f(s, x). \label{eq:Vf}
\end{align}
It is also shown that
\begin{align*}
 f(s, x)  = \mathbb{E}_{s, x} \left[ \int_s^{\tau_{s, x}^{\bf \hat u}} \delta(s,t) \left(\hat l_t - \phi \hat r_t \right) dt \right] = J(s, x, {\bf \hat u}).
\end{align*}

Next, we show that ${\bf \hat u}$ is an equilibrium strategy. The $\epsilon$-perturbed strategy ${\bf u}^{\epsilon,l,r}$ is defined as in Definition \ref{def:eq}, where $(l, r)\in[0,\bar l]\times[0,\bar r]$ are arbitrary constants. In the following, we write ${\bf u}^{\epsilon}$ for simplicity. Note that
\begin{align*}
&J(s, x, {\bf u^\epsilon})\\
=& \mathbb{E}_{s, x} \left[\int_{s}^{(s+\epsilon) \wedge \tau_{s, x}^{\bf u^\epsilon}} \delta(s,t) \left(l - \phi r \right) dt \right] + \mathbb{E}_{s, x} \left[\int_{(s+\epsilon) \wedge \tau_{s, x}^{\bf u^\epsilon}}^{\tau_{s, x}^{\bf u^\epsilon}} \delta(s,t) \left(\hat l_t - \phi \hat r_t \right) dt \right] \\
=& \mathbb{E}_{s, x} \left[\int_{s}^{(s+\epsilon) \wedge \tau_{s, x}^{\bf u^\epsilon}} \delta(s,t) dt \right] \left(l - \phi r \right) \\
\quad & + \mathbb{E}_{s, x} \left[\mathbb{E}_{ (s+\epsilon)\wedge \tau_{s, x}^{\bf u^\epsilon}, X_{ (s+\epsilon) \wedge \tau_{s, x}^{\bf u^\epsilon}}^{\bf u^\epsilon}} \left[\int_{(s+\epsilon) \wedge \tau_{s, x}^{\bf u^\epsilon}}^{\tau_{(s+\epsilon)\wedge \tau_{s,x}^{\bf u^\epsilon},  X_{ (s+\epsilon) \wedge \tau_{s, x}^{\bf u^\epsilon}}^{\bf u^\epsilon} }^{\bf \hat u}} \delta(s,t) \left(\hat l_t - \phi \hat r_t \right) dt \right]\right]\\
=& \mathbb{E}_{s, x} \left[\int_{s}^{(s+\epsilon) \wedge \tau_{s, x}^{\bf u^\epsilon}} \delta(s,t) dt \right] \left(l - \phi r \right)  + \mathbb{E}_{s, x} \left[g\left( (s+\epsilon) \wedge\tau_{s, x}^{\bf u^\epsilon} , X_{(s+\epsilon) \wedge \tau_{s, x}^{\bf u^\epsilon}}^{\bf u^\epsilon}; s \right) \right],
\end{align*}
where the last equality comes from \eqref{eq:g}.
Applying It\^{o}'s Lemma on $V(t, X_{t}^{\bf u^{\epsilon}})$ and taking conditional expectation on both sides,
\begin{align*}
V(s, x) &= \mathbb{E}_{s, x}\left[V( (s+\epsilon) \wedge \tau_{s, x}^{\bf u^\epsilon}, X_{ (s+\epsilon) \wedge \tau_{s, x}^{\bf u^\epsilon}}^{\bf u^\epsilon})\right] \\
 & \quad - \mathbb{E}_{s, x}\left[\int_s^{(s+\epsilon) \wedge \tau_{s, x}^{\bf u^\epsilon}} \left( V_s(t, X_{t}^{\bf u^{\epsilon}}) + (\mu - l + r) V_x(t, X_{t}^{\bf u^{\epsilon}}) + \frac{1}{2} \sigma^2 V_{xx}(t, X_{t}^{\bf u^{\epsilon}}) \right) dt \right] \\
&= \mathbb{E}_{s, x}\left[f( (s+\epsilon) \wedge \tau_{s, x}^{\bf u^\epsilon}, X_{ (s+\epsilon) \wedge \tau_{s, x}^{\bf u^\epsilon}}^{\bf u^\epsilon})\right] \\
& \quad - \mathbb{E}_{s, x}\left[\int_s^{(s+\epsilon) \wedge \tau_{s, x}^{\bf u^\epsilon}} \left( V_s(t, X_{t}^{\bf u^{\epsilon}}) + (\mu - l + r) V_x(t, X_{t}^{\bf u^{\epsilon}}) + \frac{1}{2} \sigma^2 V_{xx}(t, X_{t}^{\bf u^{\epsilon}}) \right) dt \right],
\end{align*}
where the last equality comes from \eqref{eq:Vf}. Noting that $V(s, x) = J(s, x, {\bf \hat u})$, we obtain
\begin{equation}\label{eq:J_diff}
\begin{aligned}
& J(s, x, {\bf \hat u}) - J(s, x, {\bf u^\epsilon})\\
=& \mathbb{E}_{s, x}\left[f((s+\epsilon) \wedge \tau_{s, x}^{\bf u^\epsilon}, X_{(s+\epsilon) \wedge \tau_{s, x}^{\bf u^\epsilon}}^{\bf u^\epsilon}) - g\left( (s+\epsilon) \wedge\tau_{s, x}^{\bf u^\epsilon} , X_{(s+\epsilon) \wedge \tau_{s, x}^{\bf u^\epsilon}}^{\bf u^\epsilon}; s \right) \right]\\
\quad & - \mathbb{E}_{s, x}\left[\int_s^{(s+\epsilon) \wedge \tau_{s, x}^{\bf u^\epsilon}} \left( V_s(t, X_{t}^{\bf u^{\epsilon}}) + (\mu - l + r) V_x(t, X_{t}^{\bf u^{\epsilon}}) + \frac{1}{2} \sigma^2 V_{xx}(t, X_{t}^{\bf u^{\epsilon}}) \right) dt \right] \\
\quad & - \mathbb{E}_{s, x} \left[\int_{s}^{(s+\epsilon) \wedge \tau_{s, x}^{\bf u^\epsilon}} \delta(s,t) dt \right] \left(l - \phi r \right).
\end{aligned}
\end{equation}
For the first term on the right-hand side of \eqref{eq:J_diff},
\begin{equation}\label{eq:f-g}
\begin{aligned}
& \quad \lim_{\epsilon \to 0}\frac{\mathbb{E}_{s, x}\left[f((s+\epsilon) \wedge \tau_{s, x}^{\bf u^\epsilon}, X_{(s+\epsilon) \wedge \tau_{s, x}^{\bf u^\epsilon}}^{\bf u^\epsilon}) - g\left( (s+\epsilon) \wedge\tau_{s, x}^{\bf u^\epsilon} , X_{(s+\epsilon) \wedge \tau_{s, x}^{\bf u^\epsilon}}^{\bf u^\epsilon}; s \right) \right]}{\epsilon}\\
& = \lim_{\epsilon \to 0}\frac{\mathbb{E}_{s, x}\left[f((s+\epsilon) \wedge \tau_{s, x}^{\bf u^\epsilon}, X_{(s+\epsilon) \wedge \tau_{s, x}^{\bf u^\epsilon}}^{\bf u^\epsilon})\right]   - f(s, x)}{\epsilon}\\
&\quad - \lim_{\epsilon \to 0} \frac{ \mathbb{E}_{s, x}\left[g((s+\epsilon) \wedge\tau_{s, x}^{\bf u^\epsilon} , X_{(s+\epsilon) \wedge \tau_{s, x}^{\bf u^\epsilon}}^{\bf u^\epsilon}; s)\right] - g(s, x; s)}{\epsilon}\\
& = f_s(s,x) + (\mu - l + r) f_x(s,x) + \frac{1}{2} \sigma^2 f_{xx}(s,x) \\
& \quad - \left(  g_s(s,x;s) + (\mu - l + r) g_x(s,x;s) + \frac{1}{2} \sigma^2 g_{xx}(s,x; s)\right).
\end{aligned}
\end{equation}
For the second term on the right-hand side of \eqref{eq:J_diff},
\begin{align*}
& \quad \lim_{\epsilon \to 0} -\frac{\mathbb{E}_{s, x}\left[\int_s^{(s+\epsilon) \wedge \tau_{s, x}^{\bf u^\epsilon}} \left( V_s(t, X_{t}^{\bf u^{\epsilon}}) + (\mu - l + r) V_x(t, X_{t}^{\bf u^{\epsilon}}) + \frac{1}{2} \sigma^2 V_{xx}(t, X_{t}^{\bf u^{\epsilon}}) \right) dt\right]}{\epsilon}\\
&= - \left( V_s(s,x) + (\mu - l + r) V_x(s,x) + \frac{1}{2} \sigma^2 V_{xx}(s,x) \right).
\end{align*}
For the third term on the right-hand side of \eqref{eq:J_diff},
\begin{align*}
\lim_{\epsilon \to 0} -\frac{\mathbb{E}_{s, x} \left[\int_{s}^{(s+\epsilon) \wedge \tau_{s, x}^{\bf u^\epsilon}} \delta(s,t) dt \right] \left(l - \phi r \right) }{\epsilon}
= -\left( l - \phi r \right).
\end{align*}
Therefore, dividing $\epsilon$ and letting $\epsilon\to 0$ on both sides of \eqref{eq:J_diff} leads to
\begin{align*}
& \quad \liminf_{\epsilon \downarrow 0} \frac{ J(s, x, {\bf \hat u}) - J(s, x, {\bf u^\epsilon})}{\epsilon} \\
& = f_s(s,x) + (\mu - l + r) f_x(s,x) + \frac{1}{2} \sigma^2 f_{xx}(s,x) \\
& \quad - \left(  g_s(s,x;s) + (\mu - l + r) g_x(s,x;s) + \frac{1}{2} \sigma^2 g_{xx}(s,x; s)\right) \\
& \quad -\left( V_s(s,x) + (\mu - l + r) V_x(s,x) + \frac{1}{2} \sigma^2 V_{xx}(s,x) \right) - \left( l - \phi r \right) \\
& \geq - \sup_{l \in [0, \bar l], r \in [0, \bar r]} \Big\{ V_s(s,x) + (\mu - l + r) V_x(s,x) + \frac{1}{2} \sigma^2 V_{xx}(s,x) +  \left( l - \phi r \right) \\
& \quad - \left( f_s(s,x) + (\mu - l + r) f_x(s,x) + \frac{1}{2} \sigma^2 f_{xx}(s,x) \right ) \\
& \quad + g_s(s,x;s) + (\mu - l + r) g_x(s,x;s) + \frac{1}{2} \sigma^2 g_{xx}(s,x; s) \Big\}\\
& = 0,
\end{align*}
where the last equality is due to \eqref{extended_HJB:1}.
\qed

\pfof{Proposition \ref{prop:simplified_HJB}}
Note that $f(s, x) = g(s, x; s)$ as given by \eqref{extended_HJB:3}, we have
$$
f_x(s,x) = \lim_{\epsilon\to0} \frac{f(s, x+\epsilon) - f(s, x)}{\epsilon} = \lim_{\epsilon\to0} \frac{g(s, x+\epsilon; s) - g(s, x; s)}{\epsilon} = g_x(s,x;s),
$$
and similarly, $f_{xx}(s,x) = g_{xx}(s,x;s)$. Thus
\begin{align*}
& f_s(s,x) + (\mu - l + r) f_x(s,x) + \frac{1}{2} \sigma^2 f_{xx}(s,x) \\
& \quad - \left(g_s(s,x;s) + (\mu - l + r) g_x(s,x;s) + \frac{1}{2} \sigma^2 g_{xx}(s,x; s)\right) = f_s(s,x) - g_s(s,x;s).
\end{align*}

Note the fact that $\hat l_{\tau_{s, x}^{\bf \hat u}} = \hat r_{\tau_{s, x}^{\bf \hat u}} = 0$ given initial pair $(s, x)$. Recall the probabilistic interpretation \eqref{eq:g}-\eqref{eq:f}, by using Leibniz formula, we derive that
$$
g_s(s,x;k) = \mathbb{E}_{s, x} \left[\frac{\partial \int_s^{\tau_{s, x}^{\bf \hat u}} \delta(k, t) \left(\hat l_t - \phi \hat r_t \right) dt}{\partial s}\right]= - \delta(k, s) \left(\hat l_s - \phi \hat r_s \right),
$$
and
\begin{align*}
f_s(s,x) &= \mathbb{E}_{s, x} \left[\frac{\partial \int_s^{\tau_{s, x}^{\bf \hat u}} \delta(s, t) \left(\hat l_t - \phi \hat r_t \right) dt}{\partial s}\right]\\
&= - \delta(s, s) \left(\hat l_s - \phi \hat r_s \right) + \mathbb{E}_{s, x} \left[\int_s^{\tau_{s, x}^{\bf \hat u}} \delta_s(s, t) \left(\hat l_t - \phi \hat r_t \right) dt\right]\\
&= - \left(\hat l_s - \phi \hat r_s \right) + \mathbb{E}_{s, x} \left[\int_s^{\tau_{s, x}^{\bf \hat u}} \delta_s(s, t) \left(\hat l_t - \phi \hat r_t \right) dt\right],
\end{align*}
where $\delta_s(s, t)$ is the partial derivative with respect to variable $s$. Therefore,
\begin{align*}
f_s(s,x) - g_s(s,x;s) &= - \left(\hat l_s - \phi \hat r_s \right) + \mathbb{E}_{s, x} \left[\int_s^{\tau_{s, x}^{\bf \hat u}} \delta_s(s, t) \left(\hat l_t - \phi \hat r_t \right) dt\right] + \delta(s, s) \left(\hat l_s - \phi \hat r_s \right)\\
&= \mathbb{E}_{s, x} \left[\int_s^{\tau_{s, x}^{\bf \hat u}} \delta_s(s, t) \left(\hat l_t - \phi \hat r_t \right) dt\right].
\end{align*}
It is obvious that the above term has nothing to do with strategy $u = (l, r)$, thus we take it out of the supremum in \eqref{extended_HJB:1}, which leads to the conclusion.
\qed

\pfof{Proposition \ref{prop:Vi}}
Note that given the equilibrium strategy \eqref{eq:equilibrium_feedback_strategy}, the ODEs \eqref{eq:Vi_ODE} is equivalent to
\begin{equation}\label{eq:Vi_HJB}
\frac{1}{2} \sigma^2 v^{\prime\prime}_i(x) + (\mu - \hat l(x) + \hat r(x)) v^\prime_i(x) - \rho_i v_i(x) + (\hat l(x) - \phi \hat r(x)) = 0.
\end{equation}
Then we show that $v_i\in {\cal C}^2$ solves \eqref{eq:Vi_HJB} has the probabilistic interpretation \eqref{eq:Vi}. Applying It\^{o}'s Lemma on $e^{-\rho_i t} v_i(X_{t}^{\bf \hat u})$, for any $T > 0$,
\begin{align*}
v_i(x) &= e^{-\rho_i (T \wedge \tau_{0, x}^{\bf \hat u})} v_i(X_{T \wedge \tau_{0, x}^{\bf \hat u})}^{\bf \hat u})\\
& \quad - \int_0^{T \wedge \tau_{0, x}^{\bf \hat u}} e^{-\rho_i t} \left(-\rho_i v_i(X_t^{\bf \hat u}) + \left(\mu - \hat l(X_t^{\bf \hat u}) + \hat r(X_t^{\bf \hat u})\right)v^\prime_i(X_t^{\bf \hat u}) + \frac{1}{2} \sigma^2 v^{\prime\prime}_i(X_t^{\bf \hat u}) \right) dt\\
&\quad - \int_0^{T \wedge \tau_{0, x}^{\bf \hat u}} \sigma e^{-\rho_i t} v_i^\prime(x) dW_t\\
&= e^{-\rho_i (T \wedge \tau_{0, x}^{\bf \hat u})} v_i(X_{T \wedge \tau_{0, x}^{\bf \hat u})}^{\bf \hat u}) + \int_0^{T \wedge \tau_{0, x}^{\bf \hat u}} e^{-\rho_i t} \left(\hat l(X_t^{\bf \hat u}) - \phi \hat r(X_t^{\bf \hat u})\right) dt \\
& \quad - \int_0^{T \wedge \tau_{0, x}^{\bf \hat u}} \sigma e^{-\rho_i t} v_i^\prime(x) dW_t,
\end{align*}
where the last equality is due to \eqref{eq:Vi_HJB}. Then taking expectations on both sides,
\begin{align*}
v_i(x) &= \expect_{0,x}\left[e^{-\rho_i (T \wedge \tau_{0, x}^{\bf \hat u})} v_i(X_{T \wedge \tau_{0, x}^{\bf \hat u})}^{\bf \hat u})\right] + \expect_{0,x}\left[\int_0^{T \wedge \tau_{0, x}^{\bf \hat u}} e^{-\rho_i t} \left(\hat l(X_t^{\bf \hat u}) - \phi \hat r(X_t^{\bf \hat u})\right) dt \right] \\
& \quad - \expect_{0,x}\left[\int_0^{T \wedge \tau_{0, x}^{\bf \hat u}} \sigma e^{-\rho_i t} v_i^\prime(x) dW_t \right].
\end{align*}
Letting $T \to \infty$ and similar to the arguments in the proof of Theorem \ref{thm:verification_equilibrium}, we have
$$
v_i(x) = \expect_{0,x}\left[\int_0^{\tau_{0, x}^{\bf \hat u}} e^{-\rho_i t} \left(\hat l(X_t^{\bf \hat u}) - \phi \hat r(X_t^{\bf \hat u})\right) dt \right].
$$
Thus, $V(x) = \omega v_1(x) + (1-\omega) v_2(x)$ is the equilibrium value function.
\qed

\begin{proposition}\label{prop:parameter_sign}
Let $A_{i1}$, $B_{i3}$ be given by \eqref{eq:ABs} with $C_{ij}$, $D_{ij}$, $E_i$ defined in \eqref{eq:CDEs}, then $A_{i1} > 0$ and $B_{i3} < 0$, $\forall i = 1, 2$.
\end{proposition}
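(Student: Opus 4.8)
The plan is to reduce everything to the ordering of the characteristic roots and then handle the two signs by different means: $B_{i3}<0$ via the probabilistic representation of $V_i$, and $A_{i1}>0$ via a direct sign analysis of the explicit constants. First I would record the ordering. Since $\theta_i^+(\cdot)$ and $\theta_i^-(\cdot)$ are both strictly decreasing in their argument and $\mu-\bar l<\mu<\mu+\bar r$, the roots in \eqref{eq:theta_i} satisfy $\theta_{i2}<\theta_{i4}<\theta_{i5}<0<\theta_{i1}<\theta_{i3}$. This ordering, together with the standing hypothesis $0<x_1\le x_2$, is the only input the sign computations will use.

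For $B_{i3}<0$ I would argue probabilistically. On $[x_2,\infty)$ the explicit solution is $V_i(x)=\bar l/\rho_i+B_{i3}e^{\theta_{i5}x}$, and by Proposition \ref{prop:Vi} this equals $\expect_{0,x}[\int_0^{\tau}e^{-\rho_i t}(\hat l-\phi\hat r)\,dt]$, while by Proposition \ref{corollary:Vi_bounded} it is at most $\bar l/\rho_i$. The running reward $\hat l-\phi\hat r$ never exceeds $\bar l$, and the value $\bar l/\rho_i$ is attained only if the controlled surplus remains in the dividend region $[x_2,\infty)$ for all time without being ruined; since $\sigma>0$ this event has probability zero, so $V_i(x)<\bar l/\rho_i$ strictly. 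Hence $B_{i3}e^{\theta_{i5}x}<0$, giving $B_{i3}<0$.

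For $A_{i1}>0$ I would use $A_{i1}=-(C_{i1}\bar l+D_{i1}\phi\bar r)/(\rho_i E_i)$ and show the numerator is negative while $E_i>0$. From the ordering, $C_{i1}=\theta_{i5}(\theta_{i3}-\theta_{i4})e^{(\theta_{i3}+\theta_{i4})x_1}<0$ immediately. The two delicate quantities are $E_i$ and $D_{i1}$, whose four summands carry mixed signs; the key device is to pair the summands by their $x_2$-exponential and factor. For $E_i$ this gives $E_i=e^{\theta_{i3}x_1+\theta_{i4}x_2}[(\theta_{i3}-\theta_{i5})e^{(\theta_{i3}-\theta_{i4})(x_2-x_1)}q(x_1)-(\theta_{i5}-\theta_{i4})p(x_1)]$, where $q(x_1)=(\theta_{i1}-\theta_{i4})e^{\theta_{i1}x_1}+(\theta_{i4}-\theta_{i2})e^{\theta_{i2}x_1}>0$ and $p(x_1)=(\theta_{i3}-\theta_{i1})e^{\theta_{i1}x_1}+(\theta_{i2}-\theta_{i3})e^{\theta_{i2}x_1}$; since the $q$-term is positive and $e^{(\theta_{i3}-\theta_{i4})(x_2-x_1)}\ge1$ (here $x_1\le x_2$ enters), the bracket is at least $(\theta_{i3}-\theta_{i5})q(x_1)-(\theta_{i5}-\theta_{i4})p(x_1)$, which expands and factors as $(\theta_{i3}-\theta_{i4})[(\theta_{i1}-\theta_{i5})e^{\theta_{i1}x_1}+(\theta_{i5}-\theta_{i2})e^{\theta_{i2}x_1}]>0$, so $E_i>0$. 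The identical pairing for $D_{i1}$ gives $D_{i1}=e^{\theta_{i3}x_1+\theta_{i4}x_2}[P+e^{(\theta_{i3}-\theta_{i4})(x_2-x_1)}Q]$ with $Q=(\theta_{i3}-\theta_{i5})[\theta_{i4}+(\theta_{i2}-\theta_{i4})e^{\theta_{i2}x_1}]<0$ and $P+Q=(\theta_{i3}-\theta_{i4})[\theta_{i5}+(\theta_{i2}-\theta_{i5})e^{\theta_{i2}x_1}]<0$; because $Q<0$ and the exponential is $\ge1$, the bracket is at most $P+Q<0$, so $D_{i1}<0$. Then $C_{i1}\bar l+D_{i1}\phi\bar r<0$ and $E_i>0$ yield $A_{i1}>0$.

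The main obstacle is precisely the sign of the exponential sums $E_i$ and $D_{i1}$ defined in \eqref{eq:D}--\eqref{eq:CDEs}: term by term they are sign-indefinite and depend on cancellation, so the argument hinges on choosing the correct pairing (by the $x_2$-exponential), factoring out the common exponentials, and recognizing that the residual combination carries the factor $(\theta_{i3}-\theta_{i4})$ multiplied by a manifestly signed combination of $e^{\theta_{i1}x_1}$ and $e^{\theta_{i2}x_1}$, with the single inequality $e^{(\theta_{i3}-\theta_{i4})(x_2-x_1)}\ge1$ supplied by $x_1\le x_2$ doing the work of closing each sign. Everything else, namely the root ordering and the signs of $C_{i1}$ and $B_{i3}$, is routine.
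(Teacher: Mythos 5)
Your sign analysis for $A_{i1}$ is correct; I checked the algebra. The pairing of the four summands of $E_i$ and of $D_{i1}$ by their $x_2$-exponential, the factorizations $(\theta_{i3}-\theta_{i5})q(x_1)-(\theta_{i5}-\theta_{i4})p(x_1)=(\theta_{i3}-\theta_{i4})\left[(\theta_{i1}-\theta_{i5})e^{\theta_{i1}x_1}+(\theta_{i5}-\theta_{i2})e^{\theta_{i2}x_1}\right]$ and $P+Q=(\theta_{i3}-\theta_{i4})\left[\theta_{i5}+(\theta_{i2}-\theta_{i5})e^{\theta_{i2}x_1}\right]$, and the single inequality $e^{(\theta_{i3}-\theta_{i4})(x_2-x_1)}\ge 1$ all verify. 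In substance this is the same mechanism as the paper's proof of $E_i>0$ and $D_{i1}\le 0$ (root ordering plus $x_1\le x_2$ used to compare the two cross exponentials $e^{\theta_{i4}x_1+\theta_{i3}x_2}$ and $e^{\theta_{i3}x_1+\theta_{i4}x_2}$), just organized as a clean factoring identity rather than the paper's term-by-term rearrangement and bounding; your observation that $\theta_i^{+}(\cdot)$ and $\theta_i^{-}(\cdot)$ are strictly decreasing is also a tidier derivation of the ordering than the paper's squaring computations.

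The genuine flaw is in your $B_{i3}$ step. You assert that, because $\sigma>0$, the event that the controlled surplus stays in $[x_2,\infty)$ for all time has probability zero. That is false whenever $\mu>\bar l$: in the dividend region the surplus is a Brownian motion with drift $\mu-\bar l>0$, and such a process started at $x>x_2$ remains above $x_2$ forever with probability $1-e^{-2(\mu-\bar l)(x-x_2)/\sigma^2}>0$. Fortunately your conclusion only needs the much weaker, true statement that this event has probability strictly less than one: since $\sigma>0$, the surplus dips below $x_2$ with positive probability and, by path continuity, then accumulates a positive Lebesgue measure of time during which the running reward is at most $0<\bar l$, so $V_i(x)<\bar l/\rho_i$ strictly, and $B_{i3}<0$ follows from $V_i(x)=\bar l/\rho_i+B_{i3}e^{\theta_{i5}x}$. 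With that one-line repair your route for $B_{i3}$ is genuinely different from, and shorter than, the paper's: the paper only invokes the weak bound $V_i\le\bar l/\rho_i$ of Proposition \ref{corollary:Vi_bounded} to get $B_{i3}\le 0$, and then excludes $B_{i3}=0$ by a contradiction argument combining smooth fit at $x_2$, the induced signs of $A_{i2}$, $B_{i2}$, $B_{i1}$, convexity of $V_i$ below $x_2$, and the boundary condition $V_i(0)=0$ (with a separate remark for the case $x_1=x_2$). Your strict probabilistic bound bypasses that entire contradiction, but only once the strictness is justified correctly, which, as written, it is not.
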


\pfof{Proposition \ref{prop:parameter_sign}}
First, to show the sign of $A_{i1}$, notice that $C_{i1}  < 0$. Recall the definition of $\theta_{ij}$ given in \eqref{eq:theta_i}, we derive that
$$
\theta_{i3} - \theta_{i1} = \frac{\bar r + \sqrt{\mu^2 + 2 \rho_i \sigma^2} - \sqrt{(\mu + \bar r)^2 + 2 \rho_i \sigma^2}}{\sigma^2} > 0,
$$
where we use the fact that
$$
\left(\bar r + \sqrt{\mu^2 + 2 \rho_i \sigma^2}\right)^2 - \left[(\mu + \bar r)^2 + 2 \rho_i \sigma^2\right] = 2 \bar r \left(\sqrt{\mu^2 + 2 \rho_i \sigma^2} - \mu\right) > 0,
$$
and we can show $\theta_{i2} < \theta_{i4} < \theta_{i5}$ using the same argument. Recall that
\begin{align*}
E_i  &= (\theta_{i3} - \theta_{i1}) (\theta_{i4} - \theta_{i5}) e^{(\theta_{i1}+\theta_{i3}) x_1 + \theta_{i4} x_2}  + (\theta_{i1} - \theta_{i4}) (\theta_{i3} - \theta_{i5}) e^{(\theta_{i1} + \theta_{i4}) x_1 + \theta_{i3} x_2} \\
&\quad + (\theta_{i2} - \theta_{i3}) (\theta_{i4} - \theta_{i5}) e^{(\theta_{i2}+\theta_{i3} ) x_1 + \theta_{i4} x_2} +(\theta_{i4} - \theta_{i2}) (\theta_{i3} - \theta_{i5}) e^{(\theta_{i2}+\theta_{i4}) x_1 + \theta_{i3} x_2}.
\end{align*}
It is obvious that the last two terms of $E_i$ are positive, and the first two terms of $E_i$ can be rearranged as follows:
\begin{align*}
& (\theta_{i3} - \theta_{i1}) (\theta_{i4} - \theta_{i5}) e^{(\theta_{i1}+\theta_{i3}) x_1 + \theta_{i4} x_2}  + (\theta_{i1} - \theta_{i4}) (\theta_{i3} - \theta_{i5}) e^{(\theta_{i1} + \theta_{i4}) x_1 + \theta_{i3} x_2} \\
= & \theta_{i4} (\theta_{i3} - \theta_{i1}) e^{(\theta_{i1}+\theta_{i3}) x_1 + \theta_{i4} x_2} - \theta_{i5}  (\theta_{i3} - \theta_{i1}) e^{(\theta_{i1}+\theta_{i3}) x_1 + \theta_{i4} x_2} \\
\quad& + \theta_{i1} (\theta_{i3} - \theta_{i5}) e^{(\theta_{i1} + \theta_{i4}) x_1 + \theta_{i3} x_2} - \theta_{i4} (\theta_{i3} - \theta_{i5}) e^{(\theta_{i1} + \theta_{i4}) x_1 + \theta_{i3} x_2}\\
= & \theta_{i4} (\theta_{i3} - \theta_{i1}) e^{(\theta_{i1}+\theta_{i3}) x_1 + \theta_{i4} x_2} - \theta_{i4} (\theta_{i3} - \theta_{i5}) e^{(\theta_{i1} + \theta_{i4}) x_1 + \theta_{i3} x_2} \\
\quad & - \theta_{i5}  (\theta_{i3} - \theta_{i1}) e^{(\theta_{i1}+\theta_{i3}) x_1 + \theta_{i4} x_2} + \theta_{i1} (\theta_{i3} - \theta_{i5}) e^{(\theta_{i1} + \theta_{i4}) x_1 + \theta_{i3} x_2}.
\end{align*}
The last two terms above are positive by definition, and it remains to check the first two terms. Note that $0 < x_1 \le x_2$, thus $(\theta_{i1}+\theta_{i3}) x_1 + \theta_{i4} x_2 \le (\theta_{i1} + \theta_{i4}) x_1 + \theta_{i3} x_2$ and $e^{(\theta_{i1}+\theta_{i3}) x_1 + \theta_{i4} x_2} \le e^{(\theta_{i1} + \theta_{i4}) x_1 + \theta_{i3} x_2}$. Then we obtain that
\begin{align*}
& \theta_{i4} (\theta_{i3} - \theta_{i1}) e^{(\theta_{i1}+\theta_{i3}) x_1 + \theta_{i4} x_2} - \theta_{i4} (\theta_{i3} - \theta_{i5}) e^{(\theta_{i1} + \theta_{i4}) x_1 + \theta_{i3} x_2}\\
\ge & \theta_{i4} (\theta_{i3} - \theta_{i1}) e^{(\theta_{i1} + \theta_{i4}) x_1 + \theta_{i3} x_2} - \theta_{i4} (\theta_{i3} - \theta_{i5}) e^{(\theta_{i1} + \theta_{i4}) x_1 + \theta_{i3} x_2}\\
= & \theta_{i4} (\theta_{i5} - \theta_{i1}) e^{(\theta_{i1} + \theta_{i4}) x_1 + \theta_{i3} x_2} >  0.
\end{align*}
Then it shows that $E_i > 0$.
Note that
\begin{align*}
D_{i1} &= \theta_{i3} \theta_{i4} \left(e^{\theta_{i4} x_1 + \theta_{i3} x_2} - e^{\theta_{i3} x_1 + \theta_{i4} x_2}\right) + \theta_{i3} \theta_{i5} e^{\theta_{i3} x_1 + \theta_{i4} x_2} - \theta_{i4} \theta_{i5} e^{\theta_{i4} x_1 + \theta_{i3} x_2}\\
& \quad + (\theta_{i3} - \theta_{i2}) (\theta_{i4} - \theta_{i5}) e^{(\theta_{i2} + \theta_{i3}) x_1+ \theta_{i4} x_2} + (\theta_{i2} - \theta_{i4}) (\theta_{i3} - \theta_{i5}) e^{(\theta_{i2} + \theta_{i4}) x_1 + \theta_{i3} x_2}\\
& < \theta_{i3} \theta_{i4} \left(e^{\theta_{i4} x_1 + \theta_{i3} x_2} - e^{\theta_{i3} x_1 + \theta_{i4} x_2}\right).
\end{align*}
Since $0 < x_1 \le x_2$, $\theta_{i4} x_1 + \theta_{i3} x_2 \ge \theta_{i3} x_1 + \theta_{i4} x_2$, thus $e^{\theta_{i4} x_1 + \theta_{i3} x_2} - e^{\theta_{i3} x_1 + \theta_{i4} x_2} \ge 0$, from which we obtain that $D_{i1} \le 0$. Therefore, $A_{i1}$ given in \eqref{eq:ABs} is positive.

Next, to show $B_{i3} < 0$, knowing that, for $x \in [x_2, \infty)$,
$$
V_i(x) = \frac{\bar l}{\rho_i} + B_{i3} e^{\theta_{i5} x} \leq \frac{\bar l}{\rho_i},
$$
where the inequality comes directly from Proposition \ref{corollary:Vi_bounded}, it can be inferred that $B_{i3} \le 0$.

Suppose $B_{i3} = 0$. If $0 < x_1 < x_2$, according to smooth fit principle, we have $V_i(x_2-) = V_i(x_2) = \frac{\bar l}{\rho_i} > 0$ and $V^\prime_i(x_2-) = V^\prime_i(x_2) = B_{i3} \theta_{i5} e^{\theta_{i5} x_2} = 0$. Using expression of $V_i$ given in \eqref{eq:Vi_solution},
\begin{equation*}
\left\{\begin{aligned}
& V_i(x_2-) = A_{i2} e^{\theta_{i3} x_2-} + B_{i2} e^{\theta_{i4} x_2-} > 0,\\
& V^\prime_i(x_2-) = A_{i2} \theta_{i3} e^{\theta_{i3} x_2-} + B_{i2} \theta_{i4} e^{\theta_{i4} x_2-} = 0.
\end{aligned}\right.
\end{equation*}
Multiply the first inequality in above by $\theta_{i3}$ and $\theta_{i4}$ respectively, and then subtract the second equality:
\begin{equation*}
\left\{\begin{aligned}
& A_{i2} (\theta_{i4} - \theta_{i3}) e^{\theta_{i3} x_2-} < 0,\\
& B_{i2} (\theta_{i3} - \theta_{i4}) e^{\theta_{i4} x_2-} > 0,
\end{aligned}\right.
\end{equation*}
from which one obtains that $A_{i2} > 0$ and $B_{i2} > 0$. Therefore, for $x \in [x_1, x_2)$, $V^{\prime\prime}_i(x) =  A_{i2} \theta_{i3}^2 e^{\theta_{i3} x} + B_{i2} \theta_{i4}^2 e^{\theta_{i4} x} > 0$ and thus $V^{\prime}_i(x_1) < V^{\prime}_i(x_2-) = 0$.

Recall that $A_{i1} > 0$ as shown before and $V^{\prime}_i(x_1-) = A_{i1} \theta_{i1} e^{\theta_{i1} x_1-} + B_{i1} \theta_{i2} e^{\theta_{i2} x_1-} = V^{\prime}_i(x_1) < 0$, $B_{i1}$ can only be strictly positive. Therefore, using the same argument we obtain that $V^{\prime\prime}_i(x) > 0$ and $V^{\prime}_i(x) < V^{\prime}_i(x_1) < 0$ for $x \in [0, x_1)$, it then follows that $V_i(0) = 0 > V_i(x_1) = A_{i2} e^{\theta_{i3} x_1} + B_{i2} e^{\theta_{i4} x_1} > 0$, which leads to the contradiction. Then $B_{i3} < 0$.
The proof of the case of $0 < x_1 = x_2$ is similar.
\qed

Because $B_{i3} < 0$, $V_i$ is concave on interval $[x_2, \infty)$. The concavity of $V_i$ on the whole domain is proved with the help of Lemma \ref{lemma:concave} inspired by \citet[Lemma 4.1]{shreve1984optimal} and \citet[Proposition 3.2]{bai2023reinforcement}.

\begin{lemma}\label{lemma:concave}
Suppose that $v_i(x)$, $i = 1, 2$, are solutions to the linear ordinary differential equations:
\begin{equation}\label{eq:ode_concave}
\frac{1}{2} \sigma^2 v_i^{\prime\prime}(x) + (\mu + \alpha) v_i^{\prime}(x) - \rho_i v_i(x) - \phi \alpha = 0, \quad x \in [a,b),
\end{equation}
where $\mu, \alpha > 0$, $0 < \rho_1 \le \rho_2$, and $\phi \geq 1$ are constants. Then $v(x) = \omega v_1(x) + (1-\omega)v_2(x)$ is a concave function if $v^{\prime\prime}(b-) < 0$ and $v_i(x)$, $i=1, 2$, satisfy either of the following conditions:
\begin{enumerate}
\item[(i)] $v_i^{\prime}(x) > 0$, $\forall x \in [a,b)$;
\item[(ii)] $v^{\prime}(b-) = \phi$, $v_i^{\prime}(b-) > 0$, and $v_i(a) = 0$.
\end{enumerate}
\end{lemma}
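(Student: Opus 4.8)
The plan is to prove concavity by showing $v''\le 0$ on $[a,b)$, exploiting a sign-propagation property of $v''$ at its zeros. The essential difficulty is that $v=\omega v_1+(1-\omega)v_2$ does not itself solve a single second-order ODE, because $v_1$ and $v_2$ carry different discount rates $\rho_1\le\rho_2$, so the usual one-equation maximum-principle argument does not apply directly. I would get around this by differentiating the two defining equations rather than combining them at the level of $v$.

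First I would record that, differentiating \eqref{eq:ode_concave} once, each derivative $w_i:=v_i'$ solves the homogeneous equation $\tfrac12\sigma^2 w_i''+(\mu+\alpha)w_i'-\rho_i w_i=0$. Summing the once-differentiated equations with weights $\omega,1-\omega$ gives the key identity
\[
\tfrac12\sigma^2 v'''(x)+(\mu+\alpha)v''(x)=\omega\rho_1 v_1'(x)+(1-\omega)\rho_2 v_2'(x)=:Q(x).
\]
Hence at any zero $z$ of $v''$ one has $v'''(z)=\tfrac{2}{\sigma^2}Q(z)$, so \emph{$v''$ is strictly increasing through every zero at which $Q>0$}. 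I would then invoke the elementary fact that a $C^1$ function that is negative at $b^-$ and strictly increasing at each of its zeros cannot take a positive value on $[a,b)$: if $v''(y)>0$ for some $y<b$, then, using $v''(b-)<0$, the point $\inf\{x>y:v''(x)\le 0\}$ is an interior zero at which $v''$ is nonincreasing, contradicting $Q>0$ there. Thus the whole problem reduces to verifying $Q>0$ at the relevant zeros of $v''$.

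Under hypothesis (i) this is immediate, since $v_1',v_2'>0$ forces $Q>0$ everywhere, and concavity follows at once. Under hypothesis (ii) the work is to control $Q$ without global monotonicity. Here I would argue backward from $b$: assuming $v$ is not concave, let $z^\ast<b$ be the largest zero of $v''$, so that $v''<0$ on $(z^\ast,b)$ and $v$ is concave there; since $v'(b-)=\phi$, monotonicity of $v'$ on $(z^\ast,b)$ gives $v'(z^\ast)>\phi>0$. Using the two rearrangements $Q=\rho_1 v'+(1-\omega)(\rho_2-\rho_1)v_2'$ and $Q=\rho_2 v'-\omega(\rho_2-\rho_1)v_1'$ together with $\rho_2\ge\rho_1$, I obtain $Q(z^\ast)>0$ whenever $v_2'(z^\ast)\ge 0$ or $v_1'(z^\ast)\le 0$; the only dangerous configuration is $v_1'(z^\ast)>0>v_2'(z^\ast)$.

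The main obstacle is excluding exactly this mixed-monotonicity case. To rule it out I would use a maximum principle for the homogeneous operator solved by $w_i=v_i'$: since its zeroth-order coefficient $-\rho_i$ is negative, $w_i$ admits no interior negative minimum, so $\min_{[a,b]}v_i'$ is attained at an endpoint; combined with $v_2'(b-)>0$ this places any negative value of $v_2'$ at the left endpoint, forcing $v_2'(a)<0$. I would then contradict $v_2'(a)<0$ using the boundary condition $v_i(a)=0$ and the sign $A_{i1}>0$ from Proposition \ref{prop:parameter_sign}, which together pin down the left-endpoint derivative, with the bounds $V_i\le \bar l/\rho_i$ of Proposition \ref{corollary:Vi_bounded} supplying the auxiliary control on the signs of the exponential coefficients. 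I expect this endpoint and sign bookkeeping for the two components to be the most delicate part of the argument, whereas the sign-propagation mechanism through zeros of $v''$ is the conceptual core.
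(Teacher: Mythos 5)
Your overall architecture matches the paper's: differentiating \eqref{eq:ode_concave} and forming $Q=\omega\rho_1 v_1'+(1-\omega)\rho_2 v_2'$ is exactly the paper's key identity (there written, after the change of variable $\varphi$, as $l''\propto \omega\rho_1 l_1+(1-\omega)\rho_2 l_2$); your sign-propagation-through-zeros mechanism is a valid substitute for the paper's monotonicity argument for $l'$; case (i) is correct; and under (ii) you correctly reduce to the mixed configuration $v_1'(z^*)>0>v_2'(z^*)$ at the last zero $z^*$ of $v''$, which is precisely the paper's $l_1(y^*)>0>l_2(y^*)$. Your maximum-principle observation that $w_i=v_i'$ admits no interior negative minimum is also sound, and correctly localizes any negativity of $v_2'$ at the left endpoint, playing the role of the paper's appeal to \citet[Lemma 4.1]{shreve1984optimal}.

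The final step, however — contradicting $v_2'(a)<0$ — has a genuine gap, in two respects. First, Propositions \ref{prop:parameter_sign} and \ref{corollary:Vi_bounded} are statements about the specific functions $V_i$ of the application (proved from the smooth-fit system and the probabilistic interpretation), not hypotheses of Lemma \ref{lemma:concave}; importing them destroys the lemma as a standalone ODE statement and is circular in spirit, since the lemma is invoked to prove Theorem \ref{thm:V_concave} about those same $V_i$. Second, even granting $A_{i1}>0$ and $v_i(a)=0$, these do not pin down the sign of $v_i'(a)$: writing $v_2(x)=-\phi\alpha/\rho_2+P e^{\theta^+x}+N e^{\theta^- x}$ with $P>0$, the condition $v_2(a)=0$ only fixes $P e^{\theta^+a}+Ne^{\theta^-a}=\phi\alpha/\rho_2>0$, and taking $N>0$ as well yields a convex $v_2$ with $v_2(a)=0$ and $v_2'(a)<0<v_2'(b-)$, fully consistent with the upper bound of Proposition \ref{corollary:Vi_bounded} (which bounds $V_i$ from above only; no nonnegativity of $V_2$ is available, since near $0$ the equilibrium pays injection costs with a high probability of ruin). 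So the dangerous configuration is not excluded by endpoint bookkeeping. The paper's actual exit is different and genuinely coupled: from the no-interior-zero property of $l_2'$ (hence $v_2'$ increasing throughout, in the transformed variable) it proves the global inequality $Q<0$ on all of $[a,z^*]$ via a last-zero argument on $l''$, then integrates to get $\omega\rho_1 v_1(z^*)+(1-\omega)\rho_2 v_2(z^*)<\omega\rho_1 v_1(a)+(1-\omega)\rho_2 v_2(a)=0$, while evaluating the weighted ODE at $z^*$ (where $v''(z^*)=0$ and $v'(z^*)>\phi$) forces $\omega\rho_1 v_1(z^*)+(1-\omega)\rho_2 v_2(z^*)=(\mu+\alpha)v'(z^*)-\phi\alpha>\mu\phi>0$, a contradiction. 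Your proof would be repaired by replacing the endpoint step with this integration argument; your identity for $Q$ and your maximum-principle localization can feed into it, but the crucial missing ingredients are the global monotonicity of $v_2'$ and the propagation of $Q<0$ from $z^*$ back to $a$.
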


\proof
First, differentiate \eqref{eq:ode_concave} and let $m(x) := v^\prime(x)$ and $m_i(x) := v_i^\prime(x)$, then it can be obtained that
$$
m_i^{\prime\prime}(x) = \frac{2 \rho_i}{\sigma^2} m_i(x) - \frac{2 (\mu + \alpha)}{\sigma^2} m_i^{\prime}(x), \quad x \in [a,b).
$$
Then we introduce a change of variable $\varphi(x) := \int_a^x \exp\left(-2 (\mu + \alpha) u/\sigma^2\right) du$, $x \in [a, b)$,
and define $l(y) := \omega l_1(y) + (1-\omega)l_2(y)$ with $l_i(y) := m_i(\varphi^{-1}(y))$, $y\in[0,\bar y)$, where $\bar y = \int_a^b \exp\left(-2 (\mu + \alpha) u/\sigma^2\right) du$. Note that
$$
l_i^{\prime}(y) = \frac{1}{\varphi^\prime(\varphi^{-1}(y))} m^{\prime}_i(\varphi^{-1}(y)) = \frac{m^{\prime}_i(\varphi^{-1}(y))}{\exp\left(-2 (\mu + \alpha) \varphi^{-1}(y)/\sigma^2\right)},
$$
then
$$
l^{\prime}(y) = \frac{\omega m^{\prime}_1(\varphi^{-1}(y)) + (1-\omega) m^{\prime}_2(\varphi^{-1}(y))}{\exp\left(-2 (\mu + \alpha) \varphi^{-1}(y)/\sigma^2\right)} = \frac{m^{\prime}(\varphi^{-1}(y))}{\exp\left(-2 (\mu + \alpha) \varphi^{-1}(y)/\sigma^2\right)},
$$
which implies that $v^{\prime\prime}(\varphi^{-1}(y)) = m^{\prime}(\varphi^{-1}(y))$ has same sign as $l^{\prime}(y)$, specifically, $l^{\prime}(\bar y-) < 0$ because $v^{\prime\prime}(b-) < 0$. Differentiating $l_i^{\prime}$, we obtain the following second-order linear ODEs for $l_i$:
\begin{align*}
l_i^{\prime\prime}(y) &= \left[\varphi^\prime(\varphi^{-1}(y))\right]^{-2} \left(m^{\prime\prime}_i(\varphi^{-1}(y)) - \frac{\varphi^{\prime\prime}(\varphi^{-1}(y))}{\varphi^\prime(\varphi^{-1}(y))} m^\prime_i(\varphi^{-1}(y))\right) \\
&= \left[\varphi^\prime(\varphi^{-1}(y))\right]^{-2} \left(\frac{2 \rho_i}{\sigma^2} m_i(\varphi^{-1}(y)) - \frac{2 (\mu + \alpha)}{\sigma^2} m_i^{\prime}(\varphi^{-1}(y)) + \frac{2(\mu + \alpha)}{\sigma^2} m^\prime_i(\varphi^{-1}(y))\right) \\
& = \left[\varphi^\prime(\varphi^{-1}(y))\right]^{-2} \frac{2 \rho_i}{\sigma^2} l_i(y).
\end{align*}
Hence,
\begin{align}\label{eq:l_ode}
l^{\prime\prime}(y) &= \omega l_1^{\prime\prime}(y) + (1-\omega) l_2^{\prime\prime}(y)\notag\\
&= \left[\varphi^\prime(\varphi^{-1}(y))\right]^{-2} \frac{2}{\sigma^2} \left(\omega \rho_1 l_1(y) + (1-\omega) \rho_2 l_2(y) \right).
\end{align}

On one hand, if $m_i(x) = v_i^{\prime}(x) > 0$, $\forall x \in [a,b)$, then for any $y \in [0,\bar y)$, $l_i(y) = m_i(\varphi^{-1}(y)) > 0$ and $l^{\prime\prime}(y) > 0$ by \eqref{eq:l_ode}, which implies that $l^{\prime}(y)$ is increasing on interval $[0,\bar y)$, and thus $l^{\prime}(y) < l^{\prime}(\bar y-) < 0$ for any $y \in [0,\bar y)$. As $v^{\prime\prime}(x)$ has same sign as $l^{\prime}(\varphi(x))$, we prove that $v^{\prime\prime}(x)< 0$, $\forall x \in [a, b)$.

On the other hand, assume that condition $(ii)$ holds. Suppose there exists $y^* \in [0, \bar y)$ such that $l^\prime(y^*) = 0$ and $l^\prime(y) < 0$ for $y \in (y^*, \bar y)$. Then $l^{\prime\prime}(y^*) < 0$ and by \eqref{eq:l_ode},
\begin{equation}\label{ineq1:l}
\omega \rho_1 l_1(y^*) + (1-\omega) \rho_2 l_2(y^*) < 0.
\end{equation}
Since $l^\prime(y) < 0$ for $y \in (y^*, \bar y)$, we have $l(y^*) > l(\bar y-) = m(b-) = v^\prime(b-) > 0$, that is,
\begin{equation}\label{ineq2:l}
\omega l_1(y^*) + (1-\omega) l_2(y^*) > 0.
\end{equation}
Let inequality \eqref{ineq2:l} multiply $-\rho_1$ on both sides and then add up with \eqref{ineq1:l}, we obtain $(1-\omega) (\rho_2 - \rho_1) l_2(y^*) < 0$, which implies that $l_1(y^*) > 0$ and $l_2(y^*) < 0$. Combining with the fact that $l_2(\bar y-) = v_2^\prime(b-) > 0$ and $l^\prime_2$ has no zero in $[0, \bar y)$ (thanks to \citet[Lemma 4.1]{shreve1984optimal}), it indicates that $l^\prime_2(y) > 0$, $\forall y \in [0, \bar y)$. Therefore,
\begin{equation}\label{ineq:lprime}
\omega \rho_1 l_1^{\prime}(y) + (1-\omega) \rho_2 l_2^{\prime}(y) > \omega \rho_1 l_1^{\prime}(y) + (1-\omega) \rho_1 l_2^{\prime}(y) = \rho_1 l^{\prime}(y).
\end{equation}
We then claim that $\omega \rho_1 l_1(y) + (1-\omega) \rho_2 l_2(y) < 0$ (i.e., $l^{\prime\prime}(y) < 0$) for all $y\in[0, y^*]$. Otherwise, if there exists $y_1$ such that $y_1 := \sup\{y\in[0, y^*]: l^{\prime\prime}(y_1) = 0$\}, since $l^{\prime\prime}(y^*) < 0$, we have $l^{\prime\prime}(y) < 0$ for any $y\in(y_1, y^*]$. Consequently, $l^{\prime}(y_1) > l^{\prime}(y^*) = 0$, and inequality \eqref{ineq:lprime} tells that $\omega \rho_1 l_1^{\prime}(y_1) + (1-\omega) \rho_2 l_2^{\prime}(y_1) > \rho_1 l^\prime(y_1) > 0$. Then according to \eqref{eq:l_ode}, $l^{\prime\prime}(y_1) = 0$ and $l^{\prime\prime}(y_1+) < 0$ lead to $\omega \rho_1 l_1(y_1) + (1-\omega) \rho_2 l_2(y_1) = 0$ and $\omega \rho_1 l_1(y_1+) + (1-\omega) \rho_2 l_2(y_1+) < 0$, that is, $\omega \rho_1 l_1^{\prime}(y_1) + (1-\omega) \rho_2 l_2^{\prime}(y_1) < 0$, which is a contradiction.

Since $v^{\prime\prime}(x^*)$ has same sign as $l^{\prime}(y^*) = 0$ and $v^{\prime}(x^*) = l(y^*) > l(\bar y -) = v^{\prime}(b-) = \phi$, where $x^* = \varphi^{-1}(y^*)$, combining with \eqref{eq:ode_concave} leads to
\begin{equation}\label{ineq:v_x*}
\omega \rho_1 v_1(x^*) + (1-\omega)\rho_2 v_2(x^*) = \frac{1}{2} \sigma^2 v^{\prime\prime}(x^*) + (\mu + \alpha) v^{\prime}(x^*) - \phi \alpha > 0.
\end{equation}
However, $\omega \rho_1 v^{\prime}_1(x) + (1-\omega) \rho_2 v^{\prime}_2(x) = \omega \rho_1 l_1(\phi(x)) + (1-\omega) \rho_2 l_2(\phi(x)) < 0$ for all $x \in [a, x^*]$, resulting in $\omega \rho_1 v_1(x^*) + (1-\omega) \rho_2 v_2(x^*) < \omega \rho_1 v_1(a) + (1-\omega) \rho_2 v_2(a) = 0$, which contradicts with \eqref{ineq:v_x*}, therefore, there cannot exist $y^* \in [0, \bar y)$ such that $l^\prime(y^*) = 0$, that is, $v^{\prime\prime}(x) < 0$ for all $x \in [a, b)$.
\qed

\pfof{Theorem \ref{thm:V_concave}}
First, consider $V_i(x)$ for $x \in [x_2, \infty)$. It is straightforward that $V^\prime_i(x) = B_{i3} \theta_{i5} e^{\theta_{i5} x} > 0$ and $V^{\prime\prime}_i(x) = B_{i3} \theta_{i5}^2 e^{\theta_{i5} x} < 0$ from Proposition \ref{prop:parameter_sign}.

Second, for $x \in [x_1, x_2)$, since $V^\prime_i(x_2-) = V^\prime_i(x_2) > 0$, using expression \eqref{eq:Vi_solution}, it is equivalent to $A_{i2} \theta_{i3} e^{\theta_{i3} x_2-} + B_{i2} \theta_{i4} e^{\theta_{i4} x_2-} > 0$. We claim that $V^\prime_i(x) > 0$ for all $x \in [x_1, x_2)$ by considering different signs of $A_{i2}$ and $B_{i2}$. If $A_{i2} < 0$, then in order to make $V^\prime_i(x_2-) > 0$, we can only have $B_{i2} < 0$, hence $V^{\prime\prime}_i(x) = A_{i2} \theta_{i3}^2 e^{\theta_{i3} x} + B_{i2} \theta_{i4}^2 e^{\theta_{i4} x} < 0$, indicating $V^\prime_i(x) > V^\prime_i(x_2-) > 0$, $\forall x \in [x_1, x_2)$. If $A_{i2} > 0$ and $B_{i2} < 0$, then $V^\prime_i(x) = A_{i2} \theta_{i3} e^{\theta_{i3} x} + B_{i2} \theta_{i4} e^{\theta_{i4} x} > 0$. If $A_{i2} > 0$ and $B_{i2} > 0$, suppose $V_i^\prime(x_1) < 0$. According to the smooth principle, $V_i^\prime(x_1-) = A_{i1} \theta_{i1} e^{\theta_{i1} x_1-} + B_{i1} \theta_{i2} e^{\theta_{i2} x_1-} = V_i^\prime(x_1) < 0$. As $A_{i1} > 0$ by Proposition \ref{prop:parameter_sign}, $B_{i1}$ can only be positive, resulting in $V_i^{\prime\prime}(x) = A_{i1} \theta_{i1}^2 e^{\theta_{i1} x} + B_{i1} \theta_{i2}^2 e^{\theta_{i2} x} > 0$. Thus, $V_i^\prime(x) < V_i^\prime(x_1-) < 0$ for all $x \in [0, x_1)$, leading to $V_i(0) > V_i(x_1-) = V_i(x_1) = A_{i2} e^{\theta_{i3} x_1} + B_{i2} e^{\theta_{i4} x_1} > 0$, which contradicts with the boundary condition $V_i(0) = 0$. Therefore, $V_i^\prime(x_1)$ is positive and $V^{\prime\prime}_i(x) = A_{i2} \theta_{i3}^2 e^{\theta_{i3} x} + B_{i2} \theta_{i4}^2 e^{\theta_{i4} x} > 0$, which confirms that $V_i^\prime(x) > V_i^\prime(x_1) > 0$, $\forall x \in [x_1, x_2)$.

Next, we show $V^{\prime\prime}(x) = \omega V^{\prime\prime}_1(x) + (1-\omega) V_2^{\prime\prime}(x) < 0$, $\forall x \in [x_1, x_2)$. Since $V^\prime(x_2-) = V^\prime(x_2) = 1$ and $V^{\prime\prime}(x_2) < 0$, according to ODEs \eqref{eq:Vi_ODE},
\begin{align*}
V^{\prime\prime}(x_2-) &= \omega V^{\prime\prime}_1(x_2-) + (1-\omega) V_2^{\prime\prime}(x_2-)\\
&= \frac{2}{\sigma^2} \left[-\mu V^\prime(x_2-) + \left(\omega \rho_1 V_1(x_2-) + (1-\omega) \rho_2 V_2(x_2-)\right)\right]\\
&= \frac{2}{\sigma^2} \left[-(\mu-\bar l) V^\prime(x_2) - \bar l + \left(\omega \rho_1 V_1(x_2) + (1-\omega) \rho_2 V_2(x_2)\right)\right]\\
&= \omega V^{\prime\prime}_1(x_2) + (1-\omega) V_2^{\prime\prime}(x_2) = V^{\prime\prime}(x_2) < 0.
\end{align*}
Then by Lemma \ref{lemma:concave} $(i)$ with $\alpha$ being $0$, $V^{\prime\prime}(x)$ should always be negative on interval $[x_1, x_2)$.

Finally, for $x \in [0, x_1)$, it is already known that $V_i(0) = 0$, $V^\prime_i(x_1-) = V^\prime_i(x_1) > 0$, and $V^\prime(x_1-) = V^\prime(x_1) = \phi$. Using the same argument as above,
\begin{align*}
V^{\prime\prime}(x_1-) &= \frac{2}{\sigma^2} \left[-(\mu+\bar r) V^\prime(x_1-) + \phi \bar r + \left(\omega \rho_1 V_1(x_1-) + (1-\omega) \rho_2 V_2(x_1-)\right)\right]\\
&= \frac{2}{\sigma^2} \left[-\mu V^\prime(x_1) + \left(\omega \rho_1 V_1(x_1) + (1-\omega) \rho_2 V_2(x_1)\right)\right]\\
&= \omega V^{\prime\prime}_1(x_1) + (1-\omega) V_2^{\prime\prime}(x_1) = V^{\prime\prime}(x_1) < 0.
\end{align*}
By Lemma \ref{lemma:concave} $(ii)$ with $\alpha = \bar r$, $V^{\prime\prime}(x) < 0$, which shows $V^{\prime}(x) > V^{\prime}(x_1-) > 0$ for all $x \in [0, x_1)$.
\qed

\pfof{Theorem \ref{thm:V_classification}}
It remains to prove {\it (ii)} and {\it (iii)}. For Case {\it (ii)}, since there doesn't exist such $0 < x_1 \le x_2 < \infty$ that satisfy $V^\prime(x_1) = \phi \ge 1$ and $V^\prime(x_2) = 1$, we assume that there exists $x_2 > 0$ such that $V^\prime(x_2) = 1$. Then the equilibrium strategy becomes
\begin{equation*}
\hat u_s = (\hat l(x), \hat r(x)) = \left\{\begin{aligned}
& (0, 0), \qquad & 0 \leq x < x_2,\\
& (\bar l, 0), & x \geq x_2.
\end{aligned}\right.
\end{equation*}
Following a similar argument as in Case {\it (i)} with boundary condition $V_i(0) = 0$, we obtain that
$$
V_i(x) = \begin{cases}
A_{i2} \left(e^{\theta_{i3} x} - e^{\theta_{i4} x}\right), & x \in [0, x_2),\\
\frac{\bar l}{\rho_i} + B_{i3} e^{\theta_{i5} x}, & x \in [x_2, \infty),
\end{cases}
$$
where, by letting $x_1$ = 0 in \eqref{eq:CDEs},
$$
A_{i2} = -\frac{\theta_{i5}}{(\theta_{i5} - \theta_{i4}) e^{\theta_{i4} x_2}  + (\theta_{i3} - \theta_{i5}) e^{\theta_{i3} x_2}} \cdot \frac{\bar l}{\rho_i} > 0,
$$
and
$$
B_{i3} = - \frac{\theta_{i3} e^{\theta_{i3} x_2} - \theta_{i4} e^{\theta_{i4} x_2}}{(\theta_{i5} - \theta_{i4}) e^{(\theta_{i4} + \theta_{i5}) x_2}  + (\theta_{i3} - \theta_{i5}) e^{(\theta_{i3} + \theta_{i5}) x_2}} \cdot \frac{\bar l}{\rho_i} < 0.
$$
The concavity of $V = \omega V_1 + (1-\omega) V_2$ can be obtained using a similar argument as in the proof of Theorem \ref{thm:V_concave} with $A_{i2} > 0$ and $B_{i3} < 0$. Moreover, the existence of $x_2 > 0$ is guaranteed by the condition that there exists $x_2 > 0$ that solves \eqref{eq:x2} given $x_1 = 0$.

For Case {\it (iii)}, since we cannot find a positive $x_1$ such that $V^\prime(x_1) = \phi$ nor a positive $x_2$ such that $V^\prime(x_2) = 1$, which means $V^\prime(x) \le 1$ for $x \in [0, \infty)$. Thus the equilibrium strategy becomes
$$
\hat u_s = (\hat l(x), \hat r(x)) = (\bar l, 0),  \qquad x \geq 0,
$$
and the boundary condition leads to
$$
V_i(x) =  \frac{\bar l}{\rho_i} \left(1 - e^{\theta_{i5} x}\right), \qquad x \in [0, \infty),
$$
with $\theta_{i5} < 0$, which also shows the concavity of $V$.
\qed

\end{appendices}

\end{document}